\newif{\ifSHORT}
\newif{\iflong}
\newif{\ifLongVersion}
\newif{\ifWithRecords}
\newif{\ifWithProofs}
\newtheorem{theorem}{Theorem}[section]
\newtheorem{lemma}[theorem]{Lemma}
\newtheorem{proposition}[theorem]{Proposition}
\newtheorem{notation}[theorem]{Notation}
\newtheorem{corollary}[theorem]{Corollary}
\newtheorem{definition}[theorem]{Definition}
\newcommand{\Koba}[1]{\ensuremath{\mathcal{K}_{#1}}\xspace}
\newcommand{\fullKoba}{\ensuremath{\mathcal{K}\xspace}}
\newcommand{\lkoba}{\Koba{n}}
\newcommand{\lpkoba}{\Koba{1}}
\newcommand{\lcp}{\ensuremath{\mathcal{L}}\xspace}
\newcommand{\m}[1]{\mathsf{#1}}
\newcommand{\wt}[1]{\widetilde{#1}}
\newcommand{\tl}[1]{\tilde{#1}}
\newcommand{\defeq}{\triangleq}
\newcommand{\emp}{\emptyset}
\newcommand{\inl}{\mathtt{inl}}
\newcommand{\inr}{\mathtt{inr}}
\newcommand{\inx}{\mathtt{inx}}
\def\midd{\; \; \mbox{\Large{$\mid$}}\;\;}
\newcommand{\p}{$\pi$-\! }
\newcommand{\nil}{{\mathbf{0}}}
\newcommand{\selection}[2]{#1\triangleleft {#2}}
\newcommand{\branching}[3]{#1\triangleright\{{#2}_i:#3_i\}_{i\in I}}
\newcommand{\parbranching}[3]{#1\triangleright\{{#2}:#3\}_{i\in I}}
\newcommand{\out}[1]{\langle #1\rangle}
\newcommand{\bout}[2]{\overline{#1}(#2)}
\newcommand{\inp}[1]{(#1)}
\newcommand{\res}[1]{({\boldsymbol \nu} #1)}
\newcommand{\pp}{\ \boldsymbol{|}\ }
\newcommand{\picase}[3]
{\mathbf{case} \, {#1} \, \mathbf{of}\, \{ \mathnormal{l}_i \_ {#2} \triangleright {#3} \}_{i\in I}}
\newcommand{\vv}[2]{\mathnormal{l}_{#1}\_ #2}
\newcommand{\uptok}{\doteqdot}
\newcommand{\variant}[2]{\langle {#1} : #2 \rangle_{i\in I}}
\newcommand{\select}[2]{\oplus\{{#1}_i:#2_i\}_{i\in I}}
\newcommand{\branch}[2]{\&\{{#1}_i:#2_i\}_{i\in I}}
\newcommand{\nilT}{\ensuremath{{\bf end}}}
\newcommand{\unit}{{\mathbf{n}}}
\newcommand{\ct}{\mathsf{c}}
\newcommand{\su}{\mathsf{su}}
\newcommand{\inpuse}[2]{\wn^{#1}_{#2}}
\newcommand{\outuse}[2]{\oc^{#1}_{#2}}
\newcommand{\ca}{\kappa}
\newcommand{\ob}{\omicron}
\newcommand{\obs}[2]{\mathsf{ob}_{#1}(#2)}
\newcommand{\caps}[2]{\mathsf{cap}_{#1}(#2)}
\newcommand{\conpar}[2]{\mathsf{con}_{#1}(#2)}
\newcommand{\rel}[1]{\mathsf{rel}( #1 )}
\newcommand{\con}[1]{\mathsf{con}(#1)}
\newcommand{\Didtext}[1]{{\footnotesize{\textsc{#1}}}}
\newcommand{\Did}[1]{({\Didtext{#1}})}
\newcommand{\lf}{\vdash^{n}_{\mathtt{KB}}}
\newcommand{\plf}{\vdash^{1}_{\mathtt{KB}}}
\newcommand{\s}{\vdash_{\mathtt{{ST}}}}
\newcommand{\cp}{\vdash_{\mathtt{CH}}}
\newcommand{\fn}[1]{\mathsf{fn}(#1)}   % free names
\newcommand{\catal}[2]{\mathcal{C}_{{#1}}\big[{#2}\big]}
\newcommand{\semi}{\ {\bf ; }\ }
\newcommand{\mto}{\to^*}
\newcommand{\dom}{\m{dom}}
\newcommand{\chrp}[2]{\ensuremath{\{\!\!| #1 |\!\!\}^{#2}}}
\newcommand{\chr}[1]{\ensuremath{\{\!\!\!\{ #1 \}\!\!\!\}}}
\newcommand{\enco}[2]{\llbracket #1 \rrbracket_{#2}}
\newcommand{\encf}[1]{\llbracket #1 \rrbracket_{\mathnormal{f}}}
\newcommand{\encCP}[1]{\llparenthesis #1 \rrparenthesis}
\newcommand{\encoCP}[2]{\llparenthesis #1 \rrparenthesis^{#2}}
\newcommand{\f}[1]{{\mathnormal f}_{#1}}
\newcommand{\enc}[2]{\llbracket #1 \rrbracket_{\mathnormal{f}, \{x,y \mapsto {#2}\}}}
\newcommand{\encx}[2]{\llbracket #1 \rrbracket_{\mathnormal{f},\{x\mapsto {#2}\}}}
\newcommand{\fakepar}[1]{\ensuremath{\parallel_#1}}
\newcommand{\lolli}{\mathord{\multimap}}
\newcommand{\parl}{\mathbin{\bindnasrepma}}
\newcommand{\one}{{\bf 1}}
\newcommand{\CLL}{\texttt{CLL}\xspace}
\newcommand{\dual}[1]{\overline{#1}}
\newcommand{\nub}{{\boldsymbol{\nu}}}
\newcommand{\linkr}[2]{[#1\!\leftrightarrow\!#2]}
\newcommand{\cut}{\mathsf{cut}}
\newcommand{\mix}{\mathsf{mix}}
\newcommand{\tid}{\mathsf{id}}
\newcommand{\ov}[1]{\overline{#1}}
\newcommand{\para}{\mathord{\;\boldsymbol{|}\;}}
\def\substj#1#2{[\raisebox{.5ex}{\small$#1$}\! / \mbox{\small$#2$}]}
\newcommand{\D}{\Delta}
\newcommand{\tra}[1]{\xrightarrow{#1}}
\newcommand{\redd}{\tra{~~~}}
\newdimen\proofrulebreadth \proofrulebreadth=.05em
\newdimen\proofdotseparation \proofdotseparation=1.25ex
\newdimen\proofrulebaseline \proofrulebaseline=2ex
\let\then\relax
\def\hfi{\hskip0pt plus.0001fil}
\mathchardef\squigto="3A3B
\newif\ifinsideprooftree\insideprooftreefalse
\newif\ifonleftofproofrule\onleftofproofrulefalse
\newif\ifproofdots\proofdotsfalse
\newif\ifdoubleproof\doubleprooffalse
\let\wereinproofbit\relax
\newdimen\shortenproofleft
\newdimen\shortenproofright
\newdimen\proofbelowshift
\newbox\proofabove
\newbox\proofbelow
\newbox\proofrulename
\def\shiftproofbelow{\let\next\relax\afterassignment\setshiftproofbelow\dimen0 }
\def\shiftproofbelowneg{\def\next{\multiply\dimen0 by-1 }%
\afterassignment\setshiftproofbelow\dimen0 }
\def\setshiftproofbelow{\next\proofbelowshift=\dimen0 }
\def\setproofrulebreadth{\proofrulebreadth}
\def\prooftree{% NESTED ZERO (\ifonleftofproofrule)
%
% first find out whether we're at the left-hand end of a proof rule
\ifnum  \lastpenalty=1
\then   \unpenalty
\else   \onleftofproofrulefalse
\fi
%
% some space on left (except if we're on left, and no infinity for outermost)
\ifonleftofproofrule
\else   \ifinsideprooftree
        \then   \hskip.5em plus1fil
        \fi
\fi
%
% begin our proof tree environment
\bgroup% NESTED ONE (\proofbelow, \proofrulename, \proofabove,
%               \shortenproofleft, \shortenproofright, \proofrulebreadth)
\setbox\proofbelow=\hbox{}\setbox\proofrulename=\hbox{}%
\let\justifies\proofover\let\leadsto\proofoverdots\let\Justifies\proofoverdbl
\let\using\proofusing\let\[\prooftree
\ifinsideprooftree\let\]\endprooftree\fi
\proofdotsfalse\doubleprooffalse
\let\thickness\setproofrulebreadth
\let\shiftright\shiftproofbelow \let\shift\shiftproofbelow
\let\shiftleft\shiftproofbelowneg
\let\ifwasinsideprooftree\ifinsideprooftree
\insideprooftreetrue
%
% now begin to set the top of the rule (definitions local to it)
\setbox\proofabove=\hbox\bgroup$\displaystyle % NESTED TWO
\let\wereinproofbit\prooftree
%
% these local variables will be copied out:
\shortenproofleft=0pt \shortenproofright=0pt \proofbelowshift=0pt
%
% flags to enable inner proof tree to detect if on left:
\onleftofproofruletrue\penalty1
}
\def\eproofbit{% NESTED TWO
%
% various hacks applicable to hypothesis list 
\ifx    \wereinproofbit\prooftree
\then   \ifcase \lastpenalty
        \then   \shortenproofright=0pt  % 0: some other object, no indentation
        \or     \unpenalty\hfil         % 1: empty hypotheses, just glue
        \or     \unpenalty\unskip       % 2: just had a tree, remove glue
        \else   \shortenproofright=0pt  % eh?
        \fi
\fi
%
% pass out crucial values from scope
\global\dimen0=\shortenproofleft
\global\dimen1=\shortenproofright
\global\dimen2=\proofrulebreadth
\global\dimen3=\proofbelowshift
\global\dimen4=\proofdotseparation
\global\count255=\proofdotnumber
%
% end the box
$\egroup  % NESTED ONE
%
% restore the values
\shortenproofleft=\dimen0
\shortenproofright=\dimen1
\proofrulebreadth=\dimen2
\proofbelowshift=\dimen3
\proofdotseparation=\dimen4
\proofdotnumber=\count255
}
\def\proofover{% NESTED TWO
\eproofbit % NESTED ONE
\setbox\proofbelow=\hbox\bgroup % NESTED TWO
\let\wereinproofbit\proofover
$\displaystyle
}%
\def\proofoverdbl{% NESTED TWO
\eproofbit % NESTED ONE
\doubleprooftrue
\setbox\proofbelow=\hbox\bgroup % NESTED TWO
\let\wereinproofbit\proofoverdbl
$\displaystyle
}%
\def\proofoverdots{% NESTED TWO
\eproofbit % NESTED ONE
\proofdotstrue
\setbox\proofbelow=\hbox\bgroup % NESTED TWO
\let\wereinproofbit\proofoverdots
$\displaystyle
}%
\def\proofusing{% NESTED TWO
\eproofbit % NESTED ONE
\setbox\proofrulename=\hbox\bgroup % NESTED TWO
\let\wereinproofbit\proofusing
\kern0.3em$
}
\def\endprooftree{% NESTED TWO
\eproofbit % NESTED ONE
% \dimen0 =     length of proof rule
% \dimen1 =     indentation of conclusion wrt rule
% \dimen2 =     new \shortenproofleft, ie indentation of conclusion
% \dimen3 =     new \shortenproofright, ie
%                space on right of conclusion to end of tree
% \dimen4 =     space on right of conclusion below rule
  \dimen5 =0pt% spread of hypotheses
% \dimen6, \dimen7 = height & depth of rule
%
% length of rule needed by proof above
\dimen0=\wd\proofabove \advance\dimen0-\shortenproofleft
\advance\dimen0-\shortenproofright
%
% amount of spare space below
\dimen1=.5\dimen0 \advance\dimen1-.5\wd\proofbelow
\dimen4=\dimen1
\advance\dimen1\proofbelowshift \advance\dimen4-\proofbelowshift
%
% conclusion sticks out to left of immediate hypotheses
\ifdim  \dimen1<0pt
\then   \advance\shortenproofleft\dimen1
        \advance\dimen0-\dimen1
        \dimen1=0pt
%       now it sticks out to left of tree!
        \ifdim  \shortenproofleft<0pt
        \then   \setbox\proofabove=\hbox{%
                        \kern-\shortenproofleft\unhbox\proofabove}%
                \shortenproofleft=0pt
        \fi
\fi
%
% and to the right
\ifdim  \dimen4<0pt
\then   \advance\shortenproofright\dimen4
        \advance\dimen0-\dimen4
        \dimen4=0pt
\fi
%
% make sure enough space for label
\ifdim  \shortenproofright<\wd\proofrulename
\then   \shortenproofright=\wd\proofrulename
\fi
%
% calculate new indentations
\dimen2=\shortenproofleft \advance\dimen2 by\dimen1
\dimen3=\shortenproofright\advance\dimen3 by\dimen4
%
% make the rule or dots, with name attached
\ifproofdots
\then
        \dimen6=\shortenproofleft \advance\dimen6 .5\dimen0
        \setbox1=\vbox to\proofdotseparation{\vss\hbox{$\cdot$}\vss}%
        \setbox0=\hbox{%
                \advance\dimen6-.5\wd1
                \kern\dimen6
                $\vcenter to\proofdotnumber\proofdotseparation
                        {\leaders\box1\vfill}$%
                \unhbox\proofrulename}%
\else   \dimen6=\fontdimen22\the\textfont2 % height of maths axis
        \dimen7=\dimen6
        \advance\dimen6by.5\proofrulebreadth
        \advance\dimen7by-.5\proofrulebreadth
        \setbox0=\hbox{%
                \kern\shortenproofleft
                \ifdoubleproof
                \then   \hbox to\dimen0{%
                        $\mathsurround0pt\mathord=\mkern-6mu%
                        \cleaders\hbox{$\mkern-2mu=\mkern-2mu$}\hfill
                        \mkern-6mu\mathord=$}%
                \else   \vrule height\dimen6 depth-\dimen7 width\dimen0
                \fi
                \unhbox\proofrulename}%
        \ht0=\dimen6 \dp0=-\dimen7
\fi
%
% set up to centre outermost tree only
\let\doll\relax
\ifwasinsideprooftree
\then   \let\VBOX\vbox
\else   \ifmmode\else$\let\doll=$\fi
        \let\VBOX\vcenter
\fi
% this \vbox or \vcenter is the actual output:
\VBOX   {\baselineskip\proofrulebaseline \lineskip.2ex
        \expandafter\lineskiplimit\ifproofdots0ex\else-0.6ex\fi
        \hbox   spread\dimen5   {\hfi\unhbox\proofabove\hfi}%
        \hbox{\box0}%
        \hbox   {\kern\dimen2 \box\proofbelow}}\doll%
%
% pass new indentations out of scope
\global\dimen2=\dimen2
\global\dimen3=\dimen3
\egroup % NESTED ZERO
\ifonleftofproofrule
\then   \shortenproofleft=\dimen2
\fi
\shortenproofright=\dimen3
%
% some space on right and flag we've just made a tree
\onleftofproofrulefalse
\ifinsideprooftree
\then   \hskip.5em plus 1fil \penalty2
\fi
}
\title{Comparing Deadlock-Free Session Typed Processes}
\author{Ornela Dardha
%\footnote{The author is supported by
%the UK EPSRC project
%\emph{From Data Types to Session Types: A Basis  for Concurrency and Distribution (ABCD)} (EP/K034413/1).}
\institute{University of Glasgow, United Kingdom}
%\email{Ornela.Dardha@glasgow.ac.uk}
\and
Jorge A. P\'{e}rez
\institute{University of Groningen, The Netherlands}
%\email{j.a.perez@rug.nl}
}
\begin{document}
\maketitle

\begin{abstract}
%We explore how types delineate relative expressiveness.
Besides %In addition to 
respecting prescribed % structured 
protocols, communication-centric systems should never ``get stuck''. 
This %important 
requirement has been expressed by liveness properties such as progress or (dead)lock freedom. Several typing disciplines that ensure these properties for mobile processes
have been proposed. Unfortunately, very little is known about 
the precise relationship between these disciplines--and the classes of typed processes they induce. 

In this paper, we compare \lcp and \fullKoba,
two classes of deadlock-free, session typed concurrent processes.
The class \lcp 
stands out for its canonicity: it 
results naturally from interpretations of linear logic propositions as session types.
The class \fullKoba, obtained by encoding session types into Kobayashi's usage types, 
includes processes not typable in other type systems. 

We show that \lcp is strictly included in \fullKoba.
%the second class  is larger than that induced by linear logic foundations. 
We also identify the precise condition under which \lcp and \fullKoba coincide. One key observation is that the \emph{degree of sharing} between parallel processes 
determines a new expressiveness hierarchy for  typed processes.
We also provide a type-preserving rewriting procedure of processes in \fullKoba into processes in \lcp. 
This procedure suggests that, while effective, the degree of sharing is a rather subtle  criteria for distinguishing typed processes.
\end{abstract}
%

%%%%%%%%%%%%%%%%%%%%%%%%%%%%%
% INTRO
%%%%%%%%%%%%%%%%%%%%%%%%%%%%%

\section{Introduction}
%In this paper, we 
The goal of this work is to 
formally relate
% establish precise relationships between 
different type systems for the $\pi$-calculus.
Our interest is in \emph{session-based concurrency}, a type-based approach to communication correctness: dialogues between participants are structured into \emph{sessions},  basic   communication units; descriptions of interaction sequences are then abstracted as \emph{session types}~\cite{HVK98} which are checked against process specifications. 
We offer the first 
formal comparison between  different {type systems} that enforce \emph{(dead)lock freedom}, 
the liveness property %for concurrent processes 
that ensures session communications never ``get stuck''.
Our approach relates the classes of typed processes that such systems induce. %, 
To this end, %as we explain next, 
we identify a property on the structure of typed parallel processes, the \emph{degree of sharing}, which 
is key in distinguishing two salient classes of deadlock-free session processes,
and in shedding light on their formal underpinnings.

%We elaborate on our approach and contributions.
In session-based concurrency,  
types enforce correct communications through different safety and liveness properties. 
Basic correctness properties are \emph{communication safety} and \emph{session fidelity}: 
while the former ensures absence of errors (e.g., communication mismatches), the latter 
ensures that well-typed processes respect the  protocols  prescribed  by session types.
Moreover,
a central (liveness) property for safe processes is that they should never ``get stuck''.
This is the well-known \emph{progress} property, which 
asserts that a well-typed term either is a final value or can further reduce~\cite{Pierce02}.
%intuitively says that a safe program does not reach a state that is not designated as a final value and that the semantics of the language does not know how to evaluate further~\cite{Pierce02}.
In calculi for concurrency, this 
%important requirement is formally expressed by liveness properties
 property has been formalized as
%(dead)lock freedom.
%It is a natural requirement with significant practical relevance.
%The most basic liveness property in concurrency is 
\emph{deadlock freedom}
(``a process is deadlock-free if it can always reduce until it eventually terminates, unless the whole process diverges''~\cite{K06}) or as
\emph{lock freedom} (``a process is lock free if it can always reduce until it eventually terminates, even if the whole process diverges''~\cite{K02}).
Notice that in the absence of divergent behaviors, deadlock and lock freedom coincide.

(Dead)lock freedom
guarantees that all communications will eventually succeed, 
an appealing requirement for communicating processes.
Several advanced type disciplines that ensure deadlock-free  processes have been proposed~(see, e.g.,\cite{K02,K06,DLY07,CD10,CairesP10,CairesPT12,P13,DBLP:conf/coordination/VieiraV13}).
Unfortunately, these disciplines 
consider different process languages and/or 
are based on rather different principles.
As a result, very little is known about how they relate to each other.
This begs several research questions: What is the formal relationship between these type disciplines?
What  classes of deadlock-free processes do they induce?
%We focus on \emph{session type} systems for \p calculus processes that enforce {(dead)lock freedom}.
%We compare different type systems by looking at 
%the classes of typed processes they induce.
%In particular, 
%we identify a property on the structure of processes, the \emph{degree of sharing}, which 
%we show to be the source of several differences between two classes of session typed, deadlock-free processes.
%
%\odedit{Describe next why the problem is interesting and significant.}
%TODO.
%
%\odedit{Describe next what is our approach.}

In this paper, we tackle these open questions by 
comparing
\lcp and \fullKoba,
two salient classes of deadlock-free, session typed processes (Definition~\ref{d:lang}):
\begin{enumerate}[$\bullet$]
\item $\lcp$ contains all session  processes that are well-typed 
according to the Curry-Howard correspondence of linear logic propositions as session types~\cite{CairesP10,CairesPT12,DBLP:conf/icfp/Wadler12}.
This suffices, because 
the type system derived from such a correspondence   ensures 
communication safety, 
session fidelity, and 
deadlock freedom.

\item \fullKoba contains all session  processes that 
enjoy communication safety and session fidelity (as ensured by the type system of Vasconcelos~\cite{V12})
and 
are (dead)lock-free 
by combining Kobayashi's type system based on \emph{usages}~\cite{K02,K06} with Dardha et al.'s encodability result~\cite{DGS12}.
\end{enumerate}

\noindent 
There are good reasons for considering \lcp and \fullKoba.
On the one hand, due to its deep logical foundations,  \lcp appears to us as the \emph{canonic} class of deadlock-free session processes, upon which all other classes should be compared. Indeed, this class arguably offers the most principled yardstick for comparisons.
On the other hand,   \fullKoba integrates session type checking with the sophisticated usage discipline developed by Kobayashi
for  $\pi$-calculus processes.
This indirect approach to deadlock freedom (first suggested in~\cite{K07},
later developed in~\cite{DGS12,DBLP:journals/corr/Dardha14,CDM14})
is fairly general, as it may capture sessions with subtyping, polymorphism, and higher-order communication.
Also, as informally shown in~\cite{CDM14}, \fullKoba strictly includes classes of typed processes induced by other type systems for deadlock freedom in sessions~\cite{DLY07,CD10,P13}.
%\odedit{
%Since \lcp is the canonical one, it makes sense to use it as a comparison tool.
%However, since \fullKoba seems to be  the most expressive system by following the corner cases in ~\cite{CDM14}, the relative expressiveness between \lcp and \fullKoba does not say anything about the relationship between \lcp and all the other systems that Koba subsumes.
%We should rephrase the paragraph on \fullKoba by putting the emphasis for example in the following: it's relevance comes from the fact that it is a system used for both standard \p calculus and via the PPDP encoding, for session \p calculus, hence it is a more ``complete'' system from a foundational point of view... And stuff like that!
%}

%\odedit{Describe next what are the contributions.}
One key observation in our development is that \fullKoba  corresponds to a \emph{family} of classes of deadlock-free processes, 
denoted $\Koba{0}, \Koba{1}, \cdots, \Koba{n}$,  
which is defined by the \emph{degree of sharing} between their parallel components. 
Intuitively, \Koba{0} is the subclass of \fullKoba 
 with \emph{independent parallel composition}:
for all processes $P \pp Q \in \Koba{0}$, subprocesses $P$ and $Q$ do not share any sessions. Then,  \Koba{1} is the subclass of \fullKoba which contains \Koba{0} but admits also processes with parallel components that share at most one session. Then, \Koba{n} contains deadlock-free session processes whose parallel components share at most $n$ sessions. \\

\noindent\textbf{Contributions.}
In this paper, we present three main contributions:
\begin{enumerate}[1.]
\item %In our \textbf{\emph{first result}}, 
We show that the inclusion between the constituent classes of \fullKoba is \emph{strict} (Theorem \ref{t:strict}). We have:  
\begin{equation}
\Koba{0} \subset \Koba{1}  \subset \Koba{2} \subset \cdots \subset \Koba{n} \subset \Koba{n+1} \label{eq:hier}
\end{equation}
Although not extremely surprising, the significance of this result lies in the fact that it talks
about concurrency (via the degree of sharing) but implicitly also about the potential sequentiality
of parallel processes. As such, processes in \Koba{k} are necessarily ``more parallel'' than those in \Koba{k+1}.
Interestingly, the degree of sharing in $\Koba{0},\ldots, \Koba{n}$ can be defined in a very simple way,
via a natural condition in the rule for parallel composition in Kobayashi's type system for deadlock freedom.

\item %Our \textbf{\emph{second result}} is 
We show that  \lcp and \Koba{1} coincide (Theorem \ref{t:cppkoba}). 
That is, there are deadlock-free session processes that cannot be typed by  systems derived from the 
Curry-Howard interpretation of session types~\cite{CairesP10,CairesPT12,DBLP:conf/icfp/Wadler12},
but that can be admitted by the (indirect) approach of~\cite{DGS12}. 
This result is significant: it establishes the precise status of 
systems based on~\cite{CairesPT12,DBLP:conf/icfp/Wadler12} with respect to previous (non Curry-Howard) disciplines. 
Indeed, it formally confirms   that linear logic interpretations of session types naturally induce the most basic form of concurrent cooperation (sharing of exactly one session), embodied as 
the principle of ``composition plus hiding'', a distinguishing feature of such interpretations.

\item %Our \textbf{\emph{third result}} is 
We define
a rewriting procedure of processes in \fullKoba into \lcp (Defintion \ref{def:typed_enc}).
Intuitively, due to our previous observation and characterization of the degree of sharing in session typed processes,
it is quite natural to convert a process in \fullKoba into another, more parallel process in \lcp. 
In essence, the procedure replaces sequential prefixes with representative parallel components.
The rewriting procedure satisfies type-preservation, and enjoys the compositionality and operational correspondence criteria as stated in~\cite{Gorla10} (cf. Theorems \ref{thm:L0-L2} and \ref{thm:oc}).
These properties not only witness the significance of the rewriting procedure; they also confirm that the degree of sharing is 
a rather subtle criteria for formally distinguishing deadlock-free, session typed processes.
\end{enumerate}

\noindent
To the best of our knowledge, these contributions define the first formal comparison between
fundamentally distinct type systems for deadlock freedom in session communications.
Previous comparisons, such as the ones in~\cite{CDM14} and~\cite[\S6]{CairesPT12}, are informal: they are based on 
representative ``corner cases'', i.e., examples of deadlock-free session processes typable in
one system but not in some other.

%\paragraph{Organization.}

The paper is structured as follows.
  \S\,\ref{s:sessions} summarizes the 
    session $\pi$-calculus and associated type system of~\cite{V12}.
In \S\,\ref{s:deadlf} we present the two typed approaches to deadlock freedom for sessions.
 \S\,\ref{s:hier}  defines the classes  \lcp and \fullKoba, 
formalizes the hierarchy~\eqref{eq:hier}, 
and shows that \lcp and \lpkoba coincide.
In \S\,\ref{s:enco} we give the rewriting procedure of \lkoba into \lcp and establish its properties.
\S\,\ref{s:concl} collects some concluding remarks.
Due to space restrictions, details of %definitions and 
proofs are omitted; they 
%are included in Appendix~\ref{sec:appendix}.
can be found online~\cite{DardhaPerez15}.

%%%%%%%%%%%%%%%%%%%%%%%%%%%%%
% BACKGROUND 
%%%%%%%%%%%%%%%%%%%%%%%%%%%%%

\section{Session \p calculus}\label{s:sessions}
Following Vasconcelos~\cite{V12},
we introduce the session $\pi$-calculus and its associated type system which ensures communication safety and session fidelity.
The syntax is given in  Figure~\ref{fig:sessionpi} (upper part).
Let $P, Q$ range over processes $x, y$ over channels and $v$ over values; 
for simplicity, the set of values coincides with that of channels.
In examples, we often use $\unit$ to denote a terminated channel that cannot be further used.

Process $\ov x\out v.P$ denotes the output of  $v$ along  $x$, with continuation $P$.
Dually, process $x\inp y.P$ denotes an input along  $x$ with continuation $P$, with 
$y$ denoting a placeholder.
%Assuming dual endpoints $x$ and $y$ (see below),
Process $\selection x {l_j}.P$ uses   $x$ to select $l_j$ from a labelled choice process, being
$\branching xlP$, so as to trigger $P_j$; labels indexed by the finite set $I$ are pairwise distinct.
We also have the 
	inactive process (denoted $\nil$),
	the parallel composition of $P$ and $Q$ (denoted $P \pp Q$), 
	and the (double) restriction operator, noted $\res{xy}P$:
	the intention is that $x$ and $y$ denote \emph{dual session endpoints} in $P$.
	We omit $\nil$ whenever possible and write, e.g., $\ov x\out \unit$
	instead of $\ov x\out \unit.\nil$.
	Notions of bound/free variables in processes are standard; we write $\fn{P}$
	to denote the set of free names of $P$.
	Also, we write $P\substj{v}{z}$ to denote the 
	(capture-avoiding) substitution of free occurrences of $z$ in $P$ with $v$.

%To enable uniformity in comparisons, we also include a forwarding construct, noted $\linkr{x}{y}$, where $x, y$ are names.
%The syntax includes the usual (single) restriction, which is used in the linear logic interpretation of session types.
The operational semantics is given in terms of a reduction relation, noted $P\to Q$, and
%This relation is 
defined by the rules in 
 Figure~\ref{fig:sessionpi} (lower part).
It relies on a standard notion of 
 structural congruence, noted $\equiv$~(see~\cite{V12}).
We write $\mto$ to denote the reflexive, transitive closure of $\to$.
 Observe that interaction 
involves prefixes with different channels (endpoints), and 
always occurs in the context of an outermost (double) restriction. %, which declares dual endpoints.
Key rules are~\Did{R-Com} and \Did{R-Case}, denoting
the interaction of output/input prefixes and  selection/branching constructs, respectively.
Rules~\Did{R-Par}, \Did{R-Res}, and~\Did{R-Str}
are standard.
%indicate the closure of the reduction relation under restriction, parallel composition and structural congruence, respectively.

The syntax of {session types}, ranged over $T, S, \ldots$, is given by the following grammar.
%\footnote{
%Notice that our session types correspond to the linear pretypes in~\cite{V12}.}
$$
T,S ::= \nilT \midd \wn T.S \midd \oc T.S \midd \branch lS \midd \select lS
$$
%Let $T, S$ range over session types.
Above, $\nilT$ is the type of an endpoint with a terminated protocol.
The type $\wn T.S$ is assigned to an endpoint that first receives a value of type $T$ and then continues according to the protocol described by~$S$.
Dually,  
type $\oc T.S$ is assigned to an endpoint that first outputs a value of type $T$ and then continues according to the protocol described by $S$.
Type
$\select lS$, an \emph{internal choice},  generalizes   output types; type
$\branch lS$, an \emph{external choice},  generalizes input types.
Notice that session types 
describe \emph{sequences} of structured behaviors; they 
do not admit parallel composition operators.

A central notion in session-based concurrency is  \emph{duality}, 
which relates session types offering opposite (i.e., complementary) behaviors.
Duality stands at the basis of communication safety and session fidelity.
Given a session type $T$, its dual type $\dual{T}$ is defined as follows:
%As we consider finite session types, the following \emph{inductive} definition of  duality, suffices for our purposes: %is defined by the following rules.
  \begin{displaymath}
\begin{array}{rclcrclcrcl}
%	\dual{\nilT}				&\defeq &  \nilT &\qquad&   				& &   \\
	\dual{\oc T.S}				&\defeq &  \wn T.\dual{S} &\qquad& \dual{\wn T.S}				&\defeq&  \oc T.\dual{S} & \qquad& & & \\
	\dual{\select lS}			&\defeq &  \branch l{\dual S}  & \qquad & \dual{\branch lS}			&\defeq  & \select l{\dual S} & \qquad&\dual{\nilT}  &\defeq  & \nilT
  \end{array}
\end{displaymath}
%\begin{displaymath}
%  \begin{array}{rl}
%	\dual{\nilT}				&\defeq\ \nilT\\
%	\dual{\oc T.S}				&\defeq\ \wn T.\dual{S}\\
%	\dual{\wn T.S}				&\defeq\ \oc T.\dual{S}\\
%	\dual{\select lS}			&\defeq \ \branch l{\dual S} \\
%	\dual{\branch lS}			&\defeq \ \select l{\dual S} 
%  \end{array}
%\end{displaymath}
%\noindent
%This way, the dual of input is output (and viceversa); the dual of branching is selection (and viceversa).
\begin{figure}[t]
  \begin{displaymath}
   \begin{array}[t]{rllllll}
	P,Q ::= 	& \ov x\out v.P					& \mbox{(output)} &\quad
	          	& \nil	     						& \mbox{(inaction)} &\\ 
           		& x\inp y.P					& \mbox{(input)}     &\ 
           		& P \pp Q  	    		  		& \mbox{(composition)} &  \\
           		& \selection x {l_j}.P 				& \mbox{(selection)}  & \
			& \res {xy}P					& \mbox{(session restriction)}&\\
         		& \branching xlP				& \mbox{(branching)} &&& \\
% 			& \res x P 					& \mbox{(channel restriction)}\\
% 			& \linkr{x}{y}					& \mbox{(forwarding)}&&&\\\\

	v ::=		& x 							& \mbox{(channel)}&
			%& \unit						& \mbox{(unit value)}
  \end{array}
  \end{displaymath}
%\vspace{-1mm}
  \begin{displaymath}
    \begin{array}{rllcrll}  
	\Did{R-Com} & 
		\res {xy}(\ov x\out v.P\pp y\inp z.Q) 
		\to 
		\res {xy}(P\pp Q\substj{v}{z})
		& &
		\Did{R-Par}&		{P\to Q}\Longrightarrow
							{P\pp R\to Q\pp R}
      \\[1mm]      
	\!\!\!\!\Did{R-Case} &
		\res {xy}(\selection x{l_j}.P \pp \branching ylP) 
		\to 
		\res {xy}(P\pp P_j)
	~~  j\in I \!\!\!\!\!
	&  &
	\Did{R-Res}&	\!\!\!\!{P\to Q} \Longrightarrow
							{\res {xy} P\to \res {xy} Q}
       \\[1mm]
%       \Did{R-Fwd} & 
%		\res {xy}(\linkr{x}{y} \pp P) 
%		\to 
%		P
%      \\[1mm]  
%	\Did{R-ChRes}&	{P\to Q} \Longrightarrow
%							{\res {x} P\to \res {x} Q}
%      \\[1mm]
%      	\Did{R-Res}&	{P\to Q} \Longrightarrow
%							{\res {xy} P\to \res {xy} Q}
%      \\[1mm]         
%	\Did{R-Par}&		{P\to Q}\Longrightarrow
%							{P\pp R\to Q\pp R}
%      \\[1mm]      
	\Did{R-Str}& 	{P\equiv P',\ P\to Q,\ Q'\equiv Q}
								\Longrightarrow
								{P'\to Q'} & & &\vspace{2mm}\\
    \end{array}
\end{displaymath}
\vspace{-5mm}
  \caption{Session $\pi$-calculus: syntax and semantics.}
  \label{fig:sessionpi}
\end{figure}
%
% !TEX root=express15dp.tex
\begin{figure}[t]
$$\begin{array}{c}
%
%\hline\\\\
\inferrule[\Did{T-Nil}]
{ }
{x:\nilT\s \nil}
\qquad
\inferrule[\Did{T-Par}]
{\Gamma_1\s P\quad
\Gamma_2\s Q}
{\Gamma_1\circ\Gamma_2\s P\pp Q}
\qquad
\inferrule[{\Did{T-Res}}]
	{
	    \Gamma, x:T, y:\ov{T} \s P
	}
	{ 
	     \Gamma \s \res{xy} P
	}
\qquad
% 
%\inferrule[\Did{T-Fwd}]
%{ }
%{x:T, y:\dual T\s  \linkr{x}{y}}
%\qquad
\inferrule[{\Did{T-In}}]
	{
 	  \Gamma, x:S, y:T \s P
	}
	{
	  \Gamma, x : \wn T.S \s x\inp{y}.P
	}
\\\\
\inferrule[{\Did{T-Out}}]
	{
	   %\Gamma_1 \s x : \oc T.S \\\\
	   %\Gamma_2 \s v:T \\
	   \Gamma, x:S \s P
	}
	{
	   \Gamma,  x : \oc T.S, y:T \s \ov x\out{y}.P
	}
\qquad
\inferrule[{\Did{T-Brch}}]
	{
		%\Gamma_1 \s x :\branch lT \\\\
		\Gamma, x: S_i \s P_i \quad  \forall i\in I
	}
	{
		\Gamma, x:\branch lS \s \branching {x}lP
	}
\qquad
\inferrule[{\Did{T-Sel}}]
	{
		%\Gamma_1 \s x :  \select lT \\\\
		\Gamma,  x: S_j \s P \quad   \exists j \in I
	}
	{ 	
		\Gamma, x:  \select lS \s \selection {x} {l_j}.P
	}
  \end{array}$$
  \vspace{-5mm}
\caption{Typing rules for the \p calculus with sessions.}
\label{fig:sess_typing}
\end{figure}

Typing contexts, ranged over by $\Gamma, \Gamma'$, are sets of typing assignments $x:T$.
Given a context $\Gamma$ and a process $P$, a session typing judgement is of the form $\Gamma \s P$.
Typing rules are given in Figure~\ref{fig:sess_typing}.
Rule \Did{T-Nil} states that $\nil$ is well-typed under a terminated channel.
Rule \Did{T-Par} types the parallel composition of two processes by composing their corresponding typing contexts
using a splitting operator, noted $\circ$~\cite{V12}.
%Rule \Did{T-Par} depends on , which performs a combination of session types.
%Rule \Did{T-Res} depends on duality of session types as defined above.
Rule \Did{T-Res} types a restricted process by requiring that the two endpoints have dual types.
Rules \Did{T-In} and \Did{T-Out} type the receiving and sending of a value over a channel $x$, respectively.
Finally, rules \Did{T-Brch} and \Did{T-Sel} are generalizations of input and output over a labelled set of processes.

The main guarantees of the type system are \emph{communication safety} and \emph{session fidelity}, i.e., typed processes respect their ascribed protocols, as
represented by session types.
\begin{theorem}[Type Preservation for Session Types]\label{thm:subj_red}
If $\Gamma\s P$ and $P\to Q$, then $\Gamma\s Q$.
\end{theorem}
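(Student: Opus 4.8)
\noindent\emph{Proof plan.} The plan is to argue by induction on the derivation of $P \to Q$, with a case analysis on the last reduction rule of Figure~\ref{fig:sessionpi}. Before that I would establish the two auxiliary results that such an argument always needs. First, a \textbf{Substitution Lemma}: if $\Gamma_1 \s v : T$ and $\Gamma_2 \circ (z{:}T) \s P$ with $\Gamma = \Gamma_1 \circ \Gamma_2$ defined, then $\Gamma \s P\substj{v}{z}$. The delicate point is that this statement must thread the splitting operator $\circ$ correctly, so that the (linear) resources used to type the transmitted value $v$ are exactly those attributed to the placeholder $z$ inside $P$. Second, a \textbf{Subject Congruence} property: if $\Gamma \s P$ and $P \equiv Q$, then $\Gamma \s Q$. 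This is proved by induction on the derivation of $\equiv$; the interesting clauses are the associativity and commutativity laws for $\pp$ and scope extrusion for $\res{xy}$, all of which reduce to associativity/commutativity of $\circ$ together with routine reshuffling of \Did{T-Par} and \Did{T-Res}.

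With these in hand, the inductive cases \Did{R-Par} and \Did{R-Res} follow immediately from the induction hypothesis and a reapplication of \Did{T-Par} and \Did{T-Res} respectively (for \Did{R-Par} one also notes, by an easy inspection of reduction, that $\fn{Q} \subseteq \fn{P}$, so the portion of the context typing $R$ is untouched). The case \Did{R-Str} is exactly Subject Congruence applied on both sides of the induction hypothesis. The substantive cases are \Did{R-Com} and \Did{R-Case}. For \Did{R-Com}, starting from $\Gamma \s \res{xy}(\ov x\out v.P \pp y\inp z.Q)$, I would apply inversion: \Did{T-Res} forces $\Gamma, x{:}U, y{:}\dual{U} \s \ov x\out v.P \pp y\inp z.Q$; \Did{T-Par} splits this context; and \Did{T-Out}/\Did{T-In} then reveal $U = \oc T.S$, that the sender types $v{:}T$ and types $P$ with residual $x{:}S$, and — since $\dual{\oc T.S} = \wn T.\dual S$ — that the receiver types $Q$ with $y{:}\dual S$, $z{:}T$. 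Applying the Substitution Lemma gives $Q\substj{v}{z}$ typed with $y{:}\dual S$; because $S$ and $\dual S$ are dual, reassembling \Did{T-Par} and \Did{T-Res} yields $\Gamma \s \res{xy}(P \pp Q\substj{v}{z})$, as required. The case \Did{R-Case} is analogous: inversion on \Did{T-Sel} and \Did{T-Brch} together with $\dual{\select lS} = \branch l{\dual S}$ shows that for the selected label $j \in I$ the endpoints have residual types $x{:}S_j$ and $y{:}\dual{S_j}$, still dual; here no substitution is needed, only reassembly.

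I expect the main obstacle to be the bookkeeping around $\circ$ — both in formulating the Substitution Lemma so that the linear assumptions flow exactly from the value being sent to the placeholder receiving it, and in the inversion steps for \Did{R-Com}/\Did{R-Case}, where one must verify that nothing in $\Gamma$ is silently duplicated or discarded when the context is split and then recombined. Once the Substitution Lemma is stated with the right distribution of $\circ$, the remaining reasoning in each reduction case is a mechanical sequence of inversions on the typing rules followed by a reassembly using the residual (and still dual) session types.
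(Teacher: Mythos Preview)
Your plan is correct and is precisely the standard subject-reduction argument for this session type system. Note, however, that the paper does not give its own proof of this theorem: it is stated as a background result imported from Vasconcelos~\cite{V12}, so there is nothing in the paper to compare against beyond the citation. Your decomposition into a Substitution Lemma (stated so that the splitting operator $\circ$ routes the linear assumptions for $v$ to the hole left by $z$), a Subject Congruence lemma, and an induction on the reduction derivation with inversion in the \Did{R-Com} and \Did{R-Case} cases is exactly the approach taken in~\cite{V12}, and your handling of duality in those two base cases is accurate.
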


The following notion of well-formed processes is key to single out meaningful typed processes.
\begin{definition}[Well-Formedness for Sessions]\label{def:well-form_sessions.}
A process is {\em well-formed} if for any of its structural congruent processes of the form
$\res{\wt{xy}}(P\pp Q)$ the following hold.
\begin{enumerate}[$\bullet$]
\item
If $P$ and $Q$ are prefixed at the same variable, then the variable performs the 
same action (input or output, branching or selection).

\item
If $P$ is prefixed in $x_i$ and $Q$ is prefixed in $y_i$ where $x_iy_i\in {\wt{xy}}$, then
$P\pp Q\to$.
\end{enumerate}
\end{definition}
It is important to notice that well-typedness of a process does not
imply the process is well-formed. We have the following theorem: % holds and is proven in~\cite{V12}.
\begin{theorem}[Type Safety for Sessions~\cite{V12}]\label{thm:safety}
If $~\s P$ then $P$ is well-formed.
\end{theorem}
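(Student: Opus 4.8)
\noindent\emph{Proof strategy.}
The plan is to argue by inversion on the typing derivation and to let \emph{duality} do the real work: two immediately composed, dually typed endpoints must carry complementary prefixes and hence can always synchronise. Because well-formedness (Definition~\ref{def:well-form_sessions.}) quantifies over \emph{every} process of the form $\res{\wt{xy}}(P\pp Q)$ structurally congruent to the given one, the first ingredient I would record is \emph{subject congruence}: if $\Gamma\s R$ and $R\equiv R'$ then $\Gamma\s R'$. This is the standard lemma underlying Theorem~\ref{thm:subj_red}, and with it the goal reduces to: fix $R$ with $\s R$, assume $R\equiv\res{\wt{xy}}(P\pp Q)$, deduce $\s\res{\wt{xy}}(P\pp Q)$, and verify the two clauses of Definition~\ref{def:well-form_sessions.} for this fixed decomposition.

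The second ingredient is an \emph{inversion lemma} for \Did{T-Res} and \Did{T-Par}. Peeling off the finitely many instances of \Did{T-Res} yields a context $\Gamma$ that assigns dual types to the two ends of every restricted pair, i.e.\ $x_i:T_i$ and $y_i:\dual{T_i}$ for each $x_iy_i\in\wt{xy}$, with $\Gamma\s P\pp Q$; inverting \Did{T-Par} then gives a split $\Gamma=\Gamma_P\circ\Gamma_Q$ with $\Gamma_P\s P$ and $\Gamma_Q\s Q$. The observation I would use repeatedly is that, since every session type in the grammar is linear, the splitting operator $\circ$ cannot place a name with a non-$\nilT$ type in both of its operands; hence a free name that is the subject of a top-level prefix of $P$ (resp.\ $Q$) occurs in $\Gamma_P$ (resp.\ $\Gamma_Q$) only, and with the very type it is given in $\Gamma$.

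For clause~(1): if $P$ and $Q$ are both prefixed at a variable $z$, then inverting whichever prefix rule applies to each (\Did{T-In}, \Did{T-Out}, \Did{T-Brch}, or \Did{T-Sel}) places $z$, with a non-$\nilT$ type, in the domain of both $\Gamma_P$ and $\Gamma_Q$; by the observation above this is impossible, so the clause holds vacuously in this fragment. (In a setting with shared channels the same inversion instead forces $z$ to carry a single type shared by both sides, so $P$ and $Q$ perform the same action on $z$.) For clause~(2) --- the heart of the proof --- suppose $P$ is prefixed at $x_i$ and $Q$ at $y_i$ with $x_iy_i\in\wt{xy}$. By the observation, $\Gamma_P(x_i)=T_i$ and $\Gamma_Q(y_i)=\dual{T_i}$, and by inversion on the prefix rule the action of $P$ is an output, input, selection, or branching according as $T_i$ is of the form $\oc T.S$, $\wn T.S$, $\select lS$, or $\branch lS$, and symmetrically for $Q$ with $\dual{T_i}$; the form $T_i=\nilT$ is impossible, since no prefix rule types a subject of type $\nilT$. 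In each remaining case the clauses $\dual{\oc T.S}=\wn T.\dual S$ and $\dual{\select lS}=\branch l{\dual S}$ (and their mirrors) make the two prefixes complementary: an output faces an input, or a selection faces a branching; in the latter case \Did{T-Sel} forces the selected label into the branch's index set and duality preserves that set, so a matching branch is present. Thus $\res{x_iy_i}(P\pp Q)$ is a redex for \Did{R-Com} (resp.\ \Did{R-Case}), so $P\pp Q\to$ holds relative to the enclosing restriction, which is clause~(2).

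I expect the obstacles to be organisational rather than conceptual. The delicate points are: stating the nested-restriction inversion lemma so that no free name is dropped and each endpoint keeps its declared type through the \Did{T-Par} split; phrasing the argument about $\circ$ in clause~(1) so that it is correct whether or not the underlying system admits shared channels; and being careful that ``$P\pp Q\to$'' is read relative to the restriction context provided by $\res{\wt{xy}}$, since \Did{R-Com} and \Did{R-Case} fire only under a (double) restriction. Once the decomposition is in hand, the duality computation in clause~(2) is essentially forced.
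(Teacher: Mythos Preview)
The paper does not supply a proof of this theorem: it is quoted verbatim from Vasconcelos~\cite{V12} and immediately followed by the statement of Theorem~\ref{thm:main}, with no intervening argument. There is therefore no ``paper's own proof'' to compare against; the result is imported wholesale.

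That said, your proposal is the standard route and is essentially what one finds in~\cite{V12}: subject congruence to reduce to a fixed decomposition $\res{\wt{xy}}(P\pp Q)$, inversion on \Did{T-Res} and \Did{T-Par} to obtain dual types on the restricted endpoints together with a context split $\Gamma_P\circ\Gamma_Q$, and then a case analysis on the top-level prefixes using duality. Your treatment of clause~(1) is correct for the fragment presented here---the grammar admits only linear session types, so a name with non-$\nilT$ type cannot sit in both $\Gamma_P$ and $\Gamma_Q$, and the clause is indeed vacuous. Your treatment of clause~(2) is the expected duality computation. The one point you flag yourself---that the redex must be read under the enclosing double restriction---is exactly the subtlety worth keeping explicit when writing this out in full.
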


We present the main result of the session type system.
The following theorem states that
a well-typed closed process does not reduce to an ill-formed one.
It follows immediately from
Theorems~\ref{thm:subj_red} and~\ref{thm:safety}.
\begin{theorem}[\cite{V12}]\label{thm:main}
If $\ \s P$ and $P\mto Q$, then $Q$ is well-formed.
\end{theorem}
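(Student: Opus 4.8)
The plan is to derive the statement directly from Type Preservation (Theorem~\ref{thm:subj_red}) and Type Safety (Theorem~\ref{thm:safety}), by induction on the length of the reduction sequence underlying $P \mto Q$. Since $\mto$ is the reflexive, transitive closure of $\to$, there are some $n \geq 0$ and processes $P_0, \ldots, P_n$ with $P = P_0 \to P_1 \to \cdots \to P_n = Q$; I would induct on $n$.

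In the base case $n = 0$ we have $Q = P$, and since $\s P$, Theorem~\ref{thm:safety} immediately gives that $P$, hence $Q$, is well-formed. In the inductive step, write the sequence as $P \to P_1 \mto Q$, where $P_1 \mto Q$ has length $n-1$. From $\s P$ and $P \to P_1$, Theorem~\ref{thm:subj_red} (taken with the empty typing context) yields $\s P_1$. Applying the induction hypothesis to $P_1$ and the shorter reduction $P_1 \mto Q$ then gives that $Q$ is well-formed, which closes the case.

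There is essentially no obstacle here: all the mathematical content lives in the two cited theorems, and the only routine care needed is (i) to observe that Theorem~\ref{thm:subj_red}, although stated for an arbitrary context $\Gamma$, may be instantiated with the empty context, so that typability under the empty context is maintained along every intermediate step; and (ii) to note that reductions derived via \Did{R-Str} are already covered by Theorem~\ref{thm:subj_red} as stated, so that no separate lemma relating $\equiv$ to typing is required. Moreover, since well-formedness is defined by quantifying over all $\equiv$-equivalent processes of the shape $\res{\wt{xy}}(P \pp Q)$, it is in particular insensitive to structural congruence, which is all that the argument relies on.
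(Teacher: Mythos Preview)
Your argument is correct and matches the paper's own justification, which simply notes that the result follows immediately from Theorems~\ref{thm:subj_red} and~\ref{thm:safety}. The induction on the length of the reduction sequence that you spell out is exactly the routine unfolding of that remark.
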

An important observation is that the session type system given above does not exclude 
\emph{deadlocked processes}, i.e., 
processes which reach a ``stuck state.''
This is because the interleaving of communication prefixes in typed processes may create extra causal dependencies 
not described by session types. 
(This intuitive definition of deadlocked processes will be made precise below.)
%can be made precise with the definition of deadlock-free session process that we give next.)
%More formally, we observe that well-typedness does not ensure processes are deadlock-free.
A particularly insidious class of deadlocks is due to cyclic interleaving of channels in processes.
For example, consider a process such as 
%
%\begin{displaymath}
$
P\defeq \res{xy}\res{wz}(\ov x\out \unit.\ov w\out \unit\pp z\inp t.y\inp s)
$: it
%\end{displaymath}
%%%
represents the implementation of two (simple) independent sessions, which get intertwined (blocked) due to the 
nesting induced by input and output prefixes.
We have that 
$\unit:\nilT\s P$ even if $P$ is unable to reduce.
A deadlock-free variant of $P$ would be, e.g., process
$P'\defeq \res{xy}\res{wz}(\ov x\out \unit.\ov w\out \unit \pp y\inp s.z\inp t)$, which also is typable in $\s$.

%We are now ready to give the formal definition of deadlock freedom for sessions.
We will say that a  process is \emph{deadlock-free} if any
communication action that becomes active during execution is
eventually consumed; that is, there is a corresponding co-action that eventually becomes available.
%namely there exists eventually a co-action that enables this communication.
Below we  define deadlock freedom 
in the session $\pi$-calculus; 
we follow~\cite{K02,K06} and consider \emph{fair} reduction sequences~\cite{DBLP:journals/iandc/CostaS87}.
For simplicity, we omit the symmetric cases for input and branching.
%Following~\cite{K02},
%we assume that reduction sequences are
%\emph{fair}~\cite{DBLP:journals/iandc/CostaS87}.
%
% \begin{definition}
% A process $Q$ is a {\em selection normal form} of a process $P$
% if  $P\preccurlyeq Q$,
% where $\preccurlyeq$ denotes  structural congruence $\equiv$ 
% possibly extended with a selection normalisation.
% \end{definition}
% 
\begin{definition}[Deadlock Freedom for Session \p Calculus]\label{def:lock}
  A process $P_0$ is \emph{deadlock-free} if for any fair reduction
  sequence $P_0\to P_1\to P_2\to \ldots$, we have that 
\begin{enumerate}

\item $P_i \equiv \res{\widetilde{xy}}(\ov x\out v.Q \pp R )$, for
  $i\geq 0$, implies that there exists $n \geq i$ such that \\
  $P_n \equiv \res{\widetilde{x'y'}}(\ov x\out v.Q \pp y\inp z.R_1
  \pp R_2)$ and $P_{n+1} \equiv \res{\widetilde{x'y'}}(Q \pp R_1\substj vz
  \pp R_2)$;
% 
%   \item $P_i \equiv \res{\widetilde{xy}}(x\inp z.Q \pp R )$, for
%   $i\geq 0$, implies that there exists $n \geq i$ such that
%   $P_n \equiv \res{\widetilde{xy}}(x\inp z.Q \pp y\out v.R_1
%   \pp R_2)$ and $P_{n+1} \equiv \res{\widetilde{xy}}(Q\substj vz \pp R_1
%   \pp R_2)$;

\item $P_i \equiv \res{\widetilde{xy}}( \selection x {l_j}.Q \pp R)$,
  for $i\geq 0$, implies that there exists $n \geq i$
  such that \\ $P_n\equiv
  \res{\widetilde{x'y'}}(\selection x {l_j}.Q \pp
  y\triangleright\{\mathnormal{l}_k:R_k\}_{k\in I\cup\{j\}}\pp S)$ and
  $P_{n+1} \equiv \res{\widetilde{x'y'}}(Q \pp R_j \pp S)$.
% 
% \item $P_i \equiv \res{\widetilde{xy}}( 
% y\triangleright\{\mathnormal{l}_k:R_k\}_{k\in I\cup\{j\}}
% \pp R
%   )$, for some $i\geq 0$, implies that there exists $n \geq i$
%   such that $P_n\equiv \res{\widetilde{xy}}(
%   y\triangleright\{\mathnormal{l}_k:R_k\}_{k\in I\cup\{j\}}
%   \pp 
%   \selection x {l_j}.Q
%   \pp
%   S)$ and
%   $P_{n+1} \equiv \res{\widetilde{xy}}(R_j \pp Q \pp S)$.
\end{enumerate}
\end{definition}
%We now  introduce two approaches to enforce deadlock freedom for session processes by typing.

%%%%%
%%%%%
\section{Two Approaches to Deadlock Freedom}\label{s:deadlf}
We introduce two approaches to deadlock-free, session typed processes.
The first one, given in \S\,\ref{ss:lf}, comes from  interpretations of 
 linear logic propositions as 
session types~\cite{CairesCFest14,CairesP10,CairesPT12,DBLP:conf/icfp/Wadler12};  
%and leads to the definition of \lcp
the second approach,
summarized in \S\,\ref{ss:dgs},
 combines  usage types for the standard \p calculus  with 
 encodings of session processes and types~\cite{DGS12}.
 Based on these two approaches,
 in~\S\,\ref{s:hier} we will define the classes \lcp and \fullKoba.

\subsection{Linear Logic Foundations of Session Types}\label{ss:lf}
The linear logic interpretation of session types was introduced by Caires and Pfenning~\cite{CairesPT12}, and developed
by Wadler~\cite{DBLP:conf/icfp/Wadler12} and others. Initially proposed for intutitionistic linear logic, here we consider 
an interpretation 
based on classical linear logic with mix principles, following a recent presentation by Caires~\cite{CairesCFest14}.

The syntax and semantics of processes are as in \S\,\ref{s:sessions} except for the following differences.
First, we have the standard restriction construct $\res{x}P$, which replaces the 
double restriction.
Second, we have 
 a so-called \emph{forwarding process}, denoted $\linkr{x}{y}$, which intuitively ``fuses'' names $x$ and $y$.
Besides these differences in syntax, we have also some minor modifications in reduction rules. 
Differences with respect to the language considered in \S\,\ref{s:sessions} are summarized in the following:
\begin{displaymath}
   \begin{array}[t]{rllllll}
	P,Q ::= 	
			& \res x P 					& \mbox{(channel restriction)}&\ |\
 			& \linkr{x}{y}					& \mbox{(forwarding)}
  \end{array}
  \end{displaymath}
  \begin{displaymath}
    \begin{array}{rllcrll}  
	\Did{R-ChCom} & 
		\ov x\out v.P\pp x\inp z.Q
		\to 
		P\pp Q\substj{v}{z}
		& \quad & 
		\Did{R-Fwd} & 
		\res {x}(\linkr{x}{y} \pp P)
		\to 
		P\substj{y}{x}
      \\
	\Did{R-ChCase} &
		\selection x{l_j}.P \pp \branching xlP
		\to 
		P\pp P_j
	\quad  j\in I 
       & \quad &  
	\Did{R-ChRes}&	{P\to Q} \Longrightarrow
							{\res {x} P\to \res {x} Q}
    \end{array}
\end{displaymath}
Observe how interaction of input/output prefixes and selection/branching is no longer covered by an outermost restriction.
As for the type system, we consider the so-called C-types
which correspond to linear logic propositions. They
 are given by the following grammar:
$$
A,B ::= \bot \midd \one \midd A\otimes B  \midd A \parl B \midd  \select lA \midd \branch lA  
$$
Intuitively, $\bot$ and $\one$ are used to type a terminated endpoint.
Type $A\otimes B$ is associated to an endpoint that first outputs an object of type $A$ and then 
behaves according to $B$. Dually, type 
$A \parl B$ is the type of an endpoint that first inputs an object of type $A$ and then continues as $B$.
The interpretation of $\select lA$ and $\branch lA$ as select and branch behaviors follows as expected.

We define a full duality on C-types, which exactly
corresponds to the negation operator of \CLL\ $(\cdot)^\perp$.
The \emph{dual} of type $A$, denoted $\dual{A}$, is inductively defined
as follows:
\begin{displaymath}
\begin{array}{rclcrclcrcl}
\dual{\one}&=&\bot &\qquad& \dual{\bot}&=&\one & \quad & \dual{\select lA} & = & \branch l{\dual A} \\
\dual{A\otimes B} & = & \dual{A} \parl \dual{B} &\qquad& \dual{A\parl B} & = & \dual{A} \otimes \dual{B} &\qquad& \dual{\branch lA} & = &\select l{\dual A}
\end{array}
\end{displaymath}
%We now explain mix principles.
Recall that $A \lolli B \triangleq \dual{A} \parl B$.
As explained in~\cite{CairesCFest14},
considering mix principles means admitting $\bot \lolli \one$ and
$\one \lolli \bot$, and therefore $\bot = \one$.
We write $\bullet$ to denote either $\bot$ or $\one$,
and decree that $\dual{\bullet} = \bullet$. 

Typing contexts, sets of typing assignments $x:A$, are ranged over $\Delta, \Delta', \ldots$.
The empty context is denoted `$\,\cdot\,$'.
Typing judgments are then of the form $P \cp \D$.
Figure~\ref{fig:type-system-cll} gives the typing rules associated to the linear logic interpretation.
Salient points include the use of bound output $\res{y}\ov{x}\out y.P$,
which is abbreviated as $\bout{x}{y}P$.
Another highlight is the 
``composition plus hiding'' principle implemented by rule $\Did{T-$\cut$}$, 
which integrates parallel composition and restriction in a single rule.
Indeed, there is no dedicated rule for restriction.
Also, rule \Did{T-$\mix$} enables the typing of \emph{independent parallel compositions}, i.e.,
the composition of two processes that do not share sessions.
%
% !TEX root = express15dp.tex
%
\begin{figure*}[t]
\centering
{
$$
\begin{array}{c}
\inferrule*[left=\Did{T-$\one$}]{
}{ \nil \cp x{:}\bullet}
\qquad
\inferrule*[left=\Did{T-$\bot$}]
{P \cp  \Delta}
{P \cp x{:}\bullet, \Delta}
\qquad
\inferrule*[left=\Did{T-$\tid$}]{
}
{\linkr{x}{y} \cp x{:}A, y{:}\dual{A}}
\vspace{2mm}\\
%%
%\inferrule*[left=\Did{T-$\cut$}]
%{P \cp \Delta, x{:}\dual{A}  \and Q \cp  \Delta', x{:}A}
%{\res {x}(P\para Q) \cp\Delta, \Delta'}
%%
%\qquad
%%
%\inferrule*[left=\Did{T-$\mix$}]
%{P \cp \Delta  \and Q \cp  \Delta'}
%{ P\para Q \cp\Delta, \Delta'}
%%
%\vspace{2mm}\\
%
\inferrule[\Did{T-$\parl$}]
{P \cp \Delta, y{:}A, x{:}B}
{x\inp {y}.P \cp \Delta, x{:} A\parl B}
\qquad
\inferrule[\Did{T-$\otimes$}]
{P \cp  \Delta, y{:}{ A}  \and Q \cp \Delta', x{:}B}
{\bout{x}{y}. (P\para Q) \cp  \Delta, \Delta', x{:}  A\otimes B}
\qquad
\inferrule[\Did{T-$\cut$}]
{P \cp \Delta, x{:}\dual{A}  \and Q \cp  \Delta', x{:}A}
{\res {x}(P\para Q) \cp\Delta, \Delta'}
\vspace{2mm}\\
\inferrule[\Did{T-$\oplus$}]
{P \cp \Delta, x{:}A_j  \and j\in I}
{ \selection {x} {l_j}.P  \cp \Delta, x{:}\select lA}
\qquad
\inferrule[\Did{T-$\with$}]
{P_i \cp \Delta, x{:}A_i  \and \forall i\in I}
{\branching xlP \cp \Delta, x{:}\branch lA}
\qquad
\inferrule[\Did{T-$\mix$}]
{P \cp \Delta  \and Q \cp  \Delta'}
{ P\para Q \cp\Delta, \Delta'}
\end{array}
$$
}
\vspace{-5mm}
\caption{\label{fig:type-system-cll}
Typing rules for the \p calculus with C-types.}
\end{figure*}

We now collect main results for this type system; see~\cite{CairesPT12,CairesCFest14} for details.
For any $P$, define $live(P)$ if and only if $P \equiv (\nub \wt{n})(\pi.Q \para R)$, 
where $\pi$ is an input, output, selection, or branching prefix.

\begin{theorem}[Type Preservation for C-Types]
If $P \cp \D$ and $P \redd Q$ then $Q \cp \D$.
\end{theorem}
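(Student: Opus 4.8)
The plan is a standard subject-reduction argument by induction on the derivation of $P \redd Q$, supported by two auxiliary lemmas that I would establish first. The first is \emph{subject congruence}: if $P \cp \D$ and $P \equiv Q$ then $Q \cp \D$, proved by induction on the derivation of $P \equiv Q$ and inspection of each structural-congruence axiom. The only mildly delicate cases are commutativity/associativity of parallel composition and scope extrusion of $\res{x}$, which must be reconciled with the fact that this system has no standalone restriction rule: restriction is introduced only through the ``composition plus hiding'' rule $\Did{T-$\cut$}$ (and, inside bound outputs, through the abbreviation $\bout{x}{y}P$). The second lemma is a \emph{renaming lemma}: if $P \cp \D$ and $y \notin \fn{P}$, then $P\substj{y}{x} \cp \D\substj{y}{x}$, and, more generally, typing is stable under injective renamings of free channels. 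This is needed because several reduction rules carry out a substitution of a channel name, namely $\Did{R-ChCom}$, $\Did{R-Fwd}$, and---after unfolding $\bout{x}{y}P$---the transmission of a bound name.

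With these in place, the induction proceeds as follows. The inductive cases are routine: for $\Did{R-Par}$ and $\Did{R-ChRes}$ one inverts the last typing rule of the redex (necessarily $\Did{T-$\cut$}$, $\Did{T-$\mix$}$, or the rule introducing the active prefix), applies the induction hypothesis to the reducing subterm, and reassembles the derivation with the same context $\D$; for the rule closing $\redd$ under $\equiv$ one combines subject congruence with the induction hypothesis. The three base cases are each an instance of a principal cut-elimination step of \CLL. In $\Did{R-ChCom}$, inversion places the redex $\ov x\out v.P'\pp x\inp z.Q'$ under a cut on $x$; inverting $\Did{T-$\cut$}$ and then the output rule on the left and the input rule on the right exposes dual types $A\otimes B$ and $\dual{A}\parl\dual{B}$, and one rebuilds the typing of $P'\pp Q'\substj{v}{z}$ as the two nested cuts delivered by the $\otimes/\parl$ reduction (a cut on $A$ for the transmitted object, inside a cut on $B$ for the continuations), using the renaming lemma for $\substj{v}{z}$---which, in the bound-output reading of the output rule, is just the renaming of a freshly restricted name. $\Did{R-ChCase}$ is treated identically but uses the cut-elimination step for internal/external choice, retaining only the $j$-th premise of the branching rule. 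In $\Did{R-Fwd}$, the process $\res x(\linkr x y\pp P')$ is typed by a cut against the axiom typing of $\linkr x y$; this is the identity cut-elimination step, and $P'\substj{y}{x}\cp\D$ is immediate from the renaming lemma applied to the premise typing $P'$.

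I expect the main obstacle to be the bookkeeping around $\Did{T-$\cut$}$: because parallel composition and restriction are fused into a single rule, each inversion must simultaneously split the typing context and single out the endpoint carrying the cut type, and the $\otimes/\parl$ case must correctly track the implicit restricted name hidden in $\bout{x}{y}P$---aligning the scoping with the renaming lemma is the step that needs care. A secondary subtlety is making subject congruence strong enough to absorb the rearrangements of nested cuts and parallel compositions that the base cases implicitly require.
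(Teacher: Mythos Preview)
The paper does not give a proof of this theorem: it is stated among the ``main results'' collected from~\cite{CairesPT12,CairesCFest14}, with the reader referred to those works for details. Your plan is precisely the standard subject-reduction argument developed in those references---subject congruence, a renaming/substitution lemma, and induction on the reduction derivation with the communication axioms discharged as the principal cut-elimination steps for $\otimes/\parl$, $\oplus/\&$, and identity---so it is correct and matches the intended approach.
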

%\begin{proof}
%By adapting the proof in~\cite{CairesPT12}.
%\end{proof}

\begin{theorem}[Progress]\label{t:progress}
If $ P \cp \cdot $ and $live(P)$ then $P \redd Q$,  for some $Q$.
\end{theorem}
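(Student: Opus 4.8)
The plan is to obtain Progress as a corollary of a stronger invariant on typed processes, proved by induction on the typing derivation. The invariant I would isolate is: for every $P$ with $P \cp \D$, either (i)~$P \redd Q$ for some $Q$; or (ii)~after using $\equiv$ to float all restrictions to the outside, $P$ exhibits at top level an \emph{action} --- a prefix or a forwarder --- whose subject is a free name, i.e.\ a name declared in $\D$; or (iii)~$P$ has no top-level action at all. Theorem~\ref{t:progress} then follows by taking $\D = \cdot$: alternative (ii) is vacuous since there are no free names, and alternative (iii) is ruled out because $live(P)$ says precisely that $P \equiv \res{\widetilde n}(\pi.Q \pp R)$, which does have a top-level action (the prefix $\pi$); hence (i) must hold, which is the conclusion.

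For the induction I would proceed by cases on the last typing rule. The base cases are immediate: the rule for $\nil$ lands in (iii), and the rule for the forwarder $\linkr{x}{y}$ lands in (ii), since it offers on $x$ and $y$, both declared. The rules typing output, input, selection, and branching each place a prefix at top level whose subject is declared in $\D$, so they land in (ii). For rule $\Did{T-$\mix$}$ we have $P = P_1 \pp P_2$ with the names declared in the two contexts disjoint; applying the induction hypothesis to each $P_i$, if one reduces then so does $P$, and otherwise a top-level action of $P_i$ on a free name witnesses (ii) for $P$ as well. The interesting case is rule $\Did{T-$\cut$}$, where $P \equiv \res{x}(P_1 \pp P_2)$ with $P_1 \cp \D_1, x:A$ and $P_2 \cp \D_2, x:\dual{A}$. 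Apply the induction hypothesis to $P_1$ and to $P_2$. If either reduces, so does $P$. If one of them, say $P_1$, offers a top-level action on a free name $y \neq x$, then $P$ inherits it (case (ii)). The crux is when the top-level actions available to $P_1$ and $P_2$ are only on $x$ (linearity of $x$ forces any such action to be on $x$; and since $x$ is not free in $P$, it cannot already satisfy (ii) for $P$, while it is not in case (iii) because a top-level action exists): these two actions must then be \emph{dual}, since they are typed at $A$ and $\dual{A}$ and the typing rules force the syntactic shape of a top-level action on a channel to match the principal connective of its type (input for $\parl$, output for $\otimes$, branching/selection for the additives, forwarder for any type). Hence, after a structural-congruence rearrangement bringing the two dual prefixes syntactically adjacent under $\res{x}$ --- and, where needed, performing scope extrusion of the remaining restrictions --- one of $\Did{R-ChCom}$, $\Did{R-ChCase}$, $\Did{R-Fwd}$ applies, so $P \redd Q$.

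The step I expect to be the main obstacle is exactly this cut case: proving that two top-level actions on the shared name $x$, typed at dual types, always form a matching redex, and then actually exhibiting the reduction. This rests on a ``shape-of-action is determined by the type'' property --- a typed process whose context mentions $x$ at a non-$\bullet$ type and which does not reduce offers a top-level action on $x$ of precisely the form dictated by the type of $x$ --- which must be proved by inversion on the typing rules and handled uniformly for ordinary prefixes and for the forwarding construct (a forwarder offering on both of its names, including the degenerate $\bullet$-typed case, which is discharged by $\Did{R-Fwd}$). It also requires the structural-congruence bookkeeping (associativity and commutativity of $\pp$, scope extrusion) needed to match the exact left-hand sides of $\Did{R-ChCom}$ and $\Did{R-ChCase}$, or to expose $\Did{R-Fwd}$. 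Once this is in place, the $\Did{T-$\mix$}$ case and the congruence manipulations are routine; the only other mild subtlety is formulating the three alternatives of the invariant so that they are stable under the operations performed in the inductive step.
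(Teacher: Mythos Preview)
The paper does not give its own proof of Theorem~\ref{t:progress}: the result is stated as one of the ``main results for this type system'' and the reader is referred to~\cite{CairesPT12,CairesCFest14} for details. So there is no proof in the paper to compare against.

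Your proposal is essentially the standard argument used in those references: strengthen the statement to a ``liveness'' invariant on open judgments (either the process reduces, or it offers a top-level action on a name in the typing context, or it is inert), prove that invariant by induction on the typing derivation, and recover progress for closed processes by ruling out the second and third alternatives. The crucial cut case --- where both components offer on the shared name and duality of types forces a matching redex --- is exactly the heart of the Caires--Pfenning proof. One small gap in your case analysis: in the cut case you treat only the situation where \emph{both} $P_1$ and $P_2$ fall in alternative~(ii); you should also cover the mixed cases where one component is in alternative~(iii). This is easily discharged (a process with no top-level action must be typed at a context of $\bullet$'s only, which constrains the cut type and lets either \Did{R-Fwd} or alternative~(iii) for the whole process apply), but it should be mentioned for completeness.
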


\subsection{Deadlock Freedom by Encodability}\label{ss:dgs}
As mentioned above,
the second approach to deadlock-free session processes is \emph{indirect},
in the sense that  
establishing deadlock freedom for session processes
appeals to usage types for the $\pi$-calculus~\cite{K02,K06}, 
for which type systems enforcing deadlock freedom are well-established.
Formally, this reduction 
exploits encodings of processes and types:
a session process 
$\Gamma \s P$ 
is encoded into a (standard) $\pi$-calculus process 
$\encf \Gamma \lf \encf P$.
Next we introduce 
the syntax of standard $\pi$-calculus processes with variant values
(\S\,\ref{sss:proc}), 
the discipline of usage types (\S\,\ref{sss:usage}), 
and 
the encodings of session processes and types into standard $\pi$-calculus processes and usage types, respectively (\S\,\ref{sss:enco}).

\subsubsection{Processes}\label{sss:proc}
The syntax and semantics of the \p calculus with usage types build upon those in \S\,\ref{s:sessions}.
We require some modifications.
First, the encoding of terms presented in \S\,\ref{sss:enco}, requires polyadic communication.
Rather than branching and selection constructs, the \p calculus that we consider here includes a \emph{case} construct
$\picase v{x_i}{P_i}$ that uses \emph{variant value} $\vv {j}v$.
Moreover, we consider the standard channel restriction, rather than  double restriction.
These modifications are summarized below: %lead to differences also in the reduction rules, which are as follows:
\begin{displaymath}
   \begin{array}[t]{rllllll}
	P,Q ::= 	
			& \res x P 					& \mbox{(channel restriction)}&\ |\
 			& \picase v{x_i}{P_i}				& \mbox{(case)}&\\
	v::=
			&\vv jv						&\mbox{(variant value)}
 			
  \end{array}
  \end{displaymath}
  \begin{displaymath}
%    \begin{array}{rllcrll}  
%	\Did{R\p Com} & 
%		\ov x\out {\wt v}.P\pp x\inp {\wt z}.Q
%		\to 
%		P\pp Q\substj{\wt v}{\wt z}
%		&\quad& 
%		\Did{R\p Res}&	{P\to Q} \Longrightarrow
%							{\res {x} P\to \res {x} Q}
%      \\[1mm]      
%	\Did{R\p Case} &
%		 \picase {\vv jv}{x_i}{P_i}	
%		\to 
%		P_j\substj{v}{x_i}
%	\quad  j\in I 
%	& & & 
%%       \\[1mm]
%%	\Did{R\p Res}&	{P\to Q} \Longrightarrow
%%							{\res {x} P\to \res {x} Q}
%    \end{array}
 \begin{array}{lllll}  
	\Did{R\p Com} & 
		\ov x\out {\wt v}.P\pp x\inp {\wt z}.Q
		\to 
		P\pp Q\substj{\wt v}{\wt z}
	\\
		\Did{R\p Res}&	{P\to Q} \Longrightarrow
							{\res {x} P\to \res {x} Q}
      \\
	\Did{R\p Case} &
		 \picase {\vv jv}{x_i}{P_i}	
		\to 
		P_j\substj{v}{x_i}
	\quad  j\in I 
    \end{array}
\end{displaymath}
%We  formally define deadlock freedom by following.
The definition of deadlock-freedom for the $\pi$-calculus follows~\cite{K02,K06}:

%We consider \emph{fair} reduction sequences, in the sense of~\cite{DBLP:journals/iandc/CostaS87}.
\begin{definition}[Deadlock Freedom for Standard $\pi$-Calculus]\label{def:lockpi}
  A process $P_0$ is \emph{deadlock-free} under fair scheduling,
  if for any fair reduction
  sequence $P_0\to P_1\to P_2\to \cdots$ the following hold 
\begin{enumerate}

\item if $P_i \equiv \res{\tilde{x}}(\ov x\out {\wt v}.Q \pp R )$ for
  $i\geq 0$, implies that there exists $n \geq i$ such that \\
  $P_n\equiv \res{\tilde{x}}(\ov x\out {\wt v}.Q \pp x\inp {\wt z}.R_1
  \pp R_2)$ and $P_{n+1} \equiv \res{\tilde{x}}(Q \pp R_1\substj{{\wt v}}{{\wt z}}
  \pp R_2)$;

 \item if $P_i \equiv \res{\tilde{x}}( x \inp {\wt z}.Q \pp R )$ for
    $i\geq 0$, implies that there exists $n \geq i$ such that \\ $P_n\equiv
   \res{\tilde{x}}(x \inp {\wt z}.Q \pp \ov x\out {\wt v}.R_1 \pp R_2)$ and
   $P_{n+1} \equiv \res{\tilde{x}}(Q\substj{{\wt v}}{{\wt z}} \pp R_1 \pp R_2)$.
\end{enumerate}
\end{definition}

\subsubsection{Usage Types}\label{sss:usage}
\begin{figure}[t]
  \begin{displaymath}
  \begin{array}[t]{rllllll}
	U ::=
				& \wn^{\ob}_{\ca}.U  		& \mbox{(used in input)}&\quad
                              	& {\emp}  					&  \mbox{(not usable)}\\ 
                              	& \oc^{\ob}_{\ca}.U  			&  \mbox{(used in output)}&
                              	& (U_1 \pp U_2) 			&  \mbox{(used in parallel)}\\
%                            	&  \bf t 				 	& \mbox{usage variable}&
%	                    	& \rec {\bf t}.U      			& \mbox{recursive usage}\\[1mm]
    	T ::=
    				& U [\widetilde T]			& \mbox{(channel types)}& 
    				& \variant {l}{T}				& \mbox{(variant type)}
\end{array}
  \end{displaymath}
  \vspace{-5mm}
  \caption{Syntax of usage types for the \p calculus.}
  \label{fig:usage_types}
\end{figure}
The syntax of usage types is defined in Figure~\ref{fig:usage_types}.
For simplicity, we let $\alpha$ range over input $\wn$ or output $\oc$ actions.
The usage $\emp$ describes a channel that cannot be used at all. 
We will often omit $\emp$, and so we will write $U$ instead of $U.\emp$.
Usages $ \wn^{\ob}_{\ca}.U$ and $\oc^{\ob}_{\ca}.U$
describe channels that can be used once for input and output, respectively and then used according to the continuation usage $U$. 
The \emph{obligation} $o$ and \emph{capability} $\ca$ range over the set of natural numbers.
The usage $U_1\pp U_2$ describes a channel that is used according to $U_1$ by one process
and $U_2$ by another processes in parallel.

%\paragraph{On obligation and capability levels:}
Intuitively, obligations and capabilities describe inter-channel dependencies:
%Following~\cite{K06}, their relation may be described as:
%They may be intuitively explained as follows:
\begin{enumerate}[$\bullet$]
\item
An obligation of level $n$ must be fulfilled by using only capabilities of level \emph{less than} $n$.
Said differently, an action of obligation $n$ must be prefixed by actions of capabilities less than $n$.

\item
For an action with capability of level $n$, there must exist a co-action with obligation of level \emph{less than or equal to} $n$.
\end{enumerate}

%\paragraph{Auxiliary Notions.}
\noindent
Typing contexts are sets of typing assignments and  are ranged over $\Gamma, \Gamma'$.
A typing judgement is of the form
$\Gamma \lf P$: the annotation $n$ explicitly denotes the greatest \emph{degree of sharing} admitted in parallel processes.
Before commenting on the typing rules (given in Figure~\ref{f:pityping}), we discuss some important auxiliary notions, extracted 
from~\cite{K02,K06}. 
First, the composition operation on types (denoted $\pp$, and 
used in rules  T$\pi$-\Did{Par}$_n$ and  T$\pi$-\Did{Out})
is based on the composition of usages and is defined as follows:
\begin{displaymath}
\begin{array}{c}
\variant {l_i}{T_i} \pp \variant {l_i}{T_i}  = \variant {l_i}{T_i}\qquad
U_1[\wt T] \pp U_2[\wt T]					 = (U_1\pp U_2)[\wt T]
\end{array}
\end{displaymath}
The generalization of $\pp$ to typing contexts, denoted $(\Gamma_1 \pp \Gamma_2)(x)$, is defined as expected.
%
%\begin{displaymath}
%  (\Gamma_1 \pp \Gamma_2)(x)= 
%  \left\{
%  \begin{array}{ll}
%	\Gamma_1(x) \pp \Gamma_2(x)& \mbox{ if } x\in \dom(\Gamma_1)\cap \dom(\Gamma_2)
%	\\[2mm]
%   	\Gamma_1(x) &\text{ if } x \in \dom(\Gamma_1)\setminus\dom(\Gamma_2)
%    \\[2mm]
%   	\Gamma_2(x)& \text{ if } x \in \dom(\Gamma_2)\setminus\dom(\Gamma_1)
%  \end{array}
%  \right.
%\end{displaymath}
%
The unary operation $\uparrow^{\,t}$ 
applied to a usage $U$ lifts its
obligation level {\em up to t}; it is defined inductively as:
\begin{align*}
\uparrow^{\,t} \emp 					= \emp \qquad 
\uparrow^{\,t} \m{\alpha}^{\ob}_{\ca}.U = \m{\alpha}^{max(\ob,t)}_{\ca}.U\qquad
\uparrow^{\,t} (U_1\pp U_2)			&= (\uparrow^{t} U_1\pp \uparrow^{t} U_2)
%\uparrow^{t} {\bf t} 				&= {\bf t}\\
%\uparrow^{t}\rec {\bf t}.U 			&= \rec {\bf t}.\uparrow^{t} U
\end{align*}
The $\uparrow^{\,t}$ is extends to types/typing contexts as expected.
%
%\begin{align*}
%\uparrow^{\,t} (U[\wt T]) 			 = \uparrow^{\,t}U[\wt T]\qquad 
%%\uparrow^{t} \langle l_1\_T_{1}\ldots l_n\_T_{n} \rangle 
%%								&= 
%%\langle l_1\_\uparrow^{t}T_{1}\ldots l_n\_\uparrow^{t}T_{n} \rangle\\
%\uparrow^{\,t}T 					= \tt undef \mbox{ otherwise}\qquad
%(\uparrow^{\,t}\Gamma)(x)			= \uparrow^{\,t}(\Gamma(x))
%\end{align*}
%
\emph{Duality} on usage types 
%(used in the proofs of our main results in Appendix~\ref{sec:appendix}) 
%is defined as follows. It 
simply exchanges $\wn$ and $\oc$:
\begin{displaymath}
	\begin{array}{c}
		\dual{{\emptyset} []}			=				{\emptyset}[]\qquad
		\dual{\wn^{\ob}_{\ca}.U [\wt T]}	=				 \oc^{\ob}_{\ca}.\dual U[\wt T]\qquad
		\dual{\oc^{\ob}_{\ca}.U [\wt T]}	=				 {\wn^{\ob}_{\ca}.\dual U [\wt T]}
	\end{array}
\end{displaymath}
%Write  $\uparrow^t U$ 
%to denote the unary operation that 
%adds $t$ the obligation levels of the 
%actions in $U$.
Operator ``$\semi$'' in $\Delta = x:[T]\alpha^{\ob}_{\ca} \semi \Gamma$,
used in rules \Did{T$\pi$-In} and \Did{T$\pi$-Out},
 is such that the following hold:
\begin{align*}
   \dom(\Delta) = \{x\} \cup \dom(\Gamma) \qquad\quad
   \Delta(x)=\left\{ 
    \begin{aligned}
      &\alpha^{\ob}_{\ca}.U[\widetilde T]		&& \mbox{if 
$\Gamma(x)=U[\widetilde T]$}\\
      &\alpha^{\ob}_{\ca}[\widetilde T]		&& \mbox{if $x\notin 
dom(\Gamma)$}
    \end{aligned}
  \right. \qquad\quad
   \Delta(y) = \uparrow^{\ca+1}\Gamma(y) \ \ \mbox{if $y\neq x$}
\end{align*}
The final required notion is that of a \emph{reliable usage}. It builds upon the following definition:
%The following definitions are inspired from.
\begin{definition}%[Obligations / Capabilities~\cite{K02,K06}]
\label{def:obligations}
Let $U$ be a usage.
The input and output \emph{obligation levels} (resp. \emph{capability levels}) of $U$, written
$\obs{\wn}{U}$ and $\obs{\oc}{U}$
(resp. $\caps{\wn}{U}$ and $\caps{\oc}{U}$),
are defined as:
$$
\begin{array}{rclcrcl}
\obs\alpha{\alpha^{\ob}_{\ca}.U}					&= &  \ob & \quad & \caps\alpha{\alpha^{\ob}_{\ca}.U}					&= &\ca \\
\obs\alpha{U_1\pp U_2}							&= & min(\obs\alpha{U_1},\obs\alpha{U_2})
& \quad& 
\caps\alpha{U_1\pp U_2}							&= & min(\caps\alpha{U_1},\caps\alpha{U_2})
\end{array}
$$
\end{definition}
%
%\odedit{La seguente def Koba la mette solo nel full paper e non nella versione conf CONCUR06. In quello conferenza da solo l'intuzione che io ho scritto dopo la def formale.}
%\jpedit{We need to explain $\preceq$.}
%\begin{definition}[Usage reduction]
%\label{def:usage_red}
%The reduction relation $\to$ on usages is the smallest relation closed under the following rules:
%\begin{displaymath}
%\begin{array}{rll}
%	\Did{U-Com} & 
%		\wn^{\ob}_{\ca}.U_1 \pp \wn^{\ob'}_{\ca'}.U_2
%		\to
%		U_1\pp U_2
%      \\[1mm]      
%	\Did{U-Par} &
%		U\to U'
%		\Longrightarrow 
%		U\pp U'' \to U'\pp U''
%       \\[1mm]      
%	\Did{U-SubStruct}&	{U\preceq U_1,\ U_1\to U_2,\ U_2\preceq U'}
%								\Longrightarrow
%								{U\to U'}
%      \\[1mm]
%\end{array}
%\end{displaymath}
%The reduction relation intuitively generalizes to typing contexts $\Gamma$.
%\end{definition}
%
\noindent
The definition of reliable usages depends on a reduction relation on usages, noted $U\to U'$.
Intuitively, $U\to U'$
 means that if a channel of usage $U$ is used for communication, then
after the communication occurs, the channel should be used according to usage $U'$.
Thus, e.g.,  $\wn^{\ob}_{\ca}.U_1 \pp \wn^{\ob'}_{\ca'}.U_2$
		reduces to 
		$U_1\pp U_2$.
		
\begin{definition}[Reliability]\label{d:reli}
We write $\conpar{\alpha}{U}$ when $\obs{\ov\alpha}{U} \leq \caps{\alpha}{U}$.
We write $\con{U}$ when  $\conpar\wn U$ and $\conpar\oc U$ hold.
Usage $U$ is \emph{reliable}, noted $\rel U$, if $\con {U'}$ holds 
$\forall U'$ such that $U\mto U'$.
\end{definition}

\paragraph{Typing Rules.}
The typing rules for the standard \p calculus with usage types are given in Figure~\ref{f:pityping}.
The only difference with respect to the rules in Kobayashi's  systems~\cite{K02,K06} is that we annotate typing judgements with the degree of sharing, explicitly stated in rule \Did{T$\pi$-Par$_n$}--see below.
%All other rules and associated notions are from~\cite{K02,K06}. 
Rule~\Did{T\p Nil} states that the terminated process is typed under a terminated channel.
Rule~\Did{T\p Res} states that  process $\res xP$ is well-typed if the usage for $x$
 is reliable  (cf. Definition~\ref{d:reli}).
Rules \Did{T\p In} and \Did{T\p Out} type input and output processes in a typing context where the ``$\semi$'' operator is used in order to
increase the obligation level of the channels   in  continuation $P$.
Rules \Did{T\p LVal} and \Did{T\p Case} type a choice: the first types a variant value with a variant type; the second types a case process using a variant value as its guard.

Given a degree of sharing $n$, rule~\Did{T$\pi$-Par$_n$} %is the most important typing rule. It 
states that the parallel composition of processes $P$ 
and $Q$ 
(typable under contexts $\Gamma_1$ and $\Gamma_2$, respectively)
is well-typed under the  typing context $\Gamma_1 \pp \Gamma_2$ only if 
%parallel composition of their corresponding typing contexts $\Gamma_1$ and $\Gamma_2$ if 
$|{\Gamma_1} \cap {\Gamma_2} | \leq n$.
This allows to simply characterize the ``concurrent cooperation'' between $P$ and $Q$.
As a consequence, 
if $P \vdash^{n}_{\mathtt{KB}}$ then $P \vdash^{k}_{\mathtt{KB}}$, for any $k \leq n$.
Observe that the typing rule for parallel composition in~\cite{K02,K06} is the same as \Did{T$\pi$-Par$_n$}, except for  
condition $|{\Gamma_1} \cap {\Gamma_2} | \leq n$, which is not specified.

The next theorems imply that well-typed processes by the type system in Figure~\ref{f:pityping} are deadlock-free.
\begin{theorem}[Type Preservation for Usage Types]
If $\Gamma \lf P$ and $P\to Q$, then
$\Gamma'\lf Q$ for some $\Gamma'$ such that $\Gamma\to \Gamma'$.
\end{theorem}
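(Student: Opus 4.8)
The plan is to prove type preservation for the usage-type system of Figure~\ref{f:pityping} by induction on the derivation of the reduction $P \to Q$, following closely Kobayashi's subject-reduction argument~\cite{K02,K06} and checking that the only new ingredient---the degree-of-sharing annotation $n$ in rule \Did{T$\pi$-Par$_n$}---is preserved. Since reduction is defined by the three rules \Did{R$\pi$-Com}, \Did{R$\pi$-Case}, \Did{R$\pi$-Res}, together with a congruence rule, the proof splits into the communication base case, the case construct, the closure under restriction, and the closure under structural congruence (for which one first establishes that $\Gamma \lf P$ and $P \equiv P'$ imply $\Gamma' \lf P'$ for some $\Gamma'$ with $\Gamma \equiv \Gamma'$, in fact $\Gamma = \Gamma'$ since the relevant $\equiv$-laws do not touch usages).

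First I would treat the communication case $\res{x}(\ov{x}\out{\wt v}.Q_1 \pp x\inp{\wt z}.Q_2) \to \res{x}(Q_1 \pp Q_2\substj{\wt v}{\wt z})$. Inverting \Did{T$\pi$-Res} gives a reliable usage for $x$, and inverting \Did{T$\pi$-Par$_n$} and the two prefix rules \Did{T$\pi$-In}, \Did{T$\pi$-Out} exposes contexts of the shape $x:[\wt T]\oc^{o_1}_{c_1} \semi \Gamma_1$ and $x:[\wt T]\wn^{o_2}_{c_2} \semi \Gamma_2$ for the continuations, whose composition yields a usage on $x$ that reduces (in the sense $U \to U'$ recalled before Definition~\ref{d:reli}) to the residual usage after stripping the matched $\oc/\wn$ prefixes. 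The key facts to invoke here are: (i) reliability is preserved under $\to$ on usages (immediate from Definition~\ref{d:reli}, since $U \to U'$ and $U \mto U''$ imply $U \mto U''$); (ii) the obligation-lifting performed by ``$\semi$'' on the tails $\Gamma_1,\Gamma_2$ is compatible with re-assembling the continuations under \Did{T$\pi$-Par$_n$}; and (iii) the substitution lemma $\Gamma, \wt z:\wt T \lf Q_2$ and consistent typing of $\wt v$ imply $\Gamma' \lf Q_2\substj{\wt v}{\wt z}$ with $\Gamma'$ updating the entries of $\wt v$ by $\pp$-composition with $\wt T$. Crucially, $\fn{Q_1} \cap \fn{Q_2\substj{\wt v}{\wt z}}$ is no larger than $\fn{\ov{x}\out{\wt v}.Q_1} \cap \fn{x\inp{\wt z}.Q_2}$ up to the now-consumed name $x$, so the side condition $|\Gamma_1' \cap \Gamma_2'| \leq n$ still holds and the annotation $n$ is unchanged. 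The \Did{R$\pi$-Case} base case is analogous but easier, using \Did{T$\pi$-LVal}/\Did{T$\pi$-Case} and the same substitution lemma on the selected branch $P_j$.

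The inductive cases are routine. For \Did{R$\pi$-Res}, inversion of \Did{T$\pi$-Res} gives $\Gamma, x:U \lf P$ with $\rel U$; the IH yields $\Gamma', x:U' \lf Q$ with $U \to^{?} U'$ (more precisely $\Gamma,x:U \to \Gamma',x:U'$), reliability of $U'$ follows as above, and \Did{T$\pi$-Res} closes the case. For the congruence/context closure, one pushes the reduction under \Did{T$\pi$-Par$_n$} or under the prefix rules and appeals to the IH, using that $\pp$ on contexts is associative-commutative and that lifting $\uparrow^t$ commutes with the relevant operations. The $\equiv$-closure uses the preliminary lemma mentioned above.

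The main obstacle I expect is the bookkeeping in the communication case: one must track exactly how the ``$\semi$'' operator and the $\uparrow^{c+1}$ lifting on the tail contexts interact when the two prefixes are removed, and verify that the usage on $x$ obtained by $\pp$-composing $[\wt T]\oc^{o_1}_{c_1}\semi\Gamma_1$ and $[\wt T]\wn^{o_2}_{c_2}\semi\Gamma_2$ genuinely reduces to a reliable usage that types $Q_1 \pp Q_2\substj{\wt v}{\wt z}$ under \Did{T$\pi$-Par$_n$}. This is precisely the heart of Kobayashi's original proof, so the honest move is to isolate the required substitution lemma and the ``usage-reduction preserves reliability and typability'' lemma as separate statements (with proofs deferred to the appendix~\cite{DardhaPerez15}), and then observe that the degree-of-sharing side condition only ever decreases or stays equal along reduction, hence adds nothing new to the argument.
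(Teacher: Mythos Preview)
The paper does not give a proof of this theorem at all: it is stated without argument, as a result inherited from Kobayashi's systems~\cite{K02,K06} (the paper explicitly notes that the only change to those systems is the side condition $|\Gamma_1 \cap \Gamma_2| \leq n$ in rule \Did{T$\pi$-Par$_n$}, and defers details to~\cite{DardhaPerez15}). So there is nothing in the paper to compare against beyond the implicit claim ``this is Kobayashi's subject reduction, and the new side condition does not break it.'' Your sketch is exactly that argument made explicit, and the overall strategy---induction on the reduction derivation, substitution lemma, reliability closed under usage reduction, and the observation that the degree-of-sharing bound is preserved---is the right one.

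One small inaccuracy worth fixing: in this calculus the communication rule \Did{R$\pi$-Com} is \emph{not} under a restriction; it is the bare
\[
\ov{x}\out{\wt v}.P \pp x\inp{\wt z}.Q \;\to\; P \pp Q\substj{\wt v}{\wt z}.
\]
Your base case wraps this in $\res{x}(\cdots)$, which conflates the communication step with the restriction-closure step \Did{R$\pi$-Res}. This matters for the statement you are proving: the conclusion is $\Gamma' \lf Q$ with $\Gamma \to \Gamma'$, and it is precisely in the unrestricted communication case that a genuine step $\Gamma \to \Gamma'$ occurs (the usage assigned to $x$ in $\Gamma$ reduces). If you hide $x$ under a binder from the outset, $x \notin \dom(\Gamma)$ and you lose the place where the context actually moves. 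So treat \Did{R$\pi$-Com} without the restriction as the base case, obtain the usage reduction on $x$'s entry there, and handle $\res{x}$ separately via \Did{R$\pi$-Res} as you already outline. With that adjustment your plan is sound.
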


\begin{theorem}[Deadlock Freedom]\label{t:dfk}
If $\emptyset \lf P$ and
either $P\equiv \res{\tilde{x}}(x\inp {\wt z}.Q \pp R )$ or $P\equiv \res{\tilde{x}}(\ov x\out {\wt v}.Q \pp R )$,
then $P\to Q$, for some $Q$.
\end{theorem}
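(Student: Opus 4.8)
This is essentially Kobayashi's deadlock\nobreakdash-freedom theorem for usage\nobreakdash-typed processes~\cite{K02,K06}, so the plan is to adapt that argument, checking that the features added here --- polyadic communication, variant values with \texttt{case}, and the degree\nobreakdash-of\nobreakdash-sharing annotation on judgements --- do not interfere. Since typing is preserved by $\equiv$, I may reason on a canonical representative, and it suffices to prove: \emph{if $\emptyset \lf P$ and $P$ has a communication prefix or a \texttt{case} at the top level, then $P\to$}. Using scope extrusion together with associativity and commutativity of $\pp$, I write $P \equiv \res{\wt w}(A_1\pp\cdots\pp A_m)$ where each $A_i$ is a prefixed process ($\ov x\out{\wt v}.P_i$ or $x\inp{\wt z}.P_i$), a \texttt{case} process, or $\nil$; since $P$ is closed, the subject of any top\nobreakdash-level prefix and the guard of any top\nobreakdash-level \texttt{case} lie in $\wt w$. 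If some $A_i$ is $\picase{\vv j v}{x_k}{P_k}$, then \Did{R\p Case} already gives $P\to$; a top\nobreakdash-level \texttt{case} whose guard is a bare name is impossible, because a restricted name carries a channel type, not a variant type, so \Did{T\p Case} cannot apply to it. Hence we may assume every top\nobreakdash-level $A_i$ is a prefix on some $y_i\in\wt w$, and the hypothesis guarantees there is at least one such component.

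Next I would extract, by inspecting the top applications of \Did{T\p Res} and \Did{T$\pi$-Par$_n$}, the following from the typing derivation: $\emptyset \lf \res{\wt w}(\prod_i A_i)$ yields $\prod_i A_i$ typed under a context assigning each $w\in\wt w$ a \emph{reliable} usage $U_w = H_1^w \pp \cdots \pp H_m^w$, where $H_i^w$ is the usage of $w$ in $A_i$; by \Did{T\p In}, \Did{T\p Out} and the definition of the ``$\semi$'' operator, $H_i^w = \alpha^{o}_{c}.U'$ (a single head action) when the head prefix of $A_i$ acts on $w$ with obligation $o$ and capability $c$, whereas $H_i^w = \uparrow^{c'+1}(\cdots)$ --- so all of its obligation levels are $\ge c'+1$ --- when the head prefix of $A_i$ acts on a channel $\neq w$ with capability $c'$, and $H_i^w = \emp$ when $A_i = \nil$. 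Now pick a top\nobreakdash-level prefix $\pi$, say an $\alpha$-action on $y$ with capability $c^\star$, such that $c^\star$ is minimal among the capabilities of all top\nobreakdash-level prefixes of $P$ (finitely many, and the set is nonempty). Since $U_y$ is reliable, Definition~\ref{d:reli} gives $\obs{\ov\alpha}{U_y}\le\caps{\alpha}{U_y}\le c^\star$, the last inequality because the component containing $\pi$ contributes to $U_y$ a head action with capability $c^\star$. By Definition~\ref{def:obligations} some component $A_j$ then has $\obs{\ov\alpha}{H_j^y}\le c^\star$, and this forces the head of $A_j$ to be a co\nobreakdash-action on $y$: if its head prefix acted on a channel $\neq y$ with capability $c'$, then $H_j^y$ would have all obligation levels $\ge c'+1 > c^\star$ (as $c'\ge c^\star$ by minimality), contradicting $\obs{\ov\alpha}{H_j^y}\le c^\star$; and $A_j$ is neither $\nil$ (which gives $\obs{\ov\alpha}{\emp}=\infty$) nor a \texttt{case} (already excluded). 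Thus $\pi$ and the head prefix of $A_j$ are two top\nobreakdash-level, matching prefixes on the same name $y$, so \Did{R\p Com} (with \Did{R\p Res} and structural congruence) yields $P\to Q$ for some $Q$.

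Finally, I would note that the degree\nobreakdash-of\nobreakdash-sharing annotation plays no role in the argument: it only restricts which parallel compositions are typable (via the side condition $|\Gamma_1\cap\Gamma_2|\le n$), while the decomposition $U_w = H_1^w\pp\cdots\pp H_m^w$ and its reliability are exactly as in~\cite{K02,K06}, so the proof is uniform in $n$. I expect the only delicate point to be the extraction lemma of the second paragraph, together with fixing the conventions for $\obs{\ov\alpha}{\cdot}$ and $\caps{\alpha}{\cdot}$ on heads of the ``wrong'' polarity and on $\emp$ so that the minimality argument is valid --- in essence, a re\nobreakdash-check of Kobayashi's key lemma for the polyadic, variant\nobreakdash-valued calculus considered here.
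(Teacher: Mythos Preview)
Your plan is correct and faithfully reconstructs Kobayashi's original argument: the paper itself does not give a proof of this theorem but states it as a known result imported from~\cite{K02,K06}, the only novelty being the degree-of-sharing side condition in \Did{T$\pi$-Par$_n$}. Your observation that this annotation is inert for the deadlock-freedom argument --- it merely constrains typability without altering the usage decomposition or reliability reasoning --- is exactly the point that justifies carrying the result over unchanged, so your proposal matches the paper's (implicit) proof.
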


\begin{corollary}
If $\emptyset \lf P$, then $P$ is deadlock-free, in the sense of Definition~\ref{def:lockpi}.
\end{corollary}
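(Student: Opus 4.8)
The plan is to obtain the corollary from Theorem~\ref{t:dfk} (deadlock freedom) and the type preservation theorem for usage types, by lifting the one–step statement of Theorem~\ref{t:dfk} to the behavioural formulation of Definition~\ref{def:lockpi}. Fix a fair reduction sequence $P_0 \to P_1 \to P_2 \to \cdots$ with $\emptyset \lf P_0$. Since the empty typing context contains no channels, it reduces only to itself, so type preservation gives $\emptyset \lf P_j$ for every $j$; in particular each $P_j$ is closed. I treat clause~1 of Definition~\ref{def:lockpi} (an exposed output); clause~2 (an exposed input) is symmetric, and the branching/selection cases of the original session-calculus definition are subsumed, since branching and selection are represented here via the case construct. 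So suppose $P_i \equiv \res{\tilde x}(\ov x\out{\wt v}.Q \pp R)$ for some $i$. Because $P_i$ is closed, $x$ is bound by $\res{\tilde x}$, and reliability of the usage of $x$ — forced by rule~\Did{T\p Res} — guarantees that $R$ contains a matching input on $x$, reachable behind only actions of lower level.

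The first observation is that the displayed output prefix \emph{persists} until it actually communicates: no reduction occurs under a prefix, so the continuation $Q$ is frozen, and inspecting the reduction rules shows that the only rule that can erase an output prefix on $x$ is \Did{R\p Com} acting on precisely that prefix, while the congruence rule \Did{R\p Res}, the rule \Did{R\p Par}, and structural congruence only relocate it. Hence, as long as it has not fired, $P_j \equiv \res{\tilde x}(\ov x\out{\wt v}.Q \pp R_j)$ for all $j \geq i$ (pulling restrictions outwards by scope extrusion; the tuple $\tilde x$ may grow along the sequence), and when it does fire at some step $n$, rule \Did{R\p Com} forces $P_n \equiv \res{\tilde x}(\ov x\out{\wt v}.Q \pp x\inp{\wt z}.R_1 \pp R_2)$ and $P_{n+1} \equiv \res{\tilde x}(Q \pp R_1\substj{\wt v}{\wt z} \pp R_2)$, which is exactly the configuration demanded by Definition~\ref{def:lockpi}. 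It therefore remains only to show that the prefix cannot linger forever.

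Suppose, for contradiction, that it never fires. Then $P_j \equiv \res{\tilde x}(\ov x\out{\wt v}.Q \pp R_j)$ for every $j \geq i$, so by Theorem~\ref{t:dfk} each $P_j$ reduces and the sequence is infinite. Now I invoke the core lock-freedom argument of Kobayashi~\cite{K02,K06}, which applies verbatim: the rules of Figure~\ref{f:pityping} differ from his only by the degree-of-sharing annotation on \Did{T$\pi$-Par$_n$}, which is irrelevant here. Using reliability (Definition~\ref{d:reli}) and the obligation/capability levels (Definition~\ref{def:obligations}), one proves by well-founded induction on the multiset of obligation levels of the pending input/output actions on restricted names that, along a fair sequence, every such pending action is consumed after finitely many steps; the base case consists of actions of obligation $0$, which have no unfulfilled prerequisites, are thus continuously enabled once their co-action is exposed, and hence fire by fairness (transition fairness in the sense of~\cite{DBLP:journals/iandc/CostaS87}). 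Applying this to the output on $x$ contradicts the assumption, which completes the proof.

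The type-preservation bookkeeping, closedness, and the prefix-persistence lemma are routine. The genuine obstacle is the non-starvation step of the third paragraph: one must exclude the scenarios in which reductions continue forever \emph{elsewhere} in the process, or in which the matching input on $x$ is itself perpetually postponed behind actions of lower obligation level. This is precisely where the interplay between reliability, the obligation-level induction, and the exact notion of fairness is needed, and I would isolate it as a separate lemma — essentially a re-derivation of Kobayashi's lock-freedom theorem in the present notation — before concluding the corollary.
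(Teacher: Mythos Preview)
Your argument is sound and follows precisely the route the paper intends: the paper states this corollary without proof, treating it as immediate from Theorem~\ref{t:dfk}, type preservation for usage types, and the lock-freedom development of~\cite{K02,K06} (the typing rules here differ from Kobayashi's only by the degree-of-sharing side condition, which is irrelevant to this result). Your elaboration --- type preservation along the sequence, persistence of the exposed prefix, and the appeal to Kobayashi's well-founded induction on obligation levels under fairness --- is exactly the implicit reasoning being deferred to, and you correctly identify the non-starvation step as the only place where real work is needed.
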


%{It is worth noticing how 
{Theorem~\ref{t:progress} (progress for the linear logic system) 
and Theorem~\ref{t:dfk} (deadlock freedom for the standard \p calculus) have a rather  similar formulation:   both properties 
state that processes 
can always reduce if they 
are well-typed (under the empty typing context) and have an appropriate structure (i.e., 
condition 
$live(P)$ in Theorem~\ref{t:progress} and 
condition 
$P\equiv \res{\tilde{x}}(x\inp {\wt z}.Q \pp R )$ or $P\equiv \res{\tilde{x}}(\ov x\out {\wt v}.Q \pp R )$
in Theorem~\ref{t:dfk}).}

\subsubsection{Encodings of Processes and Types}\label{sss:enco}
%%

%%
%\input fig-ctyping
%%
% !TEX root = express15dp.tex
\begin{figure*}[t]
\centering
$$\begin{array}{c}
%\inferrule[{\Did{T$\pi$-Var}}]{
%}{
% x:T \lf x:T
%}
%    % 
%    \qquad
%    % 
\inferrule[{\Did{T$\pi$-Nil}}]{
}{
	x: \emp [] \lf {\nil}
}
      \qquad
      %
%\inferrule[{\Did{T\p Fwd}}]{
%}{
%{x:T,y:\dual T\lf  \linkr{x}{y}}
%}
%    % 
%\qquad
    % 
\inferrule[{\Did{T$\pi$-Res}}]{
      \Gamma, x: U[\widetilde T] \lf P\quad \rel{U}
}{
      \Gamma \lf \res{x} P
}
\qquad
%
%   \begin{prooftree}
%      \Gamma_1, x:U_1[\widetilde T] \plf P\quad \Gamma_2, x:U_2[\widetilde T] \plf Q
%      \justifies 
%      \Gamma_1 , \Gamma_2 , x: (U_1 \pp U_2) [\widetilde T]\plf  P\pp Q
%      \using{\Did{T$\pi$-Par}}
%    \end{prooftree}
\inferrule[{\Did{T$\pi$-Par$_n$}}]{
	\Gamma_1 \lf P\and \Gamma_2 \lf Q
    \\\\  |{\Gamma_1} \cap {\Gamma_1} | \leq n
}{ 
      \Gamma_1 \pp \Gamma_2 \lf  P\pp Q
}
\qquad
\inferrule[{\Did{T$\pi$-In}}]{
      \Gamma, \widetilde y:\widetilde T\lf P
}{
      x:\wn^{0}_{\ca}[\widetilde T]\semi \Gamma \lf x\inp{\tilde y}.P
}
\\\\
\inferrule[{\Did{T$\pi$-Out}}]{
      \Gamma_1 \lf \tl v:  \widetilde T \and \Gamma_2 \lf P
}{
       x: \oc^{0}_{\ca}[\widetilde T] \semi (\Gamma_1\pp \Gamma_2) \lf {\ov x}\out{ \tl v}.P
}
\qquad
\inferrule[{\Did{T$\pi$-LVal}}]{
      \Gamma \lf v: T_j \quad \exists j\in I
}{
      \Gamma \lf   {l_j}\_v: \variant {l_i}{T_i}
}
    \qquad
\inferrule[{\Did{T$\pi$-Case}}]{
      \Gamma_1 \lf v:  \variant {l_i}{T_i} \\\\ 
      \Gamma_2, x_i:T_i \lf P_i \quad \forall i\in I
}{
      \Gamma_1 , \Gamma_2 \lf  \picase v{x_i}{P_i}
}
  \end{array}$$
  \vspace{-5mm}
  \caption{Typing rules for the \p calculus with usage types with degree of sharing $n$.}\label{f:pityping}
\end{figure*}

\paragraph{Encoding of Processes.}
To relate classes of processes obtained by the different type systems given so far,
we rewrite a session typed or C-typed process into a usage typed process
by following a continuation-passing style:
this allows us
%This is because the only way we have
to mimic the structure of a session  or C-type by sending its continuation as a payload over a channel.
This idea, suggested in~\cite{K07} and developed in~\cite{DGS12}, is recalled in Figure~\ref{d:encdgs}.

\begin{figure}
\begin{displaymath}
\begin{array}{llcl}
%	\encf{\nil}						&\defeq \ \nil
%\\
	\encf{\ov x\out v.P}				&\defeq \ \res c \ov{\f x}\out {v,{c}}.\encx P{c}
\\%		&\Did{E-Output}\\
	\encf{x\inp y.P}					&\defeq \ \f x\inp{y,c}.\encx P{c}
\\%		&\Did{E-Input}\\
 	 \encf{\selection x {l_j}.P}			&\defeq \  \res c \ov{\f x}\out {\vv{j}c}.\encx Pc
\\%		&\Did{E-Selection}\\
  \end{array}
  \qquad
\begin{array}{llcl}
	 \encf{\branching xlP}			&\defeq \ \f x \inp y.\ \picase {y}{c}{\encx {P_i}c}
\\%	&\Did{E-Branching}\\
	 \encf{\res{xy}P}				&\defeq \ \res {c} \enc P{c}
\\%		&\Did{E-Restriction}\\
	 \encf{P\pp Q}					&\defeq \ \encf{P} \pp \encf{Q}
\\%		&\Did{E-Composition}\\
  \end{array}
\end{displaymath}
\vspace{-5mm}
\caption{Encoding of session processes into $\pi$-calculus processes.}
\label{d:encdgs}
\end{figure}

\paragraph{Encoding of Types.}
%!TEX root=express15-dp.tex
%\section{Encoding of Types}
We 
formally relate 
session types and
logic propositions
to
usage types by means of the encodings
given in Figure~\ref{f:enctypes}.
The former one, denoted as denoted $\enco{\cdot}\su$, is taken from~\cite{DGS12}.

\begin{figure}[t]
\centering
\begin{minipage}{.45\textwidth}
\begin{eqnarray*}
\enco{\nilT}{\su} & = & \emp [] \\
\enco{\wn T.S}{\su} & = & \inpuse{\ob}{\ca} [\enco {T}\su, \enco S\su]\\
\enco{\oc T.S}{\su} & = &\outuse{\ob}{\ca} [\enco{T}{\su} ,\enco{\dual S}{\su}]\\
\enco{\branch lS}{\su} & = & \inpuse \ob\ca [\variant {l_i}{\enco {S_i}\su}]\\
\enco{\select lS}{\su} & = &  \outuse \ob\ca [\variant {l_i}{\enco{\dual{S_i}}\su}]
\end{eqnarray*}
\end{minipage}
\begin{minipage}{.45\textwidth}
\begin{eqnarray*}
\enco{\nilT}{\ct} & = & \bullet \\
\enco{\wn T.S}{\ct} & = &  \enco{{T}}{\ct} \parl \enco{S}{\ct}\\
%\enco{\wn \unitT.S}{\ct} & = &  \bot \parl \enco{S}{\ct}\\
\enco{\oc T.S}{\ct} & = & \enco{\dual T}{\ct} \otimes \enco{S}{\ct}\\
%\enco{\oc \unitT.S}{\ct} & = &  \one \otimes \enco{S}{\ct}\\
\enco{\branch lS}{\ct} & = & \&\{l_i:\enco{S_i}\ct\}_{i\in I} \\
\enco{\select lS}{\ct} & = &  \oplus\{l_i:\enco{S_i}\ct\}_{i\in I} 
\end{eqnarray*}
\end{minipage}
%\begin{minipage}{.45\textwidth}
%\begin{eqnarray*}
%%\enco{\bot}{\lt} & = & \emp_\bot [] \\
%%\enco{\one}{\lt} & = & \emp_\one [] \\
%\enco{\bullet}{\lt} & = & \emp [] \\
%\enco{A \parl B}{\lt} & = &\inpuse{\ob}{\ca} [\enco { A}\lt, \enco B\lt] \\
%\enco{A \otimes B}{\lt} & = & \outuse{\ob}{\ca} [\enco{ \dual A}{\lt} ,\enco{\dual B}{\lt}\\
%\enco{\branch lA}{\lt} & = & \inpuse \ob\ca [\variant {l_i}{\enco {A_i}\lt}]\\
%\enco{\select lA}{\lt} & = & \outuse \ob\ca [\variant {l_i}{\enco{\dual{A_i}}\lt}]
%\end{eqnarray*}
%\end{minipage}
%\begin{minipage}{.45\textwidth}
%\begin{eqnarray*}
%%\enco{\bot}{\st} & = & \nilT \\
%%\enco{\one}{\st} & = & \nilT \\
%\enco{\bullet}{\st} & = & \nilT \\
%\enco{A \parl B}{\st} & = & \wn \enco{{A}}{\st}. \enco{B}{\st} \\
%\enco{A \otimes B}{\st} & = & \oc \enco{\dual A}{\st}. \enco{B}{\st}\\
%\enco{\branch lA}{\st} & = & \&\{l_i:\enco{A_i}\st\}_{i\in I}  \\
%\enco{\select lA}{\st} & = & \oplus\{l_i:\enco{A_i}\st\}_{i\in I}
%\end{eqnarray*}
%\end{minipage}
\caption{Encodings of session types into usage types (Left) and C-types (Right).
\label{f:enctypes}}
\end{figure}

\begin{definition}
\label{def:enc_env_su}
\label{def:enc_env_sc}
Let $\Gamma$ be a session typing context. The encoding $\encf{\cdot}$ into
usage typing context  and $\enco{\cdot}{\ct}$ into C-typing context is inductively defined as follows:
\[
\encf{\emp}=\enco{\emp}{\ct} \defeq \emp
\qquad 
\encf{\Gamma,x:T}\defeq \encf{\Gamma} , \f x:\enco{T}{\su}
\qquad
\enco{\Gamma,x:T}{\ct}	\defeq \enco{\Gamma}{\ct}, x:\enco{T}{\ct}
\]
\end{definition}
%
%\begin{definition}\label{def:enc_env_sc}
%Let $\Gamma$ be a session typing context. The encoding $\enco{\cdot}{\ct}$
%into C-typing context is inductively defined as follows:
%\begin{align*}
%\enco{\emp}{\ct} 		\defeq \emp \qquad
%\enco{\Gamma,x:T}{\ct}	\defeq \enco{\Gamma}{\ct}, x:\enco{T}{\ct}
%\end{align*}
%\end{definition}
\begin{lemma}[Duality and encoding of session types]\label{lem:dualenc}
Let $T,S$ be finite session types. \\ Then:
(i)~ $\dual T =  S$  if and only if $\dual{\enco{T}{\ct}} = \enco{S}{\ct}$;
(ii)~$\dual T =  S$  if and only if $\dual{\enco{T}{\su}} = \enco{S}{\su}$.
%
%\begin{itemize}
%\item
%$\dual T =  S$  if and only if $\dual{\enco{T}{\ct}} = \enco{S}{\ct}$.
%
%\item
%$\dual T =  S$  if and only if $\dual{\enco{T}{\su}} = \enco{S}{\su}$.
%\end{itemize}
\end{lemma}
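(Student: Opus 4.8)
The statement is an "if and only if" about preservation of duality under the two type encodings $\enco{\cdot}{\ct}$ and $\enco{\cdot}{\su}$ from Figure~\ref{f:enctypes}. The natural approach is a structural induction on the session type $T$, which is legitimate here because the lemma restricts attention to \emph{finite} session types (no recursive types), so the grammar $T,S ::= \nilT \midd \wn T.S \midd \oc T.S \midd \branch lS \midd \select lS$ is well-founded. For each of the two parts (i) and (ii), I would prove the two directions of the biconditional together by showing the slightly stronger statement that $\enco{\dual T}{\ct} = \dual{\enco{T}{\ct}}$ (and likewise for $\su$) holds \emph{unconditionally} for every finite $T$; the biconditional then follows, since for the forward direction we substitute $S = \dual T$, and for the backward direction we use that duality on session types is an involution ($\dual{\dual T} = T$) together with injectivity of the encodings on finite types.

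\textbf{Key steps, in order.} First I would record the two auxiliary facts used implicitly above: (a) $\dual{\dual{T}} = T$ for all finite session types $T$, a trivial induction on $T$ using the definition of session-type duality; and (b) each of $\enco{\cdot}{\ct}$, $\enco{\cdot}{\su}$ is injective on finite session types, again a routine induction (the head constructor of the target type determines the head constructor of the source type, and the payload/continuation components are recovered by the induction hypothesis). Second, I would prove the commutation $\enco{\dual{T}}{\ct} = \dual{\enco{T}{\ct}}$ by induction on $T$, checking one case per production. For instance, in the $\oc T'.S'$ case: by definition $\dual{\oc T'.S'} = \wn T'.\dual{S'}$, so $\enco{\dual{\oc T'.S'}}{\ct} = \enco{T'}{\ct} \parl \enco{\dual{S'}}{\ct}$, which by the induction hypothesis on $S'$ equals $\enco{T'}{\ct} \parl \dual{\enco{S'}{\ct}}$; on the other side, $\dual{\enco{\oc T'.S'}{\ct}} = \dual{\enco{\dual{T'}}{\ct} \otimes \enco{S'}{\ct}} = \dual{\enco{\dual{T'}}{\ct}} \parl \dual{\enco{S'}{\ct}}$, and applying the induction hypothesis to $T'$ (or the commutation plus involutivity) gives $\dual{\enco{\dual{T'}}{\ct}} = \enco{\dual{\dual{T'}}}{\ct} = \enco{T'}{\ct}$, so the two sides coincide. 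The $\wn T'.S'$ case is symmetric; the $\branch lS$ and $\select lS$ cases unfold the definitions of C-type duality on $\&$ and $\oplus$ and apply the induction hypothesis componentwise over the index set $I$; the $\nilT$ case is immediate since $\dual{\nilT} = \nilT$ and $\dual{\bullet} = \bullet$. Third, I would repeat this induction for $\su$, where the cases are analogous but the relevant identities are those of usage-type duality ($\dual{\wn^{\ob}_{\ca}.U[\wt T]} = \oc^{\ob}_{\ca}.\dual{U}[\wt T]$, etc.); note the encoding $\enco{\oc T.S}{\su} = \outuse{\ob}{\ca}[\enco{T}{\su}, \enco{\dual S}{\su}]$ already places a dualized continuation in the payload, so the bookkeeping in the output/selection cases relies again on fact (a).

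\textbf{Main obstacle.} The only genuinely delicate point is the mismatch in how the two encodings treat continuations: $\enco{\oc T.S}{\su}$ stores $\enco{\dual S}{\su}$ (not $\enco{S}{\su}$) in the tail position, and similarly $\enco{\oc T.S}{\ct} = \enco{\dual T}{\ct} \otimes \enco{S}{\ct}$ dualizes the \emph{payload} rather than the continuation. Getting the induction hypothesis to line up in these asymmetric cases — making sure that every application of the commutation identity is to a strictly smaller type and that the nested duals collapse correctly via $\dual{\dual{(\cdot)}} = \mathrm{id}$ — is where care is needed; everything else is bookkeeping. I expect no essential difficulty beyond this, and the proof should be short once the commutation lemmas are stated cleanly.
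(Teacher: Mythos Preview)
The paper does not include a proof of this lemma in the body; it is one of the results whose proof is deferred to the online appendix~\cite{DardhaPerez15}. Your approach---proving the unconditional commutation identities $\enco{\dual T}{\ct} = \dual{\enco{T}{\ct}}$ and $\enco{\dual T}{\su} = \dual{\enco{T}{\su}}$ by structural induction on the finite session type $T$, then deriving the biconditional via involutivity of duality and injectivity of the encodings---is the standard one and is correct. The case analysis you sketch is accurate, including the point about the asymmetric treatment of payload versus continuation in the $\oc$ cases; in particular, the $\su$ encoding is set up precisely so that the commutation holds \emph{without} invoking the induction hypothesis in the input/output cases (since usage-type duality leaves payloads untouched and the encoding already places $\enco{\dual S}{\su}$ in the continuation slot for $\oc T.S$), while the $\ct$ encoding does require the induction hypothesis on the payload component together with involutivity of C-type duality, exactly as you describe. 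The injectivity fact you rely on for the backward direction is unproblematic because distinct session-type constructors map to distinct head constructors (for $\su$, the branch/select cases are distinguished from input/output by the payload being a variant rather than a pair).
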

%\begin{proof}
%The proof is by induction on the duality relation of session types.
%\end{proof}
%%
%%
%%
%\begin{lemma}[Duality and encoding of C-types]\label{lem:dualenc_cp}
%Let $A,B$ be finite C-types. Then,
%\begin{itemize}
%\item
%$\dual A =  B$  if and only if $\dual{\enco{A}{\st}} = \enco{B}{\st}$.
%
%\item
%$\dual A =  B$  if and only if $\dual{\enco{A}{\lt}} = \enco{B}{\lt}$.
%\end{itemize}
%\end{lemma}
%\begin{proof}
%The proof is by induction on the duality relation of session types.
%\end{proof}

\paragraph{On Deadlock Freedom by Encoding.}
The next results relate deadlock freedom, typing and encoding.
\begin{proposition}
Let $P$ be a deadlock-free session process, then
$\encf P$ is a deadlock-free \p process.
\end{proposition}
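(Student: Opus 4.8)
The plan is to derive deadlock freedom of $\encf P$ (in the sense of Definition~\ref{def:lockpi}) from deadlock freedom of $P$ (Definition~\ref{def:lock}) by pushing both properties through the operational correspondence for the encoding of Figure~\ref{d:encdgs}, which is established in~\cite{DGS12}. First I would record that correspondence in the form most convenient here. \emph{Completeness}: if $P \to Q$ then $\encf P \to \encf Q$ up to $\equiv$, where the single session reduction is matched by a single $\pi$-reduction that merely carries a fresh continuation name as an extra payload. \emph{Soundness}: if $\encf P \to R$ then there is $Q$ with $P \to Q$ and $R \equiv \encf Q$, and this $\pi$-reduction consumes exactly the image of one session prefix. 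Alongside these I would isolate a \emph{shape lemma}: after pushing the continuation-name restrictions outward by scope extrusion, every input/output prefix occurring unguarded in $\encf P$ is the image of a corresponding unguarded session prefix of $P$, acting on the ``current'' name of some endpoint (a restricted name shared between at most two parallel subprocesses, just like a session endpoint); dually, consuming such a $\pi$-prefix is matched by performing the corresponding session communication. For selection and branching, the encoded variant output and the input-plus-case pair play the roles of the selection and branching prefixes.

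Given a fair reduction sequence $\encf P = R_0 \to R_1 \to R_2 \to \cdots$, I would apply soundness step by step to extract a session sequence $P = P_0 \to P_1 \to P_2 \to \cdots$ with $R_i \equiv \encf{P_i}$ for all $i$. Because the correspondence is one-reduction-to-one-reduction and, by the shape lemma, the enabled session reductions along this sequence are in bijection with the enabled $\pi$-reductions along $R_0 \to R_1 \to \cdots$, the extracted session sequence is itself fair. Now suppose $R_i \equiv \res{\tilde x}(\ov x\out{\wt v}.Q \pp R)$. By the shape lemma this unguarded output is the image of an unguarded session output of $P_i$, so $P_i \equiv \res{\widetilde{ab}}(\ov e\out w.Q' \pp R')$ for some endpoint $e$; applying Definition~\ref{def:lock}(1) to $P$ along the fair sequence $P_0 \to P_1 \to \cdots$ yields an index $n \ge i$ at which this output meets its co-action and is consumed, and by completeness the matching $\pi$-reduction $R_n \to R_{n+1}$ of the given sequence consumes the encoded output, which is precisely the requirement of Definition~\ref{def:lockpi}(1). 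The clause for an unguarded input is handled symmetrically, using the input/selection clauses of Definition~\ref{def:lock} (whose symmetric input and branching cases are omitted there only ``for simplicity'' and are available).

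The main obstacle is making ``operational correspondence'' precise enough to support the fairness transfer. The encoding continually spawns fresh continuation names that must be $\alpha$-renamed and scope-extruded in order to fit the $\res{\tilde x}(\cdots)$ normal forms of Definition~\ref{def:lockpi}, and one must verify that no $\pi$-reduction becomes spontaneously enabled or disabled without a corresponding change on the session side; otherwise a fair $\pi$-schedule need not project to a fair session schedule, and the reduction promised by Definition~\ref{def:lock} could fail to be schedulable \emph{inside} the given $\pi$-sequence. Nailing down the shape lemma --- in particular that the only unguarded prefixes of a reachable $\encf P$ are images of session prefixes and never ``orphaned'' communications on a stale continuation name --- is where the real work lies; once that is in place, the reduction to Definition~\ref{def:lock} is routine.
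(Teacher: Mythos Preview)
Your plan is correct and is essentially a careful unpacking of the paper's own proof, which is a single sentence: ``Follows by the encoding of terms given in Figure~\ref{d:encdgs}, Definition~\ref{def:lock} and Definition~\ref{def:lockpi}.'' The paper simply appeals to the encoding and the two definitions, leaving the operational-correspondence and fairness-transfer arguments implicit (they are established in~\cite{DGS12}); you have spelled out exactly those ingredients. One small imprecision worth tightening: the correspondence is not literally one-reduction-to-one-reduction in the selection/branching case, since the encoded branching is an input followed by a separate $\mathsf{case}$ reduction, so a single session step $\Did{R\text{-}Case}$ is matched by two $\pi$-steps. You already flag the ``input-plus-case pair'', so this does not break your argument, but your fairness bookkeeping must treat that administrative $\mathsf{case}$ step as internal (it neither introduces nor consumes a prefix relevant to Definition~\ref{def:lockpi}) rather than as the image of a session reduction.
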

\begin{proof}
Follows
by the encoding of terms given in Figure~\ref{d:encdgs}, Definition~\ref{def:lock} and Definition~\ref{def:lockpi}.
\end{proof}
Next we recall an important result relating deadlock freedom and typing, by following~\cite{CDM14}.
\begin{corollary}
%Let $P$ be a session process such that $\s P$.
Let $\s P$ be a session process.
If $\lf \encf P$ is deadlock-free then $P$ is deadlock-free.
\end{corollary}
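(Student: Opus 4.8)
The plan is to prove the contrapositive chain carefully, relying on the preceding Proposition and the known results for the two type systems. Concretely, assume $\vdash_{\su} \encf{P}$ (i.e.\ $\encf{P}$ is well-typed in the usage type system under the empty context) and that $\encf P$ is deadlock-free in the sense of Definition~\ref{def:lockpi}; the goal is to conclude that $P$ is deadlock-free in the sense of Definition~\ref{def:lock}. The statement as written is slightly terse---I read ``$\lf \encf P$ is deadlock-free'' as the conjunction ``$\encf P$ is well-typed (hence by the Corollary in \S\ref{ss:dgs} deadlock-free) and one exploits this to reflect the property back along the encoding''---so the first step is to fix this reading explicitly and note that, by Theorems~\ref{thm:subj_red} and the encoding of types (Definition~\ref{def:enc_env_su} and Lemma~\ref{lem:dualenc}), $\s P$ together with well-typedness of $\encf{P}$ are mutually consistent, so no typing obstruction arises.

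The core argument is a \emph{reflection} of reduction sequences through the encoding. First I would establish an operational correspondence for $\encf{\cdot}$: each session reduction $P \to P'$ (via \Did{R-Com} or \Did{R-Case}) is matched by one or more $\pi$-reductions $\encf{P} \to^{+} \encf{P'}$ using \Did{R$\pi$-Com}/\Did{R$\pi$-Case}, and conversely every $\pi$-reduction from $\encf{P}$ arises this way (this is essentially the operational correspondence of~\cite{DGS12}, which I may invoke). Then, given a fair reduction sequence $P_0 \to P_1 \to \cdots$ witnessing that $P = P_0$ is \emph{not} deadlock-free---say a prefix $\ov x\out v.Q$ becomes active at $P_i$ and is never consumed---I translate it into a fair reduction sequence $\encf{P_0} \to^{*} \encf{P_1} \to^{*} \cdots$ in which the encoded prefix $\ov{\f x}\out{v,c}.\encx{Q}{c}$ becomes active and is never consumed (here I use that the encoding of an output prefix is again an output prefix on $\f x$, and that consuming it in the $\pi$-calculus corresponds exactly to consuming the original session prefix). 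This contradicts deadlock freedom of $\encf{P}$ as given by Definition~\ref{def:lockpi}, clause~1. The branching/selection case is symmetric, using the $\picase{\cdot}{\cdot}{\cdot}$ construct and clause~2.

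I would organize the steps as: (1) fix the reading of the hypothesis and record the compatibility of $\s P$ with $\vdash_{\su}\encf P$; (2) state the operational correspondence for $\encf{\cdot}$ (forward and backward simulation), citing~\cite{DGS12}; (3) prove that fairness is preserved in both directions of the translation of reduction sequences---a fair session sequence maps to a fair $\pi$-sequence and vice versa, modulo the fixed ``macro-step'' decomposition each session step induces; (4) show that an unconsumed active session prefix in $P$'s sequence yields an unconsumed active $\pi$-prefix in $\encf{P}$'s sequence, and invoke Definition~\ref{def:lockpi} to derive the contradiction; (5) conclude by contraposition, using the Corollary of \S\ref{ss:dgs} to justify that the well-typedness assumption indeed delivers deadlock freedom of $\encf{P}$ to be contradicted.

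The main obstacle I anticipate is step~(3): reconciling the two notions of fairness across an encoding that is not step-for-step but introduces auxiliary communications (the continuation-passing restrictions $\res c$). One must check that the extra $\pi$-reductions introduced by the encoding are ``administrative'' and do not disturb which prefixes count as persistently enabled, and that a fair $\pi$-schedule restricts to a fair session schedule. A secondary subtlety is matching the structural-congruence massaging in Definition~\ref{def:lock} (the $\res{\wt{x'y'}}(\cdots)$ forms) with that in Definition~\ref{def:lockpi}; since the encoding commutes with the relevant structural congruence and with restriction ($\encf{\res{xy}P} = \res{c}\enc{P}{c}$), this should go through, but it requires care to state precisely. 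Everything else---subject reduction, type compatibility, the shape of encoded prefixes---is routine given the results already in the excerpt.
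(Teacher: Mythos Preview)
The paper does not actually give a proof of this corollary: it is introduced by the sentence ``Next we recall an important result relating deadlock freedom and typing, by following~\cite{CDM14}'' and then simply stated, so there is no in-paper argument to compare your proposal against line by line. Your reading of the terse hypothesis---that $\lf \encf P$ supplies well-typedness of the encoding, whence deadlock freedom of $\encf P$ via Theorem~\ref{t:dfk} and its corollary, which is then to be reflected back to $P$---is the intended one.

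Your plan is correct and is precisely the shape of the argument in the cited work: the reflection of deadlock freedom along $\encf{\cdot}$ follows from the operational correspondence for the encoding established in~\cite{DGS12} (your step~(2)) together with the observation that an unconsumed active session prefix in $P$ yields an unconsumed active $\pi$-prefix in $\encf P$ (your step~(4)). The point you flag as the main obstacle, step~(3), is indeed the only delicate one, but it goes through: the extra reductions introduced by the continuation-passing translation are deterministic administrative steps on fresh restricted names, so they cannot indefinitely preempt any enabled redex, and hence a fair schedule on one side induces a fair schedule on the other. Everything else you list (subject reduction, commutation with structural congruence and restriction, shape of encoded prefixes) is, as you say, routine given the results already available.
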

%%
%\input fig-fullkoba

%%%%%%%%%%%%%%%%%%%%%%%%%%%%%
% HIERARCHY
%%%%%%%%%%%%%%%%%%%%%%%%%%%%%

\section{A Hierarchy of Deadlock-Free Session Typed Processes}\label{s:hier}
\paragraph{Preliminaries.}
To formally define the classes   \lcp and \fullKoba, we require some auxiliary definitions.
%As before, we let $\bout{x}{y}.P$ denote bound output.
The following translation addresses 
minor syntactic differences between session typed processes (cf. \S\,\ref{s:sessions}) 
and the processes typable in the linear logic interpretation of session types (cf. \S\,\ref{ss:lf}).
Such differences concern output actions and the restriction operator:

\begin{definition}\label{d:trans}
Let $P$ be a session process. The translation $\chr{\cdot}$ is defined as
$$
\begin{array}{rclcrcl}
\chr{\overline x\out y.P} & = & \bout{x}{z}.(\linkr{z}{y} \pp \chr{P}) & \qquad & 
\chr{\res{xy}P} & = & \res{w}\chr{P}\substj{w}{x}\substj{w}{y} \quad w \not\in \fn{P}
\end{array}
$$
and as an homomorphism for the other process constructs.
\end{definition}

\noindent 
Let $\enco{\cdot}{\ct}$ denote the encoding of session types
into linear logic propositions in Figure~\ref{f:enctypes} (right).
Recall  that $\encf{\cdot}$ stands for the encoding of processes and $\enco{\cdot}{\su}$ for the encoding of types,
both defined in~\cite{DGS12},
and given here in Figure~\ref{d:encdgs} and Figure~\ref{f:enctypes} (left), respectively.
We may then formally define the languages under comparison as follows:

\begin{definition}[Typed Languages]\label{d:lang}
The languages \lcp and \Koba{n}($n \geq 0$) are defined as follows:
\begin{eqnarray*}
\lcp  & = &   \big\{P \mid \exists \Gamma.\ (\Gamma \s P \,\land\, \chr{P} \cp \enco{\Gamma}{\ct}) \big\} \\
\lkoba & = & \big\{  P \mid \exists \Gamma,f.\ (\Gamma \s P \,\land\, \encf \Gamma \lf \encf P) \big\} 
%\lpkoba & = & \big\{  P \mid \exists \Gamma,f.\ (\Gamma \s P \,\land\, \encf \Gamma \plf \encf P) \big\} 
\end{eqnarray*}
\end{definition}

\paragraph{Main Results.} 
Our first observation is that there are processes in $\Koba{2}$ but not in $\Koba{1}$:
\begin{lemma}
\label{lem:K1inK2}
 $\lpkoba \subset \Koba{2}$.
 \end{lemma}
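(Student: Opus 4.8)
The plan is to establish the inclusion $\Koba{1} \subseteq \Koba{2}$ and then exhibit a concrete witness process that lies in $\Koba{2} \setminus \Koba{1}$. The inclusion is immediate from the definition of the degree of sharing: a derivation that uses rule $\Did{T$\pi$-Par$_n$}$ only with the side condition $|\Gamma_1 \cap \Gamma_2| \leq 1$ is \emph{a fortiori} a valid derivation when the bound is relaxed to $2$, so $P \vdash^{1}_{\mathtt{KB}} \encf{P}$ implies $P \vdash^{2}_{\mathtt{KB}} \encf{P}$; composing with the common session-typing premise $\Gamma \s P$ from Definition~\ref{d:lang} gives $\Koba{1} \subseteq \Koba{2}$. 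So the real content is \emph{strictness}.

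For strictness I would take the classic ``cyclic but deadlock-free'' process, suitably set up so that two parallel components genuinely share two sessions. Concretely, consider a process of the form
\[
P \defeq \res{x_1y_1}\res{x_2y_2}\big(\ov{x_1}\out{\unit}.y_2\inp{s} \pp \ov{x_2}\out{\unit}.y_1\inp{t}\big),
\]
or a minor variant thereof. The two parallel components share the two sessions $x_1y_1$ and $x_2y_2$, so any Kobayashi typing of $\encf{P}$ must apply $\Did{T$\pi$-Par$_n$}$ with $|\Gamma_1 \cap \Gamma_2| = 2$; hence $P \notin \Koba{1}$, while a careful assignment of obligation/capability levels (the left component discharges its $\ov{x_1}$ obligation before blocking on $y_2$, the right discharges $\ov{x_2}$ before blocking on $y_1$, so the cyclic dependency is broken by ordering) shows $\encf{P} \vdash^{2}_{\mathtt{KB}}$, and $P$ is plainly session-typable, giving $P \in \Koba{2}$. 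One must double-check that $P$ is indeed deadlock-free (it reduces: $\ov{x_1}\out\unit$ synchronizes with... — actually here one should pick the endpoints so that reduction is possible, e.g.\ the component blocked on $y_2$ is unblocked by an output on $x_2$; I would finalize the exact pairing when writing the proof so that $P$ reduces and terminates), and that the Kobayashi usage annotations obtained through the encoding of types in Figure~\ref{f:enctypes} are reliable in the sense of Definition~\ref{d:reli}.

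The main obstacle I expect is the bookkeeping on the usage side: one must produce explicit obligation and capability levels for the four endpoints such that (i) the ``$\semi$'' operator and the $\uparrow^{\,t}$ lifting in rules $\Did{T$\pi$-In}$/$\Did{T$\pi$-Out}$ are satisfied along the nested prefixes of each component, and (ii) the composed usage on each shared channel is reliable. This is precisely the kind of level assignment that Kobayashi's system is designed to accommodate for acyclic-after-reordering processes, so it goes through, but it requires care to present compactly. A secondary point is arguing the \emph{negative} direction rigorously: that \emph{no} typing derivation of $\encf{P}$ can place the two components in separate parallel branches each sharing at most one channel — this follows because both free shared names $\f{x_i}, \f{y_i}$ occur in both encoded components, so their typing contexts must overlap on both, and $\Did{T$\pi$-Par$_n$}$ is the only rule combining two processes; hence the intersection has size exactly $2$ and $n \geq 2$ is forced. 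I would state this as a small auxiliary observation about which names appear free in $\encf{P}$ for each component, and conclude.
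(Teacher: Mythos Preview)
Your overall strategy---the monotonicity of $\vdash^{n}_{\mathtt{KB}}$ for the inclusion, plus a two-shared-session witness for strictness---is exactly the paper's approach. However, your concrete witness is deadlocked, and your informal justification for why Kobayashi's system would accept it is backwards.

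In your process
\[
P \defeq \res{x_1y_1}\res{x_2y_2}\big(\ov{x_1}\out{\unit}.y_2\inp{s} \pp \ov{x_2}\out{\unit}.y_1\inp{t}\big),
\]
both top-level prefixes are \emph{outputs}. Rule \Did{R-Com} requires an output on one endpoint and an input on its dual; here the partner of $\ov{x_1}$ is $y_1\inp t$, which is guarded by $\ov{x_2}$, and the partner of $\ov{x_2}$ is $y_2\inp s$, guarded by $\ov{x_1}$. Neither communication can fire: this is precisely the textbook cyclic deadlock, not its resolution. Consequently the usages you would obtain through $\enco{\cdot}{\su}$ cannot be made reliable---each output has obligation $0$ and demands a co-action of obligation $\leq$ its capability, but the $\semi$ operator lifts both inputs' obligations strictly above whatever capability you assign to the outputs, and the cycle prevents any consistent assignment. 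So $P\notin\Koba{2}$ either, and your witness proves nothing. Your sentence ``the left component discharges its $\ov{x_1}$ obligation before blocking on $y_2$, the right discharges $\ov{x_2}$ before blocking on $y_1$, so the cyclic dependency is broken by ordering'' describes the deadlock, not its avoidance: discharging an obligation requires a ready co-action, which neither side offers.

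The fix is what the paper does: make one component lead with an \emph{input} on the dual of the other's top-level output, e.g.
\[
P_2 \defeq \res{a_1b_1}\res{a_2b_2}\big(a_1\inp x.\,\ov{a_2}\out{x} \pp \ov{b_1}\out{\unit}.\,b_2\inp{z}\big).
\]
Now $a_1\inp x$ and $\ov{b_1}\out{\unit}$ synchronize immediately, and the residual does as well; the level assignment in Kobayashi's system is straightforward (sequentially increasing along each component), and reliability holds. Your argument for $P_2\notin\Koba{1}$ (both encoded components mention both shared channels, so any application of \Did{T$\pi$-Par$_n$} has $|\Gamma_1\cap\Gamma_2|=2$) then goes through unchanged.
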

 \begin{proof}
 $\Koba{2}$ contains (deadlock-free) session processes not captured in $\lpkoba$.
 A representative example is: 
% \begin{displaymath}
%  \begin{array}{ll}
%\!\! P_2 =    \res{a_1a_2}\res{b_1b_2}
%    \left(
%      \begin{array}{ll}
%        a_1\wn(x).\;\ov{b_1}\out{x}.\;b_1\wn(y).\;\ov{a_1}\out{y}
%        \pp
%        \ov{a_2}\out{\unit}.\;b_2\inp{z}.\;\ov{b_2}\out{\unit}.\;a_2\inp{z}
%      \end{array}
%    \right)
%  \end{array}
%\end{displaymath}
$$
P_2 =    \res{a_1b_1}\res{a_2b_2}(a_1\inp x.\;\ov{a_2}\out{x}
        \pp \ov{b_1}\out{\unit}.\;b_2\inp{z})
$$
%\odedit{Simplified $P_2$}
%%
%\begin{displaymath}
%P\defeq \res{xy}\res{wz}(\ov x\out \unit.\ov w\out \unit\pp y\inp s.z\inp t)
%\end{displaymath}
%
\noindent
This process is not in $\lpkoba$ because it
involves the composition of two parallel processes which share two sessions. 
As such, it is typable in $\lf$ (with $n \geq 2$) but not in $\plf$.
 \end{proof}
The previous result generalizes easily, so as to define a hierarchy of deadlock-free, session processes:
\begin{theorem}\label{t:strict}
For all $n \geq 1$, we have that $\Koba{n} \subset \Koba{n+1}$.
\end{theorem}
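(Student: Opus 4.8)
The plan is to generalize the witness process $P_2$ from Lemma~\ref{lem:K1inK2} to a family $\{P_n\}_{n \geq 1}$ such that $P_n \in \Koba{n+1}$ but $P_n \notin \Koba{n}$. The shape of $P_2$ already suggests the pattern: take two parallel components connected by $n+1$ distinct sessions, arranged so that the interleaving of prefixes is benign (i.e.\ the process is deadlock-free and typable in $\lf$ with degree of sharing $n+1$), but such that no type derivation can place the two components under a single \Did{T$\pi$-Par$_n$} application with $|\Gamma_1 \cap \Gamma_2| \leq n$. Concretely I would take
\begin{displaymath}
P_n \;=\; \res{a_1b_1}\cdots\res{a_{n+1}b_{n+1}}\bigl(a_1\inp x.\,\ov{a_2}\out x \cdots \ov{a_{n+1}}\out x \;\pp\; \ov{b_1}\out\unit.\,b_2\inp z_2 \cdots b_{n+1}\inp z_{n+1}\bigr),
\end{displaymath}
a straightforward chain generalization of $P_2$, where the left component receives on $a_1$ and then forwards along $a_2, \ldots, a_{n+1}$, and the right component outputs on $b_1$ and then inputs along $b_2, \ldots, b_{n+1}$. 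The reduction sequence is forced and terminating, so $P_n$ is deadlock-free.

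The key steps, in order, are as follows. First, exhibit a session typing $\Gamma \s P_n$: each pair $a_ib_i$ gets dual session types ($a_1 : \wn\nilT.\nilT$ against $b_1 : \oc\nilT.\nilT$, and so on), and \Did{T-Res} applied $n+1$ times closes the judgement under the empty context. Second, show $\encf{\Gamma} \vdash^{n+1}_{\mathtt{KB}} \encf{P_n}$: here one applies the encoding of Figure~\ref{d:encdgs}, checks reliability of the resulting usages for each $c_i$ (the capability/obligation levels can be assigned exactly as in the deadlock-free case, since the forced reduction order witnesses the needed inter-channel dependencies), and observes that the two encoded components share precisely the $n+1$ free channels $\f{a_1}=\f{b_1},\ldots$ under the restrictions, so \Did{T$\pi$-Par$_{n+1}$} applies but no smaller index does. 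Third — the crux — prove $P_n \notin \Koba{n}$: show that \emph{every} session typing of $P_n$ forces the two syntactic components into a single \Did{T$\pi$-Par} node whose two contexts intersect in $n+1$ names, so that the side condition $|\Gamma_1 \cap \Gamma_2| \leq n$ of \Did{T$\pi$-Par$_n$} fails in the encoded derivation. Finally, assemble: $P_n$ witnesses $\Koba{n} \subsetneq \Koba{n+1}$, and combined with the trivial inclusion $\Koba{n} \subseteq \Koba{n+1}$ (from the remark that $P \vdash^n_{\mathtt{KB}}$ implies $P \vdash^k_{\mathtt{KB}}$ for $k \leq n$, read contrapositively as monotonicity in $n$) this gives the strict chain.

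The main obstacle is the third step: ruling out \emph{all} typings in $\Koba{n}$, not just the natural one. One must argue that the $n+1$ sessions connecting the two components genuinely cannot be ``split'' — i.e.\ there is no way to re-associate the parallel structure (via $\equiv$) or to distribute the restrictions so that each \Did{T$\pi$-Par} node sees at most $n$ shared channels. The argument should rely on the fact that both $\encf{a_i\inp\ldots}$-style prefixes and $\encf{\ov{b_i}\out\ldots}$-style prefixes occur syntactically within the two monolithic sequential threads (no internal parallelism), so structural congruence cannot break them apart, and each restriction $\res{c_i}$ must scope over both threads because both use $\f{c_i}$. Hence the outermost (or indeed any) \Did{T$\pi$-Par} separating the two threads must carry all $n+1$ of the $\f{c_i}$ in the intersection of its two contexts, which exceeds $n$. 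I expect this to reduce to an inspection of the (few) possible derivation shapes for $\encf{P_n}$, closely mirroring the informal reasoning already given for $P_2$; no genuinely new technique beyond what Lemma~\ref{lem:K1inK2} uses should be required, only care in making the ``cannot be split'' claim precise.
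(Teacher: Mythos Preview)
Your approach is essentially the paper's own: it too exhibits a chain-shaped two-thread witness sharing $n{+}1$ restricted sessions, and its proof is even terser than your outline (it simply says ``immediate by considering'' the generalized family). The one difference worth noting is that the paper alternates input and output along the left thread (yielding two parity-dependent variants $P_{n+1}$ and $Q_{n+1}$, chosen according to whether $n{+}1$ is even or odd) rather than repeatedly forwarding the single bound $x$ as you do; this sidesteps any linearity worry about sending the same value more than once in the session type derivation, so you may want to adopt that shape or replace the repeated $x$ by fresh $\unit$s.
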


\begin{proof}
Immediate by considering one of the following processes, which generalize process $P_2$ in Lemma~\ref{lem:K1inK2}:
\begin{eqnarray*}
P_{n+1} & = &     \res{a_1b_1}\res{a_2b_2}\cdots \res{a_{n+1}b_{n+1}} (a_1\inp x.\;\ov{a_2}\out{x}.\cdots.\;\ov{a_{n+1}}\out{y}
        \pp \ov{b_1}\out{\unit}.\;b_2\inp{z}.\;\cdots\;b_{n+1}\inp{z}) \\
Q_{n+1} & = &     \res{a_1b_1}\res{a_2b_2}\cdots \res{a_{n+1}b_{n+1}} (a_1\inp x.\;\ov{a_2}\out{x}.\cdots.\;a_{n+1}\inp {y}
        \pp \ov{b_1}\out{\unit}.\;b_2\inp{z}.\;\cdots\;\ov{b_{n+1}}\out{\unit})
\end{eqnarray*}
To distinguish $\Koba{n+1}$ from $\Koba{n}$, we consider $P_{n+1}$ if $n+1$ is even and $Q_{n+1}$ otherwise.
\end{proof}

One main result of this paper is that $\lcp$ and $\lpkoba$ coincide.
Before stating this result, % we detailing the proof of Theorem~\ref{t:cppkoba},
we make the following observations.
%We have:
%$$
%P:: =  \nil \midd \linkr{x}{y} \midd  \bout{x}{y}.P \midd  x\inp y.P
%\midd \branching xlP \midd \selection x{l_j}.P
%\midd P_1 \pp P_2
%\midd \res {xy}(P_1 \pp P_2)
%$$
The typing rules for processes in \lcp do not directly allow free output.
However, free output is representable (and typable) by linear logic types by means of the transformation
in Definition~\ref{d:trans}.
Thus, considered processes are not syntactically equal. % (see case 3 below).
In \lcp there is
%are two forms of composition: independent composition (enabled by mix rule in in the logic interpretation)  and 
cooperating composition (enabled by rule \Did{T-$\cut$} in Figure~\ref{fig:type-system-cll}); independent composition can only be enabled by rule \Did{T-$\mix$}.
%In \lpkoba, however, there is no independent composition (this is the key difference wrt \lkoba).
Arbitrary restriction is not allowed; only restriction of parallel  processes.
%the interpretation induces only restriction of parallel  processes. 
%\newpage

The following property is key in our developments: it connects our encodings of (dual) session types into usage types
with  reliability (Definition~\ref{d:reli}), a central notion to the type system for deadlock freedom in Figure~\ref{f:pityping}.
Recall that, unlike usage types, there is no parallel composition operator at the level of session types.
\begin{proposition}\label{p:relia}
Let $T$ be a session type. 
Then $\rel{\enco{T}{\su} \pp \enco{\dual{T}}{\su}}$ holds.
\end{proposition}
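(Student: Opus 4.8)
The plan is to proceed by structural induction on the session type $T$, exploiting the fact that the encoding $\enco{\cdot}{\su}$ into usage types is defined compositionally (Figure~\ref{f:enctypes}, left) and that, by Lemma~\ref{lem:dualenc}(ii), $\dual{\enco{T}{\su}} = \enco{\dual T}{\su}$. The key observation is that, at the top level, $\enco{T}{\su}$ and $\enco{\dual T}{\su}$ are always a dual pair of usages of the form $\alpha^{\ob}_{\ca}.U[\wt R]$ and $\ov\alpha{}^{\ob}_{\ca}.\dual U[\wt R]$ (or both $\emp[]$, for $T=\nilT$); carrying the \emph{same} obligation and capability annotations. So I must show that the composition of any such dual pair is reliable: every usage reachable by $\mto$ satisfies the consistency predicate $\con{\cdot}$ of Definition~\ref{d:reli}.

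First I would fix, once and for all, a concrete choice of the annotations $\ob,\ca$ used in the encoding of Figure~\ref{f:enctypes}; the natural choice (following~\cite{DGS12}) is to take capability and obligation level $0$ at every prefix, i.e. $\inpuse{0}{0}$ and $\outuse{0}{0}$. With this choice, for each constructor, the top-level usages are $\wn^{0}_{0}.U_1[\wt R]$ and $\oc^{0}_{0}.U_2[\wt R]$ with $U_2 = \dual{U_1}$. For the base case $T = \nilT$, both encodings are $\emp[]$, whose composition $\emp[]$ reduces only to itself and trivially satisfies $\con{\emp}$ since $\obs{\alpha}{\emp}$ is $\infty$ (empty min) and no capability is required. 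For the inductive step, consider $U = \wn^{0}_{0}.U_1[\wt R] \pp \oc^{0}_{0}.\dual{U_1}[\wt R]$. I must check $\con{U}$ and $\con{U'}$ for all $U'$ with $U \mto U'$. There are exactly two cases: either $U = U$ (zero steps), or the one available top-level communication fires, giving $U' = U_1 \pp \dual{U_1}[\wt R]$ at the residual; no reduction can occur inside the payloads $\wt R$ since usages for argument types are not themselves composed at this point. For $\con{U}$: $\obs{\wn}{U} = \min(0, \obs{\wn}{\oc^{0}_{0}.\dual{U_1}})$, and the second component's outermost action is an output, so its input-obligation is that of $\dual{U_1}$, hence $\obs{\wn}{U} \le 0 = \caps{\oc}{U}$ — wait, I need $\obs{\ov\alpha}{U} \le \caps{\alpha}{U}$ for $\alpha \in \{\wn,\oc\}$; with all levels $0$ this reads $0 \le 0$ for the outermost prefixes, which holds. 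For $\con{U'}$: $U' = U_1 \pp \dual{U_1}$, and here I would prove a strengthened induction hypothesis, namely that $\rel{V \pp \dual V}$ holds for every usage $V$ in the image of the encoding (not merely those that are themselves $\enco{T}{\su}$ for a session type $T$), so that $U_1 \pp \dual{U_1}$ is handled directly by the hypothesis. This strengthening is essentially free because the continuation usages produced by the encoding are again encodings of session types.

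The main obstacle I anticipate is the bookkeeping around obligation and capability levels under the $\uparrow^{t}$ and ``$\semi$'' operations: although the \emph{encoding} of types uses fixed level $0$, one must make sure that taking the dual and composing does not implicitly raise levels in a way that breaks the inequality $\obs{\ov\alpha}{U'} \le \caps{\alpha}{U'}$ deeper in the usage. I would address this by noting that $\enco{\cdot}{\su}$ never introduces a nonzero level anywhere — each $\alpha^{\ob}_{\ca}$ it produces has $\ob = \ca = 0$ — so in every residual $U'$ reachable from $V \pp \dual V$ the outermost actions always carry level $0$ on both sides, and $\obs{\ov\alpha}{U'} \le \caps{\alpha}{U'}$ reduces to $0 \le 0$ at the head, while the tails are again of the dual-pair form and covered inductively. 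Thus the only genuine content is the observation that the encoding is ``level-flat'' and that duality on usages commutes with the encoding (Lemma~\ref{lem:dualenc}(ii)); once these are in place, reliability follows by a routine induction, and the full argument can be found in~\cite{DardhaPerez15}.
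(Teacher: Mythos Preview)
Your plan matches the paper's own sketch: structural induction on $T$, invoking Lemma~\ref{lem:dualenc}(ii) together with the definitions of $\enco{\cdot}{\su}$ and $\rel{\cdot}$.

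One simplification you are missing, though, concerns the shape of the encoding. In the continuation-passing translation of Figure~\ref{f:enctypes} (left), the \emph{usage} component of every $\enco{T}{\su}$ with $T \neq \nilT$ is a single prefix $\alpha^{\ob}_{\ca}.\emp$: the session continuation $S$ is carried as a \emph{payload type}, not as a continuation in the usage. Hence the parallel usage you must analyse is always $\alpha^{\ob}_{\ca}.\emp \pp \ov\alpha{}^{\ob}_{\ca}.\emp$, whose only reduct is $\emp \pp \emp$. There is therefore no non-trivial $U_1$ in the residual, no need for your strengthened hypothesis on arbitrary $V \pp \dual V$, and your concerns about $\uparrow^{t}$ and ``$\semi$'' do not arise here at all (those operators act during \emph{process} typing, not in the type encoding). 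With this observation the argument collapses to a case analysis on the head constructor of $T$, which is what the paper's sketch is really pointing at.
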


\begin{proof}[Proof (Sketch)]
By induction on the structure of session type $T$ and
the definitions of $\enco{\cdot}{\su}$
and predicate $\rel{\cdot}$, using 
Lemma~\ref{lem:dualenc} (encodings of types preserve session type duality).
%See \S\,\ref{appp:relia} for details.
See \cite{DardhaPerez15} for details. \end{proof}

%\begin{enumerate}
%\item
%$T = \nilT$. Then
%$\enco{T}{\su} \pp \enco{\dual{T}}{\su} = (\emp \pp \emp)[]$ and the thesis follows easily.
%
%\item 
%$T = \oc T_1.{T_2}$ for some $T_1, T_2$.
%By definition of inductive duality, $\dual{T} = \wn T_1.\dual{T_2}$.
%Then 
%$\enco{T}{\su} \pp \enco{\dual{T}}{\su} =  (\inpuse{\ob}{\ca}.\emp \pp \outuse{\ob'}{\ca'}.\emp)[\enco{T_1}{\su},\enco{{T_2}}{\su}]$.
%By letting $\ob=\ob' = 0$ and $\ca=\ca'=1$ the $\rel\cdot$ predicate holds easily.
%%
%The case of $T = \wn T_1.{T_2}$, for some $T_1, T_2$  is similar to the above.
%
%\item
%$T = \branch lS$, for some $S_i$ with $i\in I$.
%By definition of inductive duality, $\dual T = \select l{\dual S}$.
%Since
%$\enco{T}\su = \inpuse \ob\ca [\variant {l_i}{\enco {S_i}\su}]$ and
%$\enco{\dual T}{\su} = \outuse{\ob'}{\ca'} [\variant {l_i}{\enco{{S_i}}\su}]$, we have that
%$\enco{T}{\su} \pp \enco{\dual{T}}{\su} =  (\inpuse{\ob}{\ca}.\emp \pp \outuse{\ob'}{\ca'}.\emp)[\variant {l_i}{\enco{{S_i}}\su}]$. 
%By letting $\ob=\ob' = 0$ and $\ca=\ca'=1$ the $\rel\cdot$ predicate holds easily.
%%
%The case of $T = \select lS$,for some $S_i$ with $i\in I$ is similar to the above.
%\end{enumerate}
%\end{proof}

We then have the following main result, whose  proof is detailed 
%in \S\,\ref{appt:cppkoba}:
in~\cite{DardhaPerez15}:
\begin{theorem}\label{t:cppkoba}
$\lcp = \lpkoba$.
\end{theorem}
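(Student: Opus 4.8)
The plan is to establish the two inclusions $\lcp \subseteq \lpkoba$ and $\lpkoba \subseteq \lcp$ separately, each by induction on typing derivations, using the encodings of types in Figure~\ref{f:enctypes} as the bridge and Proposition~\ref{p:relia} as the crucial technical ingredient that connects the session-type machinery to usage-type reliability.

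For the inclusion $\lcp \subseteq \lpkoba$, I would take a process $P$ with $\Gamma \s P$ and $\chr{P} \cp \enco{\Gamma}{\ct}$, and show $\encf\Gamma \vdash^{1}_{\mathtt{KB}} \encf P$. The argument proceeds by induction on the structure of $P$ (equivalently, on the session typing derivation $\Gamma \s P$), translating each typing rule of Figure~\ref{fig:sess_typing} into the corresponding derivation in the usage-type system of Figure~\ref{f:pityping}. The output/input/selection/branching cases follow the continuation-passing pattern of Figure~\ref{d:encdgs} and use the ``$\semi$'' operator and the obligation-lifting $\uparrow^{\,t}$ to reconstruct the usage derivation; here Lemma~\ref{lem:dualenc} guarantees that the encoded continuation types match up under duality. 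The key case is restriction $\res{xy}P$, typed in $\s$ by \Did{T-Res} requiring dual endpoint types $T$ and $\dual T$: its encoding $\res c\,\encx P c$ must be typed by \Did{T$\pi$-Res}, which demands that the usage of $c$ be \emph{reliable} --- and this is exactly Proposition~\ref{p:relia}. To keep the degree of sharing at $1$, I would exploit the fact that the linear-logic discipline uses ``composition plus hiding'' (rule \Did{T-$\cut$}): every cooperating parallel composition in $\lcp$ shares precisely the one session being cut, and independent compositions (rule \Did{T-$\mix$}) share none; translating the former yields a \Did{T$\pi$-Par$_1$} application and the latter a \Did{T$\pi$-Par$_0$} one, so the sharing bound never exceeds $1$. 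The translation $\chr{\cdot}$ of Definition~\ref{d:trans} must be handled carefully so that free output in $P$ (encoded directly by $\encf{\cdot}$) is matched against the bound-output form in $\chr{P}$; here one shows that $\encf{\chr{P}}$ and $\encf{P}$ are typed by the same usage context, since the forwarder $\linkr{z}{y}$ encodes to a copycat process that is usage-typable without affecting sharing degree.

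For the converse $\lpkoba \subseteq \lcp$, I would take $P$ with $\Gamma \s P$ and $\encf\Gamma \vdash^{1}_{\mathtt{KB}} \encf P$ and build a derivation $\chr P \cp \enco\Gamma\ct$. The essential observation is that the sharing-degree annotation $n=1$ forces every parallel composition inside $P$ to share at most one session, which is precisely the shape required by the ``composition plus hiding'' rule \Did{T-$\cut$} (when they share exactly one) or the mix rule \Did{T-$\mix$} (when they share none). Again by induction on the structure of $P$, I would reconstruct a C-typing derivation prefix-by-prefix; communication-safety and session-fidelity (Theorem~\ref{thm:main}), together with well-formedness (Definition~\ref{def:well-form_sessions.}), ensure that the session types assigned by $\Gamma$ are genuinely dual at matched endpoints, so the encoded C-types $\enco{T}\ct$ and $\enco{\dual T}\ct = \dual{\enco T\ct}$ line up for rule \Did{T-$\cut$}. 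One also needs that the usage-typability of $\encf P$ at degree $1$ rules out the cyclic-dependency deadlocks (such as the process $P$ displayed in \S\,\ref{s:sessions}) that $\s$ alone permits; this is where Kobayashi's obligation/capability levels, reflected through the encoding, do the work.

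I expect the main obstacle to be the restriction/cut case, and specifically the bookkeeping that keeps the degree of sharing pinned at exactly $1$ in both directions. In the forward direction one must verify that no application of \Did{T-$\cut$} ever corresponds to two parallel components sharing more than one session --- which relies on the syntactic discipline that a single $\cut$ introduces a single hidden name --- and that Proposition~\ref{p:relia} suffices to discharge every reliability side-condition generated by \Did{T$\pi$-Res}. In the backward direction the subtlety is the reverse: showing that a usage derivation with $n=1$ can always be \emph{re-bracketed} so that each restriction scopes over exactly the two partners sharing that one session, matching the rigid ``composition plus hiding'' shape; this may require a normalization argument on usage derivations, or an appeal to structural congruence, to move restrictions inward. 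The remaining cases (prefixes, inaction, choice) are routine translations of the respective typing rules, modulo the continuation-passing encoding and Lemma~\ref{lem:dualenc}. Full details are deferred to~\cite{DardhaPerez15}.
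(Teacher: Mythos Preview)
Your plan matches the paper's approach: the paper defers the detailed argument to~\cite{DardhaPerez15} while singling out Proposition~\ref{p:relia} as the key technical ingredient, and your two-inclusion strategy---with that proposition discharging the reliability side-condition in the restriction case, and the \Did{T-$\cut$}/\Did{T-$\mix$} discipline bounding the degree of sharing at~$1$---is exactly the natural route and consistent with what the paper indicates.

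One refinement is needed. In the forward direction you phrase the induction as being ``on the session typing derivation $\Gamma \s P$'', but the information that parallel components share at most one session is visible only in the $\cp$ derivation of $\chr{P}$, not in the $\s$ derivation (which, as the paper stresses, admits deadlocked processes with arbitrary sharing). The induction must therefore be on the $\cp$ derivation, or equivalently on the structure of $P$ while carrying $\cp$-typability as an explicit invariant, so that at each parallel/restriction node you may legitimately invoke \Did{T-$\cut$} or \Did{T-$\mix$} to bound the sharing. Your subsequent remarks make clear you intend this, but the stated induction should be adjusted accordingly; otherwise the step ``translating each typing rule of Figure~\ref{fig:sess_typing}'' gives you no handle on the degree of sharing.
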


%\noindent
%The proof of Theorem \ref{t:cppkoba} is split into two auxiliary lemmas (Lemmas \ref{l:cp2pkoba} and \ref{l:pkoba2cp}).
%
%\begin{lemma}\label{l:cp2pkoba}
%If $P \in \lcp$ then, $P \in \lpkoba$.
%\end{lemma}
%
%
%\begin{proof}
%%The proof crucially relies on showing that proof theoretical principles induce reliability as defined by Koba.
%See \S\,\ref{appl:cp2pkoba}.
%\end{proof}
%%%
%%%
%
%
%
%The second lemma is the converse of Lemma~\ref{l:cp2pkoba}:
%
%\begin{lemma}\label{l:pkoba2cp}
%If $P \in \lpkoba$ then $P \in \lcp$.
%\end{lemma}
%
%\begin{proof}[Proof (Sketch)]
%See \S\,\ref{appl:pkoba2cp}.
%\end{proof}
%
%%\begin{proof}
%%The proof crucially relies on the fact that we consider only sequential usages with parallel composition only at the top-level.
%%\end{proof}

Therefore, we have the following corollary, which attests that 
the class of deadlock-free session processes naturally induced by 
linear logic interpretations of session types is strictly included in the class
induced by the indirect approach of Dardha et al.~\cite{DGS12} (cf. \S\,\ref{ss:dgs}).

\begin{corollary}
$\lcp \subset \lkoba$,  $n >1$.
\end{corollary}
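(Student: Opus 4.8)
The plan is to derive the corollary as an immediate consequence of the two hierarchy results established above, namely Theorem~\ref{t:cppkoba} ($\lcp = \lpkoba$) and Theorem~\ref{t:strict} (the strictness of $\Koba{n} \subset \Koba{n+1}$ for all $n \geq 1$). The only substantive bookkeeping is to justify two chains of inclusions: that $\lpkoba \subseteq \lkoba$ for every $n > 1$, and that this inclusion is proper. The first is trivial from the definition of the classes $\Koba{k}$: as noted after rule \Did{T$\pi$-Par$_n$} in Figure~\ref{f:pityping}, the degree-of-sharing annotation is monotone, so $P \vdash^{1}_{\mathtt{KB}}$ implies $P \vdash^{n}_{\mathtt{KB}}$ for all $n \geq 1$, hence $\Koba{1} \subseteq \Koba{n}$ (and, composing with Definition~\ref{d:lang}, $\lpkoba \subseteq \lkoba$).

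For properness, first I would rewrite the left-hand side using Theorem~\ref{t:cppkoba}, replacing $\lcp$ by $\lpkoba$, so the claim becomes $\lpkoba \subset \lkoba$ for $n > 1$. Then I would invoke Theorem~\ref{t:strict}, which gives the strict chain $\Koba{1} \subset \Koba{2} \subset \cdots \subset \Koba{n}$; composing these strict inclusions yields $\Koba{1} \subset \Koba{n}$ for every $n > 1$, and therefore $\lpkoba \subset \lkoba$. For concreteness I would point to the explicit witness already exhibited in the proof of Lemma~\ref{lem:K1inK2}, namely
$$
P_2 = \res{a_1b_1}\res{a_2b_2}(a_1\inp x.\;\ov{a_2}\out{x} \pp \ov{b_1}\out{\unit}.\;b_2\inp{z}),
$$
which lies in $\Koba{2} \subseteq \Koba{n}$ (for $n \geq 2$) but not in $\Koba{1} = \lcp$, since it composes two parallel subprocesses sharing two sessions.

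There is essentially no obstacle here: the corollary is a packaging of results already proved. The only point requiring a word of care is the quantifier on $n$ — the statement should be read as ``for every $n > 1$'', and the argument above establishes it uniformly in $n$. I would therefore keep the proof to two or three sentences: apply Theorem~\ref{t:cppkoba} to identify $\lcp$ with $\lpkoba$, then apply the monotonicity of the degree of sharing together with the strict hierarchy of Theorem~\ref{t:strict} (witnessed by $P_2$ from Lemma~\ref{lem:K1inK2}) to conclude $\lpkoba \subset \lkoba$.
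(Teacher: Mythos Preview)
Your proposal is correct and is exactly the intended argument: the paper does not give a separate proof of this corollary, presenting it simply as the combination of Theorem~\ref{t:cppkoba} ($\lcp = \lpkoba$) with the strict hierarchy of Theorem~\ref{t:strict} (and its base case, Lemma~\ref{lem:K1inK2}), which is precisely what you unpack. One small caveat: the monotonicity you cite from the passage after rule \Did{T$\pi$-Par$_n$} is stated there in the reversed direction (a typo in the paper), so you should justify $\lpkoba \subseteq \lkoba$ directly from the constraint $|\Gamma_1 \cap \Gamma_2| \leq n$ rather than by pointing to that sentence.
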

\noindent 
The fact that (deadlock-free) processes such as $P_2$ (cf. Lemma~\ref{lem:K1inK2}) are not in \lcp is informally discussed in~\cite[\S6]{CairesPT12}. However, \cite{CairesPT12} gives no formal comparisons with other classes of deadlock-free processes.

%

%%%%%%%%%%%%%%%%%%%%%%%%%%%%%
% CATALYSERS
%%%%%%%%%%%%%%%%%%%%%%%%%%%%%
\section{Rewriting \lkoba into \lcp}\label{s:enco}
The hierarchy of deadlock-free session processes established by Theorem \ref{t:strict} is \emph{subtle} in the following sense:
if $P \in \Koba{k+1}$ but $P \not\in \Koba{k}$ (with $k \geq 1$) then we know that there is a subprocess of $P$ that needs to be ``adjusted'' in order to ``fit in'' $\Koba{k}$. More precisely, we know that such a subprocess of $P$ must become more independent in order to be typable under the lesser degree of sharing $k$. 

Here we propose a \emph{rewriting procedure} that converts processes in $\Koba{n}$ into processes in $\Koba{1}$ (that is, $\lcp$, by Theorem \ref{t:cppkoba}). 
The rewriting procedure follows a simple idea: given a parallel process as input, return as output a process in 
which 
one of the components is kept unchanged, but the other is replaced by parallel representatives of the sessions implemented in it.
Such parallel representatives are formally defined as characteristic processes and catalyzers, introduced next.
The rewriting procedure is type preserving and satisfies operational correspondence 
(cf. Theorems \ref{thm:L0-L2} and \ref{thm:oc}).

\subsection{Preliminaries: Characteristic Processes and Catalyzers}
Before presenting our rewriting procedure, let us first introduce some preliminary results.
\begin{definition}[Characteristic Processes of a Session Type]\label{def:charprocess}
Let $T$ be a
session type (cf. \S\,\ref{s:sessions}).
Given a name $x$, 
the set of \emph{characteristic processes} of $T$, denoted \chrp{T}{x}, is inductively defined as follows:
$$
\begin{array}{rcl}
\chrp{\nilT}{x} 			& = & \big\{ P \mid P \cp x{:}\bullet \big\} \\
%\chrp{\unitT}{x} & = & \{\zero\} \\
\chrp{\wn T.S}{x} 		& = &  \big\{x(y).P \mid P \cp y{:}\enco{{T}}{\ct}, x{:} \enco{S}{\ct}\big\}\\
%\chrp{\wn \unitT.S}{x} 	& = &  \big\{x(y).P \mid P \cp x{:} \enco{S}{\ct}\big\}\\
\chrp{\oc T.S}{x} 		& = & \big\{\bout{x}{y}.(P \para Q) \mid P \in \chrp{\dual{T}}{y} \land Q \in \chrp{S}{x}\big\}\\
%\chrp{\oc \unitT.S}{x} 	& = & \big\{\bout{x}{y}.(\nil \para Q) \mid Q \in \chrp{S}{x}\big\}\\
%\chrp{\bbranch{\inl{:}T}{\inr{:}S}}{x}	& = &  \big\{{x}.\mathtt{case}(P,Q) \mid P \in \chrp{T}{x} \land Q \in \chrp{S}{x}\big\}\\
\chrp{\branch lS}{x}				& = &  \big\{\branching xlP \mid \forall i\in I.\ P_i\in \chrp{S_i}x\big\}\\
%\chrp{\bselect{\inl{:}T}{\inr{:}S}}{x} 	& = &  \big\{{x}.\inl;P \mid P \in \chrp{T}{x}\} \cup  \big\{{x}.\inr;Q \mid Q \in \chrp{S}{x}\}\\
\chrp{\select lS}x 				& = &  \bigcup_{i\in I}\big\{ \selection x{l _i}.{P_i} \mid P_i\in \chrp{S_i}x\big\}
\end{array}
$$
\end{definition}

\begin{definition}[Catalyzer]\label{def:catalyser}
Given a session typing context $\Gamma$, we define its associated \emph{catalyzer} as a process context $\mathcal{C}_\Gamma[\cdot]$, as follows:
\begin{eqnarray*}
\mathcal{C}_\emptyset[\cdot]  =  [\cdot] \qquad \qquad 
\mathcal{C}_{\Gamma, x:T}[\cdot]  =  (\nub x)(\mathcal{C}_{\Gamma}[\cdot] \para P) \quad \text{with $P \in \chrp{\dual{T}}{x}$}
\end{eqnarray*}
\end{definition}
\noindent
We record the fact that characteristic processes are well-typed in the system of \S\,\ref{ss:lf}:

\begin{lemma}%[Characteristic Processes are Well-Typed]
\label{lem: wt_characteristic}
Let $T$ be a session type.
For all $P \in \chrp{T}{x}$, we have:  $P \cp x:\enco{T}{\ct}$
\end{lemma}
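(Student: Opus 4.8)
The plan is to prove Lemma~\ref{lem: wt_characteristic} by induction on the structure of the session type $T$, following exactly the inductive definition of $\chrp{T}{x}$ in Definition~\ref{def:charprocess}. For each syntactic form of $T$, I will take an arbitrary $P \in \chrp{T}{x}$, unfold the definition to expose the shape of $P$ together with the typing facts guaranteed for its components, and then apply the appropriate typing rule from Figure~\ref{fig:type-system-cll} to conclude $P \cp x:\enco{T}{\ct}$, recalling the definition of $\enco{\cdot}{\ct}$ in Figure~\ref{f:enctypes} (right).

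The base cases are immediate: for $T = \nilT$, any $P \in \chrp{\nilT}{x}$ satisfies $P \cp x{:}\bullet$ by definition, and $\enco{\nilT}{\ct} = \bullet$, so we are done directly. The interesting inductive cases are the following. For $T = \wn T'.S$, a characteristic process has the form $x(y).P'$ with $P' \cp y{:}\enco{T'}{\ct}, x{:}\enco{S}{\ct}$; applying the input rule (T-$\parl$) yields $x(y).P' \cp x{:}\enco{T'}{\ct}\parl \enco{S}{\ct} = \enco{\wn T'.S}{\ct}$. For $T = \oc T'.S$, a characteristic process has the form $\bout{x}{y}.(P' \para Q)$ with $P' \in \chrp{\dual{T'}}{y}$ and $Q \in \chrp{S}{x}$; by the induction hypothesis $P' \cp y{:}\enco{\dual{T'}}{\ct}$ and $Q \cp x{:}\enco{S}{\ct}$, and since $\enco{\dual{T'}}{\ct} = \dual{\enco{T'}{\ct}}$ by Lemma~\ref{lem:dualenc}(i), the bound-output rule (T-$\otimes$) gives $\bout{x}{y}.(P'\para Q) \cp x{:}\enco{T'}{\ct}\otimes\enco{S}{\ct} = \enco{\oc T'.S}{\ct}$. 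The branching case $T = \branch lS$ uses the induction hypothesis on each $S_i$ together with rule (T-$\&$), and the selection case $T = \select lS$ uses the induction hypothesis on the chosen $S_i$ together with rule (T-$\oplus$); in both cases the resulting type matches $\enco{\branch lS}{\ct}$ and $\enco{\select lS}{\ct}$ respectively by definition.

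The main subtlety — rather than a genuine obstacle — is bookkeeping the correspondence between the encoding clauses and the linear logic typing rules, in particular the appearance of duality in the $\oc T'.S$ case. Here the encoding places $\enco{\dual{T'}}{\ct}$ under the $\otimes$ (because the continuation of the characteristic process must output a \emph{dual} characteristic object on the fresh name $y$, which is what (T-$\otimes$) consumes), and this is precisely reconciled by Lemma~\ref{lem:dualenc}(i). One should also check that the use of the $\mix$/$\cut$ machinery is not needed: the bound-output rule (T-$\otimes$) already packages parallel composition with the fresh restriction, so $P'$ and $Q$ are typed in disjoint contexts ($y$ versus $x$) and composed exactly as that rule prescribes. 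No side conditions on context splitting cause trouble since in every clause the relevant subprocesses are typed under single-assignment contexts on distinct names. Thus the induction goes through cleanly.
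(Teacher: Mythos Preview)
Your approach is correct and is exactly the natural argument the paper has in mind: the lemma is stated without proof in the paper body (details are deferred to the companion report), and an induction on $T$ guided by Definition~\ref{def:charprocess} and the clauses of $\enco{\cdot}{\ct}$ is the only reasonable route.

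Two small points are worth tightening. First, in the case $T=\oc T'.S$ the induction hypothesis is applied to $\dual{T'}$, which is not a structural subterm of $T$; the induction should therefore be on the size (or depth) of $T$, which is preserved by duality, rather than on the raw syntax. Second, in that same case your appeal to Lemma~\ref{lem:dualenc}(i) is unnecessary and leads you to a small slip: from $P'\cp y{:}\enco{\dual{T'}}{\ct}$ and $Q\cp x{:}\enco{S}{\ct}$, rule \Did{T-$\otimes$} gives $x{:}\enco{\dual{T'}}{\ct}\otimes\enco{S}{\ct}$, which is already literally $\enco{\oc T'.S}{\ct}$ by the encoding in Figure~\ref{f:enctypes}; there is no need to pass through $\enco{T'}{\ct}$. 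With these cosmetic fixes your proof goes through as written.
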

%%%
%\begin{proof}
%See Appendix \S\,\ref{applem: wt_characteristic}.
%\end{proof}
We use $\chrp{T}{x} \cp x:\enco{T}{\ct}$ to denote the set of processes $P \in \chrp{T}{x}$ such that $P \cp x:\enco{T}{\ct}$.
%
%\begin{definition}\label{def:enc_env}
%Let $\Gamma$ be a session typing context. The encoding $\enco{\cdot}{\ct}$ is inductively defined as follows:
%\begin{align*}
%\enco{\emp}{\ct} 		&\defeq \emp\\
%\enco{\Gamma,x:T}{\ct}	&\defeq \enco{\Gamma}{\ct}, x:\enco{T}{\ct}
%\end{align*}
%\end{definition}
%
\begin{lemma}[Catalyzers Preserve Typability]
\label{lem:typability_catal}
Let $\Gamma\s P$ and $\Gamma'\subseteq \Gamma$.
Then
$\catal{\Gamma'}{P}\cp  \enco{\Gamma}{\ct} \setminus \enco{\Gamma'}{\ct}$.
\end{lemma}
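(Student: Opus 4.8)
\textbf{Proof plan for Lemma~\ref{lem:typability_catal} (Catalyzers Preserve Typability).}

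The plan is to proceed by induction on the structure of the session typing context $\Gamma'$, exploiting the inductive definition of the catalyzer $\mathcal{C}_{\Gamma'}[\cdot]$ in Definition~\ref{def:catalyser}. The base case is $\Gamma' = \emptyset$: then $\mathcal{C}_{\emptyset}[P] = P$ and $\enco{\Gamma}{\ct} \setminus \enco{\emptyset}{\ct} = \enco{\Gamma}{\ct}$, so the claim reduces to the hypothesis that the translated process $\chr{P}$ is typable under $\enco{\Gamma}{\ct}$; this is essentially the content of $P$ being in the relevant class, and should follow from an auxiliary typability statement for $\chr{\cdot}$ (implicit in Definition~\ref{d:lang}). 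I would first state that auxiliary fact explicitly: if $\Gamma \s P$ then $\chr{P} \cp \enco{\Gamma}{\ct}$ — actually this is \emph{not} generally true (it is exactly the defining condition of $\lcp$), so the cleaner route is to read the lemma as implicitly assuming $P \in \lcp$, i.e. $\chr{P} \cp \enco{\Gamma}{\ct}$, and to carry that as the real hypothesis through the induction.

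For the inductive step, write $\Gamma' = \Gamma'', x{:}T$ with $\Gamma'' \subseteq \Gamma \setminus \{x{:}T\}$, so that $x{:}T \in \Gamma$. By definition, $\mathcal{C}_{\Gamma', x:T}[P] = (\nub x)\big(\mathcal{C}_{\Gamma''}[P] \para R\big)$ with $R \in \chrp{\dual{T}}{x}$. By Lemma~\ref{lem: wt_characteristic}, $R \cp x{:}\enco{\dual{T}}{\ct}$, and by Lemma~\ref{lem:dualenc}(i) we have $\enco{\dual{T}}{\ct} = \dual{\enco{T}{\ct}}$. By the induction hypothesis, $\mathcal{C}_{\Gamma''}[P] \cp \enco{\Gamma}{\ct} \setminus \enco{\Gamma''}{\ct}$; since $x{:}T \in \Gamma$ and $x \notin \dom(\Gamma'')$, the assignment $x{:}\enco{T}{\ct}$ occurs in this context. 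I would then apply the composition-plus-hiding rule $\Did{T-$\cut$}$ from Figure~\ref{fig:type-system-cll} to cut $\mathcal{C}_{\Gamma''}[P]$ against $R$ on the name $x$ with the dual types $\enco{T}{\ct}$ and $\dual{\enco{T}{\ct}}$, obtaining $(\nub x)(\mathcal{C}_{\Gamma''}[P] \para R) \cp \big(\enco{\Gamma}{\ct} \setminus \enco{\Gamma''}{\ct}\big) \setminus \{x{:}\enco{T}{\ct}\} = \enco{\Gamma}{\ct} \setminus \enco{\Gamma'}{\ct}$, which is the desired conclusion. The bookkeeping identity $\big(\enco{\Gamma}{\ct} \setminus \enco{\Gamma''}{\ct}\big) \setminus \{x{:}\enco{T}{\ct}\} = \enco{\Gamma}{\ct} \setminus \enco{\Gamma'}{\ct}$ follows since $\enco{\cdot}{\ct}$ on contexts is pointwise (Definition~\ref{def:enc_env_sc}) and $\Gamma' = \Gamma'', x{:}T$ as sets.

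The main obstacle I anticipate is not the induction itself but handling the side conditions of rule $\Did{T-$\cut$}$ correctly: one must ensure that the two premises of the cut have \emph{disjoint} contexts apart from the cut name $x$, and that $x$ does not already appear free in $R$ with a conflicting type or in $\mathcal{C}_{\Gamma''}[P]$ in a way that blocks the cut. This is why Definition~\ref{def:catalyser} chooses $P \in \chrp{\dual{T}}{x}$ (a fresh characteristic process using only $x$) and why we need $\Gamma' \subseteq \Gamma$ (so the types line up and no name is restricted twice). A secondary subtlety is the precise reading of the set-difference notation on typing contexts and the implicit freshness/$\alpha$-conversion conventions needed so that successive restrictions in $\mathcal{C}_{\Gamma'}[\cdot]$ bind distinct names; I would dispatch this by assuming, without loss of generality via $\alpha$-renaming, that $\dom(\Gamma')$ is disjoint from the bound names introduced by the characteristic processes. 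Full details appear in~\cite{DardhaPerez15}.
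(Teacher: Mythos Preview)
The paper does not include a proof of this lemma in the body (details are deferred to the technical report~\cite{DardhaPerez15}), so there is no in-text argument to compare against. Your approach --- induction on the structure of $\Gamma'$, invoking Lemma~\ref{lem: wt_characteristic} for the characteristic process, Lemma~\ref{lem:dualenc}(i) for duality of the encoded type, and rule \Did{T-$\cut$} to discharge one assignment $x{:}\enco{T}{\ct}$ per step --- is the natural argument dictated by the inductive shape of Definition~\ref{def:catalyser}, and is essentially the only way to proceed.

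Your observation about the base case is correct and worth recording: the hypothesis $\Gamma \s P$ alone does not entail $P \cp \enco{\Gamma}{\ct}$ (nor $\chr{P} \cp \enco{\Gamma}{\ct}$); that implication is exactly the defining condition of membership in $\lcp$, not a consequence of session typability. The lemma is only ever invoked in the proof of Theorem~\ref{thm:L0-L2}, where the hole of the catalyzer is filled with a term of the form $\encCP{\cdot}$ whose C-typability is supplied by the induction hypothesis of that theorem. So your reading --- treating C-typability of the hole as the operative premise and $\Gamma \s P$ as merely fixing the shape of $\Gamma$ --- is the intended one and matches how the lemma is used downstream. The side conditions you mention (disjointness of the two cut premises apart from $x$, and freshness of the names bound by successive catalyzer layers) are handled exactly as you say, by $\alpha$-conversion and the fact that each $R \in \chrp{\dual{T}}{x}$ is typed in the singleton context $x{:}\enco{\dual{T}}{\ct}$.
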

%
%\begin{proof}
%Follows immediately by Definition~\ref{def:catalyser}.
%\end{proof}

%The following is a corollary of Lemma~\ref{lem:typability_catal}.
\begin{corollary}
Let $\Gamma\s P$. Then $\catal{\Gamma}{P}\cp\emptyset$.
\end{corollary}

%%%%%%%%%%%%%%%%%%%%%%%%%%%%%
% REWRITING
%%%%%%%%%%%%%%%%%%%%%%%%%%%%%
\subsection{Rewriting \lkoba in \lcp}
We start this section with some notations.
First,
in order to represent  pseudo-non deterministic binary choices between two equally typed processes,
we introduce the following:

\begin{notation}\label{con:fakepar}
Let $P_1$, $P_2$ be two processes such that $k \not\in\fn{P_1,P_2}$.
We write $P_1 \fakepar{k} P_2$ to stand for the process
$
\res{k}(\selection k\inx.\nil \pp {k\triangleright\{\inl :P_1, \inr: P_2 \}})
$,
where label $\inx$ stands for either $\inl$ or $\inr$.
\end{notation}
Clearly, 
since session execution is purely deterministic, 
notation $P_1 \fakepar{k} P_2$  denotes  that
either $P_1$ or $P_2$ will be executed (and that the actual deterministic choice is not relevant).
It is worth adding that Caires has already developed the technical machinery required to include non deterministic behavior 
into the linear logic interpretation of session types; see~\cite{CairesCFest14}.
%Integrating such non-deterministic behavior into  
Casting our rewriting procedure into the typed framework of~\cite{CairesCFest14},
so as to consider actual non deterministic choices,
  is interesting future work.

%For syntactic convenience, we 
We find it convenient to 
annotate bound names in processes with session types, and write $\res {xy:T}P$
and $x\inp {y:T}.P$, for some session type $T$.
When the reduction relation involves a left or right choice in a binary labelled choice, 
as in reductions due to pseudo-non deterministic choices (Notation~\ref{con:fakepar}),
we sometimes annotate the reduction as $\to^\inl$ or $\to^\inr$.
We let $\mathtt C$ denote a \emph{process context}, i.e., a process with a hole.
And finally, for a typing context $\Gamma$,
we shall write $\chrp{\Gamma}{}$
to denote the process
$
\prod_{(w_i:T_i)\in \Gamma}\,\chrp{T_i}{w_i}
$.
%\begin{definition}
%Let $\Gamma$ be a typing environment. We shall write $\chrp{\Gamma}{}$
%to denote the process
%$
%\prod_{(w_i:T_i)\in \Gamma}\,\chrp{T_i}{w_i}
%$.
%\end{definition} 
%%
We are now ready to give the rewriting procedure from \lkoba to \lcp.
\begin{definition}[Rewriting $\lkoba$ into $\lcp$]\label{def:typed_enc}
Let $P \in \lkoba$
%By definition of \lkoba, there exists a session typing context $\Gamma$ 
such that $\Gamma\s P$, for some $\Gamma$.
The encoding $\encoCP{\Gamma\s P}{}$ is a process of $\lcp$ inductively defined as follows:
\begin{align*}
\encCP{{x:\nilT} \s \nil}					& \defeq  \nil
&& \\[1mm]
\encCP{{\Gamma} \s\dual{x}\out v.P'}		& \defeq   \dual{x}(z). \big(\linkr{v}{z}\para \encCP{{{\Gamma'}, x:{S}}\s P'} \big)
&&\Gamma = \Gamma', x:\oc T.S, v:T\\[1mm]
\encCP{{\Gamma}\s x\inp {y:T}.P'}			& \defeq  x(y).\encCP{{\Gamma',x:S,y:T}\s P'}										&&\Gamma = \Gamma', x:\wn T.S	\\[1mm]
\encCP{{\Gamma} \s\selection x{l_j}.P'}		& \defeq  \selection x{l_j}.{\encCP{{\Gamma',x:S_j}\s P'}}										&&\Gamma = \Gamma', x:\select lS\\[1mm]  			
\encCP{{\Gamma}\s \branching xlP}		&  \defeq  \parbranching x{l_i}{\encCP{{\Gamma',x:S_i}\s P_i}}
&&\Gamma = \Gamma' ,x:\branch lS \\[1mm]
\encCP{{\Gamma}\s \res {\wt{xy}:\wt{S}}(P\para Q)}& \defeq  \chrp{\Gamma_2}{} \para \catal{\wt z:\wt S}{\encCP{{\Gamma_1, {\wt x{:}\wt S}}\s P}\substj{\wt z}{\wt x}} % \big)
&& \Gamma= \Gamma_1\circ\Gamma_2\ \wedge \Gamma_1, \wt {x}:\wt {S} \s P \\[1mm]									
%\res k\big(\chrp{\Gamma_2}{} \para \catal{\wt z:\wt S}{\encCP{{\Gamma_1, {\wt x{:}\wt S}}\s P}\substj{\wt z}{\wt x}}\big)
%									\fakepar{k}
& \quad ~~ \fakepar{k} %\big(
\chrp{\Gamma_1}{} \para \catal{\wt z:\wt {V}}{\encCP{{\Gamma_2,{\wt y{:}\wt {V}}}\s Q}\substj{\wt z}{\wt y}} 
&&  \Gamma_2, \wt {y}:\wt {V}\s Q \wedge V_i = \dual{S_i}
\end{align*}
\end{definition}
%%%%%%%
We illustrate the procedure 
%in \S\,\ref{appex:rewrite}.
in~\cite{DardhaPerez15}.
Notice that the rewriting procedure given in Definition~\ref{def:typed_enc}
satisfies the compositionality criteria given in~\cite{Gorla10}. In particular, it is easy to see that 
the rewriting of a composition of terms is defined in terms of the rewriting  of the constituent subterms.
Indeed, e.g.,  
$\encCP{\Gamma_1 \circ \Gamma_2 \s \res {xy:S}(P\para Q)}$ depends on a context including
both $\encCP{\Gamma_1, x:S   \s P}$
and 
$\encCP{\Gamma_2, y:\dual{S}   \s Q}$.

We present two important results about our rewriting procedure.
First, we show it is type %preserving---see \S\,\ref{appthm:L0-L2} 
preserving:
%---see~\cite{DardhaPerez15}for a proof.
% (Theorem \ref{thm:L0-L2}) and satisfies operational correspondence (Theorem \ref{thm:oc}).

\begin{theorem}[Rewriting is Type Preserving]
\label{thm:L0-L2}
Let $(\Gamma\s P) \in \lkoba$. Then, $\encCP{{\Gamma}\s P}\cp \enco{\Gamma}{\ct}$.
\end{theorem}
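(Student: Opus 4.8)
The plan is to prove the statement by induction on the structure of the derivation of $\Gamma \s P$ (equivalently, on the structure of $P$), following the case analysis that defines $\encoCP{\cdot}{}$ in Definition~\ref{def:typed_enc}. For each syntactic form of $P$ we must check that the defining clause produces a process that is typable in the linear logic system of \S\,\ref{ss:lf} under context $\enco{\Gamma}{\ct}$. I would first record the relevant inversion lemma for the session typing judgement: if $\Gamma \s P$ then $\Gamma$ and the shape of $P$ are related as dictated by the rules of Figure~\ref{fig:sess_typing} (e.g.\ $\Gamma \s \dual{x}\out v.P'$ forces $\Gamma = \Gamma', x{:}\oc T.S, v{:}T$ with $\Gamma', x{:}S \s P'$, up to the splitting operator $\circ$), so that the side conditions annotating each clause of Definition~\ref{def:typed_enc} are exactly the ones furnished by inversion.

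The base case is $P = \nil$ with $\Gamma = x{:}\nilT$: here $\encoCP{\cdot}{} = \nil$ and we need $\nil \cp x{:}\bullet$, which is the axiom for the unit type together with $\enco{\nilT}{\ct} = \bullet$. For the input case $x\inp{y{:}T}.P'$, the clause returns $x(y).\encCP{\Gamma', x{:}S, y{:}T \s P'}$; by the induction hypothesis the body is typed under $\enco{\Gamma'}{\ct}, x{:}\enco{S}{\ct}, y{:}\enco{T}{\ct}$, and applying the $\parl$-introduction (input) rule rebuilds $x{:}\enco{T}{\ct}\parl\enco{S}{\ct} = \enco{\wn T.S}{\ct}$, as required. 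The selection and branching cases are analogous, using the $\oplus$ and $\&$ rules respectively (the branching clause $\parbranching{x}{l_i}{\cdots}$ unfolds via Notation~\ref{con:fakepar} to a genuine $\&$-typed process whose branches are typed by the induction hypothesis). The output case $\dual{x}\out v.P'$ is slightly more involved: the clause returns $\dual{x}(z).(\linkr{v}{z}\para \encCP{\Gamma', x{:}S \s P'})$; here I would type the forwarder $\linkr{v}{z}$ with $v{:}\dual{\enco{T}{\ct}}, z{:}\enco{T}{\ct}$ via the identity axiom, type the continuation under $\enco{\Gamma'}{\ct}, x{:}\enco{S}{\ct}$ by the induction hypothesis, combine them with $\Did{T-$\otimes$}$ (bound output) to obtain $x{:}\enco{T}{\ct}\otimes\enco{S}{\ct} = \enco{\oc T.S}{\ct}$ — using $\dual{T}$ on the payload matches $\enco{\oc T.S}{\ct} = \enco{\dual T}{\ct}\otimes\enco{S}{\ct}$ from Figure~\ref{f:enctypes} — and the result is typed under $\enco{\Gamma'}{\ct}, v{:}\enco{T}{\ct}, x{:}\enco{\oc T.S}{\ct} = \enco{\Gamma}{\ct}$.

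The genuinely delicate case is the restricted parallel composition $\res{\wt{xy}{:}\wt S}(P \para Q)$, which is where the degree of sharing actually matters and where the rewriting departs from a plain homomorphism. Here the clause produces $\chrp{\Gamma_2}{} \para \catal{\wt z{:}\wt S}{\encCP{\Gamma_1, \wt x{:}\wt S \s P}\substj{\wt z}{\wt x}}$, combined by $\fakepar{k}$ with the symmetric process. I would type each alternative of the $\fakepar{k}$ separately and then observe they have the same type so that the $\&/\oplus$ pair of Notation~\ref{con:fakepar} applies. For one alternative: by the induction hypothesis $\encCP{\Gamma_1, \wt x{:}\wt S \s P} \cp \enco{\Gamma_1}{\ct}, \wt x{:}\enco{\wt S}{\ct}$, and renaming gives it under $\enco{\Gamma_1}{\ct}, \wt z{:}\enco{\wt S}{\ct}$; Lemma~\ref{lem:typability_catal} (catalyzers preserve typability) then shows $\catal{\wt z{:}\wt S}{\cdots} \cp \enco{\Gamma_1}{\ct}$, since the catalyzer consumes exactly the $\wt z{:}\enco{\wt S}{\ct}$ assumptions; meanwhile Lemma~\ref{lem: wt_characteristic} gives $\chrp{\Gamma_2}{} \cp \enco{\Gamma_2}{\ct}$. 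Composing the two with $\Did{T-$\mix$}$ (they share no sessions) yields the alternative typed under $\enco{\Gamma_1}{\ct}, \enco{\Gamma_2}{\ct}$; since $\Gamma = \Gamma_1 \circ \Gamma_2$ and $\enco{\cdot}{\ct}$ commutes with context splitting, this is $\enco{\Gamma}{\ct}$. The main obstacle — and the point I would treat most carefully — is checking that the catalyzer for the \emph{vector} $\wt z{:}\wt S$ composes correctly: one must peel off the restrictions one at a time (Definition~\ref{def:catalyser}), at each stage pairing a characteristic process $P_i \in \chrp{\dual{S_i}}{z_i}$, typed $z_i{:}\enco{\dual{S_i}}{\ct} = \dual{\enco{S_i}{\ct}}$ by Lemma~\ref{lem: wt_characteristic} and the duality clause of Figure~\ref{f:enctypes}, against the corresponding $z_i{:}\enco{S_i}{\ct}$ in the body, and closing with rule $\Did{T-$\cut$}$ (composition plus hiding). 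Getting the bookkeeping of which names are bound by catalyzers versus which remain free (and hence appear in the final context) exactly right is the crux; everything else is a routine matching of the encoding clauses against the linear logic typing rules.
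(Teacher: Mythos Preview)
The paper does not include a proof of this theorem in the main text (it is deferred to the companion technical report~\cite{DardhaPerez15}), so there is no in-paper argument to compare against directly. That said, your approach---induction on the structure of $P$ following the clauses of Definition~\ref{def:typed_enc}, appealing to Lemma~\ref{lem: wt_characteristic} and (the evident C-typed variant of) Lemma~\ref{lem:typability_catal} in the restricted-parallel case, and closing the $\fakepar{k}$ by typing both alternatives identically---is correct and is exactly the argument the paper's supporting lemmas are set up to enable.

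Two small corrections worth making when you write it out. First, the branching clause is typed directly by the $\&$ rule with the induction hypothesis on each branch; Notation~\ref{con:fakepar} is only relevant to the $\fakepar{k}$ construct in the restricted-parallel case, not to ordinary branching. Second, in the output case the forwarder should be oriented as $\linkr{v}{z} \cp v{:}\enco{T}{\ct},\, z{:}\enco{\dual T}{\ct}$, so that after applying \Did{T-$\otimes$} on $z$ you obtain $x{:}\enco{\dual T}{\ct}\otimes\enco{S}{\ct} = \enco{\oc T.S}{\ct}$ while $v{:}\enco{T}{\ct}$ remains in the context to match $\Gamma$; you had the roles of $v$ and $z$ reversed. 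Neither point affects the structure of the proof.
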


%\begin{proof}
%See \S\,\ref{appthm:L0-L2}.
%\end{proof}
\noindent
Notice that the inverse of the previous theorem is trivial by following the definition of typed encoding.
{Theorem~\ref{thm:L0-L2} is meaningful, for it says that the type interface of a process (i.e., the set of sessions implemented in it) is not modified by the rewriting procedure. 
That is, the procedure modifies the process structure by closely following the 
causality relations described by (session) types.
Notice that 
causality relations present in processes, but not described at the level of types,
may be removed.

The rewriting procedure also satisfies an operational correspondence result. 
%\noindent
%We now state the operational correspondence theorem, whose proof is 
%%in \S\,\ref{appthm:oc}.
%in~\cite{DardhaPerez15}.
%% We require some 
%% auxiliary definitions.
Let us write $\Gamma \s P_1, P_2$ whenever both $\Gamma \s P_1$ and
$\Gamma \s P_2$ hold. We have the following auxiliary definition:
}

%%%%
%%%%
%

%%%%%%%%%%%%%%%%%%%%%%%%%%%%%
% OP CORRESPONDENCE
%%%%%%%%%%%%%%%%%%%%%%%%%%%%%

%\smallskip

\begin{definition}
Let $P, P'$ be such that $\Gamma\s P,P'$.
Then, we write $P\uptok P'$ if and only if
$P= \mathtt C[Q]$ and $P' = \mathtt C[Q']$, for some context $\mathtt C$, and there is $\Gamma'$ such that
$\Gamma'\s Q, Q'$.
\end{definition}

\begin{theorem}[Operational Correspondence]\label{thm:oc}
Let $P \in \lkoba$ %. By definition of \lkoba, there exists a session typing context $\Gamma$ 
such that $\Gamma\s P$ for some $\Gamma$. Then we have:
%The following hold.
\begin{enumerate}[I)]
\item
If $P\to P'$ then there exist $Q$, $Q'$ s.t.
(i) $\encCP{{\Gamma}\s P} \to^{\inx}\mto\equiv Q$;
(ii) $Q\uptok  Q'$; 
(iii) $\encCP{{\Gamma}\s P'}\to^{\inx} Q'$.
\item
If $\encCP{{\Gamma}\s P}  \to^{\inx}\mto\equiv Q$ then there exists $P'$ s.t.
$P\to P'$ and $Q\uptok  \encCP{\Gamma \s P'}$.
%$P\weakto P'$ and $Q\equiv \encCP{{\Gamma'}\s P'}$ and $\Gamma$ $\Gamma'$...
%
\end{enumerate}
%where $\uptok $ indicates structural congruence up-to renaming due to the use of catalysers.
\end{theorem}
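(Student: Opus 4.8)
The plan is to prove the two directions by induction on the structure of the typing derivation $\Gamma\s P$, following the clauses of Definition~\ref{def:typed_enc}, with the interesting work concentrated at the clause for $\res{\wt{xy}:\wt S}(P\para Q)$. First I would observe that the only reductions $P\to P'$ available in the session calculus are \Did{R-Com} and \Did{R-Case} (modulo \Did{R-Par}, \Did{R-Res}, \Did{R-Str}), and that both always occur under an outermost restriction; hence any reducible $P$ is structurally congruent to some $\res{\wt{xy}:\wt S}(P_1\para P_2)$ where $P_1$ and $P_2$ provide a redex pair. By Theorem~\ref{thm:subj_red} the residual $P'$ is still typable under $\Gamma$, so $\encCP{\Gamma\s P'}$ is well-defined and, by Theorem~\ref{thm:L0-L2}, typed by $\enco{\Gamma}{\ct}$; this keeps all the encoded processes in the scope of the progress/type-preservation results for \lcp.

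For direction (I): given $P\to P'$, I would locate the redex inside the parallel component that the encoding $\encCP{\cdot}$ \emph{keeps} (say $P$ in the clause, the branch selected by $\fakepar{k}$). The encoding first fires the pseudo-non-deterministic choice $\fakepar{k}$ — this is the $\to^{\inx}$ step, which commits to one of the two symmetric branches, explaining why the statement records a labelled reduction $\to^{\inx}$ on both sides. After that commitment, the encoded redex (a translated output/input pair, or a translated selection/branching) reduces; for output one must also account for the extra administrative reduction introduced by the forwarder $\linkr{v}{z}$ coming from Definition~\ref{d:trans}/the $\dual x\out v$ clause, which is absorbed by the $\mto$ (via \Did{R-Fwd}). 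The residual is then structurally congruent to some $Q$, and the discrepancy between $Q$ and $\encCP{\Gamma\s P'}$ is exactly the catalyzer/characteristic-process context wrapped around the unchanged sessions that were not touched by the redex: these are well-typed (Lemma~\ref{lem: wt_characteristic}, Lemma~\ref{lem:typability_catal}) and equi-typed on both sides, so $Q\uptok Q'$ holds by taking $\mathtt C$ to be that context and $\Gamma'$ the session-typing context of the catalyzed fragment. Finally $\encCP{\Gamma\s P'}\to^{\inx}Q'$ is the corresponding $\fakepar{k}$ commitment on the $P'$ side. Direction (II) is the converse bookkeeping: any $\to^{\inx}\mto$ sequence out of $\encCP{\Gamma\s P}$ must begin with a $\fakepar{k}$ commitment (the only immediately-enabled reduction at top level, since the kept component is otherwise guarded exactly as $P$ is), and thereafter the enabled reductions are in bijection with the redexes of $P$ by the clause-by-clause shape of the encoding; so one reads back a matching $P\to P'$ and concludes $Q\uptok\encCP{\Gamma\s P'}$ by the same context argument.

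The main obstacle I expect is \emph{not} any single reduction step but the careful handling of the $\uptok$ relation together with the $\fakepar{k}$ choices: because $\encCP{\cdot}$ replaces a whole parallel component by characteristic-process representatives, the encoded terms $Q$ and $\encCP{\Gamma\s P'}$ are not equal but only agree up to replacing one well-typed block by another of the same type, and one must check that $\uptok$ is preserved across the administrative $\mto$ (forwarder eliminations, structural rearrangements of the catalyzer context) and is compatible with the labelled $\to^{\inx}$ bookkeeping on both sides. A secondary subtlety is matching the number of reduction steps: a single session reduction may correspond to a commit step plus several $\pi$-level steps (choice, communication, forwarder), so the statement is deliberately phrased with $\to^{\inx}\mto\equiv$ rather than a lock-step correspondence, and the induction must be set up to tolerate this slack. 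I would assume Theorems~\ref{thm:subj_red}, \ref{thm:L0-L2} and Lemmas~\ref{lem: wt_characteristic}, \ref{lem:typability_catal} throughout, and refer to \cite{DardhaPerez15} for the fully expanded case analysis.
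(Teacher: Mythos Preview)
The paper does not contain a proof of Theorem~\ref{thm:oc}; it is stated and then immediately deferred to the companion report~\cite{DardhaPerez15}, so there is no in-paper argument to compare against at the step-by-step level. That said, your plan is the natural one given the shape of Definition~\ref{def:typed_enc} and the auxiliary machinery the paper sets up (Lemmas~\ref{lem: wt_characteristic} and~\ref{lem:typability_catal}, the $\uptok$ relation, and the labelled $\to^{\inx}$ reductions from Notation~\ref{con:fakepar}): induction on the typing derivation, with the real work at the $\res{\wt{xy}:\wt S}(P\para Q)$ clause, a commit step for $\fakepar{k}$ supplying the explicit $\to^{\inx}$, and $\uptok$ absorbing the mismatch between the reduced catalyzer and the fresh catalyzer produced by $\encCP{\Gamma\s P'}$.

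Two points of phrasing in your sketch deserve tightening before you expand it. First, you write that you ``locate the redex inside the parallel component that the encoding keeps''; but the session redex in $P$ lives \emph{between} the two components on a shared endpoint pair, and after the $\fakepar{k}$ commit the partner of the kept component is the catalyzer's characteristic process, not (the encoding of) the discarded component. You clearly understand this further down, but the argument should be stated that way from the start, since it is exactly why $\uptok$ (rather than equality) is needed: the residual of the characteristic process after the interaction is merely some inhabitant of $\chrp{S'}{z}$, whereas $\encCP{\Gamma\s P'}$ wraps the continuation with a \emph{fresh} catalyzer built from a (possibly different) inhabitant of the same $\chrp{S'}{z}$. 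Second, your claim for direction~(II) that reductions of $\encCP{\Gamma\s P}$ after the commit are ``in bijection with the redexes of $P$'' is too strong as stated: the catalyzer side may expose prefixes on the restricted names that are ready to interact with $\encCP{P_1}$ even when the corresponding co-prefix in the original $P_2$ was \emph{not} at top level (this is precisely how the rewriting makes a $\Koba{n}$ process ``more parallel''). The right claim is only that any such encoded reduction can be reflected up to $\uptok$, not that it corresponds to an actual redex of $P$ at the same position; the $\uptok$ slack is doing essential work here, not just administrative cleanup.
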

%%
%\begin{proof}
%See \S\,\ref{appthm:oc}.
%\end{proof}

%

%%%%%%%%%%%%%%%%%%%%%%%%%%%%%
% DISCUSS
%%%%%%%%%%%%%%%%%%%%%%%%%%%%%

\section{Concluding Remarks}\label{s:concl}

%\paragraph{Related Work.}
%To the best of our knowledge, this is the first work in addressing a formal comparison between
%fundamentally distinct behavioral type systems for deadlock freedom in session communications.
%Previous comparisons, such as the one described in~\cite{CDM14}, are informal and based on 
%representative ``corner cases'', i.e., examples of deadlock-free session processes typable on
%one system but not on another. 
%
%\jpedit{Anything else to add? Not much imagination, I am afraid.}

We have presented a formal comparison of fundamentally distinct type systems for deadlock-free, 
session typed processes. To the best of our knowledge, ours is the first work to establish precise relationships of this kind. 
Indeed, prior comparisons 
between type systems for deadlock freedom
are informal, given in terms of representative 
examples typable in one type system but not in some other. 

An immediate difficulty in giving a unified account of different typed frameworks for deadlock freedom is the
variety of 
process languages, type structures, and typing rules that define each framework. 
Indeed, our comparisons involve: 
the framework of session processes put forward by Vasconcelos~\cite{V12}; 
the interpretation of linear logic propositions as session types by Caires~\cite{CairesCFest14};
the $\pi$-calculus with usage types defined by Kobayashi in~\cite{K02}.
Finding some common ground for comparing these three frameworks is not trivial---several translations/transformations
were required in our developments to account for numerous syntactic differences.
We made an effort to follow the exact definitions in each framework.
Overall, we believe that we managed to 
concentrate on essential semantic features of two 
salient classes of deadlock-free session processes, noted \lcp and \fullKoba.

Our main contribution is identifying the \emph{degree of sharing} as
a subtle, important issue that underlies both session typing and deadlock freedom.
We propose a simple 
%definition 
characterization
of the degree of sharing: in essence,
it arises via an explicit premise for the typing rule for parallel composition in the type system in~\cite{K02}.
The degree of sharing is shown to effectively induce a strict hierarchy of deadlock-free session processes in \fullKoba, as
resulting from the approach of~\cite{DGS12}.
We showed that the most elementary (and non trivial) member of this hierarchy precisely corresponds 
to \lcp--arguably the most canonical class of session typed processes known to date. 
Furthermore, by exhibiting an intuitive rewriting procedure of processes in $\fullKoba$ into
processes in $\lcp$, we demonstrated that the degree of sharing is a
subtle criteria for distinguishing deadlock-free processes.
As such, even if our technical developments are  technically simple, in our view they 
substantially clarify our  understanding of 
type systems for  
liveness properties (such as deadlock freedom) 
in the context of $\pi$-calculus processes.

As future work, we would like to obtain \emph{semantic characterizations} of the degree of sharing, in the form of, e.g., 
  preorders on typed processes that distinguish when one process ``is more parallel'' than another.
%Previous work in this line includes that of Corradini et al.~\cite{Corradini1998} who consider basic (untyped) CCS processes.
%process calculi without communication nor restriction.
We plan also to extend our formal relationships to cover typing disciplines with \emph{infinite behavior}.
%In session types, infinite behavior is specified via recursive processes and types. 
We notice that the approach of~\cite{DGS12} extends to recursive behavior~\cite{DBLP:journals/corr/Dardha14} and that 
infinite (yet non divergent) behavior has been incorporated 
into logic-based session types~\cite{DBLP:conf/tgc/ToninhoCP14}.
%In the light of recent (non behavioral) type systems that admit expressive recursive structures (cf.~\cite{P14,DBLP:conf/concur/GiachinoKL14}), 
%one interesting issue is to assess whether the expressiveness of such recent systems entails also additional expressiveness for (recursive) session processes, following the indirect approach of~\cite{DGS12,DBLP:journals/corr/Dardha14}.
Finally, we plan to explore whether the 
rewriting procedure given in \S\,\ref{s:enco} could be adapted into a  \emph{deadlock resolution} procedure.
%that replaces sequential portions of a deadlocked process  with (more parallel) characteristic processes.
%Since the rewriting introduces more parallelism, while still preserving typability, one can think of using such procedure toresolve a potential deadlock in a process.
%
\paragraph{Acknowledgements.}
%The authors 
We are grateful to Lu\'{i}s Caires, Simon J. Gay, and the anonymous reviewers for their 
valuable 
comments 
%on previous versions of this paper. and 
and suggestions.
This work was partially supported by the EU COST Action IC1201 ({Behavioural Types for Reliable Large-Scale Software Systems}).
 Dardha is supported by the UK EPSRC project
EP/K034413/1
 ({From Data Types to Session Types: A Basis  for Concurrency and Distribution}).
 P\'{e}rez is  also affiliated to NOVA  Laboratory for Computer Science and Informatics, Universidade Nova de Lisboa, Portugal.
 
\bibliographystyle{eptcs}
\bibliography{my_biblio}

%\newpage
%\appendix 
%\input{appendix}

\end{document}